\theoremstyle{plain}
\newtheorem{proposition}{Proposition}
\newtheorem{lemma}{Lemma}
\newtheorem{theorem}{Theorem}
\newtheorem{observation}{Observation}
\newtheorem{corollary}{Corollary}
\newtheorem*{main}{Theorem}
\newtheorem{definition}{Definition}
\font\SYM=msbm10
\newcommand{\Real}{\mbox{\SYM R}}
\newcommand{\Complex}{\mbox{\SYM C}}
\newcommand{\Sphere}{\mbox{\SYM S}}
\font\tenscr=rsfs10 scaled1100
\font\sevenscr=rsfs7 
\font\fivescr=rsfs5 
\def\pb{p_\bullet}
\newcommand{\TT}[3]{T_{#1 \phantom{#2} #3}^{\phantom{#1} #2}}
\newcommand{\updn}[3]{#1^{#2}_{\phantom{#2}#3}}
\newcommand{\dnup}[3]{#1_{#2}^{\phantom{#2}#3}}
\begin{document}


\title{\textbf{Asymptotic simplicity and static data}}

\author{{\Large Juan Antonio Valiente Kroon} \thanks{E-mail address:
 {\tt j.a.valiente-kroon@qmul.ac.uk}} \\
School of Mathematical Sciences, \\ Queen Mary, University of London,\\
Mile End Road, London E1 4NS, \\ United Kingdom.}

\maketitle

\begin{abstract}
The present article considers time symmetric initial data sets for the
vacuum Einstein field equations which in a neighbourhood of infinity
have the same massless part as that of some static initial data set. It is shown
that the solutions to the regular finite initial value problem at
spatial infinity for this class of initial data sets extend smoothly
through the critical sets where null infinity touches spatial infinity
if and only if the initial data sets coincide with static data in a
neighbourhood of infinity. This result highlights  the
special role played by static data among the class of initial data
sets for the Einstein field equations whose development gives rise to
a spacetime with a smooth conformal compactification at null infinity.
\end{abstract}

Keywords: General Relativity, asymptotic structure, spatial infinity


\section{Introduction}
The analysis of the structure of spatial infinity using the
\emph{Einstein conformal field equations} and the construction known
as the \emph{cylinder at spatial infinity} ---see
\cite{Fri98a,Fri04,Val04a,Val04d,Val04e,Val05a,Val10}---suggests that
static initial data sets play a special role among the class of initial
data sets for the vacuum Einstein field equations whose development
has a smooth conformal compactification at null infinity. This issue
is of fundamental importance in the construction of so-called
\emph{asymptotically simple spacetimes}. In connection with this
expectation, the following theorem has been proved in \cite{Val10}:

\begin{theorem}
\label{Theorem:ConformallyFlat}
Consider a time symmetric initial data set for the Einstein vacuum
field equations which is conformally flat near infinity. The solutions
to the regular finite initial value problem at spatial infinity is
smooth through the critical sets where null infinity touches spatial
infinity if and only if the data is exactly Schwarzschildean in a
neighbourhood of infinity.
\end{theorem}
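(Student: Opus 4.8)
\medskip
\noindent\emph{Proof strategy.} The plan is to work within Friedrich's \emph{cylinder at spatial infinity} representation. Compactify the initial hypersurface by adjoining the point $i$ at infinity, pass to a conformal metric $h=\Omega^{2}\tilde h$ with $\Omega$ normalised in the standard way at $i$ (so that $\Omega$ and its gradient vanish at $i$ while its Hessian there is a fixed multiple of $h(i)$), and blow up $i$ to a two-sphere, introducing coordinates $(\tau,\rho,\text{angles})$ in which $\rho\geq0$ measures $h$-distance from $i$ and $\tau$ ranges over $[-1,1]$. In the associated F-gauge the Einstein conformal field equations form a symmetric hyperbolic system for the frame, connection coefficients, rescaled Schouten tensor and rescaled Weyl tensor; on the cylinder $\{\rho=0\}$ this system reduces to transport equations along $\tau$, null infinity is attached along $\tau=\pm1$, and the critical sets are $I^{\pm}=\{\rho=0,\ \tau=\pm1\}$. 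Expanding every conformal field as $\sum_{p\geq0}\tfrac{1}{p!}\,v_{p}(\tau,\text{angles})\,\rho^{p}$ one obtains, at each order $p$, a \emph{linear} ODE system in $\tau$ for $v_{p}$, sourced by $v_{0},\dots,v_{p-1}$, with data at $\tau=0$ read off from the Taylor expansion of $(h,\Omega)$ at $i$; these systems are regular on $(-1,1)$ and have regular singular points at $\tau=\pm1$. Since smoothness of the solution through $I^{\pm}$ forces every $v_{p}$ to be smooth at $\tau=\pm1$, this is the condition to be exploited.

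For the ``if'' direction, Schwarzschildean data near infinity is completely explicit in this gauge: $h$ is flat and $\Omega^{-1/2}=|x|^{-1}+\tfrac{m}{2}$ in conformally Cartesian coordinates centred at $i$, so all the $\tau=0$ data entering the transport equations are known. One then verifies that at every order $p$ the solution $v_{p}(\tau)$ is a polynomial in $\tau$, so that no logarithmic terms form at $\tau=\pm1$ and the solution extends smoothly through $I^{\pm}$; this is the known behaviour of the Schwarzschild conformal structure at spatial infinity.

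The substantive (``only if'') direction goes as follows. Because $h$ is flat near $i$ its Cotton/Bach curvature vanishes identically there, so Friedrich's classical regularity conditions --- the vanishing at $i$ of all totally symmetrised trace-free derivatives of the Cotton tensor --- hold trivially and yield no obstruction; one must therefore extract the \emph{subdominant} logarithmic terms of $v_{p}(\tau)$ at the critical sets. The free data reduce to the mass $m$ together with the symmetric trace-free multipole moments $W_{a_{1}\cdots a_{p}}(i)$ of $W:=\Omega^{-1/2}-|x|^{-1}$, a function which by the Hamiltonian constraint is $h$-harmonic and regular at $i$, with $W(i)=m/2$ and, after using the residual translational gauge that fixes $i$, vanishing dipole. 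Decomposing the transport system into spin-weighted spherical harmonics turns the $\tau$-operator of its ``wave-like'' sector into a Jacobi-type operator whose indicial exponents at $\tau=\pm1$ resonate once $p$ exceeds a fixed threshold; a careful analysis of these resonances shows that the coefficient of the leading logarithm appearing at order $p$ equals, modulo the data of orders strictly below $p$, a \emph{non-zero} numerical multiple of the multipole moment $W_{a_{1}\cdots a_{p}}$ (in the top spherical-harmonic sector). Smoothness kills this coefficient; an induction on $p$ then yields $W_{a_{1}\cdots a_{p}}(i)=0$ for all $p\geq2$. As $W$ is harmonic, hence real-analytic, and all its Taylor coefficients at $i$ of order $\geq2$ vanish, $W$ is affine near $i$, and the translational gauge reduces it to the constant $m/2$; thus $\Omega^{-1/2}=|x|^{-1}+\tfrac{m}{2}$ in a neighbourhood of $i$, which is exactly the time-symmetric Schwarzschild data. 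Together with the ``if'' direction this gives the equivalence.

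The main obstacle is the logarithmic obstruction analysis in the ``only if'' direction: locating the resonant orders, computing the coefficient of the leading logarithm at each order, and --- the delicate point --- showing that this coefficient is genuinely non-degenerate in the corresponding multipole moment and that the obstructions at different orders cannot conspire to cancel, so that their simultaneous vanishing really forces \emph{every} moment of $W$ to vanish. Controlling the combinatorics so that the induction on $p$ closes, and upgrading ``$W$ equals the Schwarzschild constant near $i$'' to ``the initial data set coincides with Schwarzschild data in a full neighbourhood of infinity'', are the remaining points. The conformal flatness hypothesis is exactly what makes the scheme tractable: it removes the much more intricate contributions of the Weyl/null data and isolates the effect of the conformal factor, reducing the obstructions to conditions on the multipole structure of a single harmonic function.
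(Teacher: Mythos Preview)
The theorem you are addressing is not proved in this paper; it is quoted from \cite{Val10}, and the present article extends that argument to data with a static (rather than conformally flat) massless part. That said, Sections \ref{Section:CoreAnalysis}--\ref{Section:Conclusions} here reproduce the inductive mechanism of \cite{Val10} in close detail, so one can compare your outline against it.

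Your overall architecture is correct and matches the paper's: transport equations on the cylinder, spherical-harmonic decomposition, regular-singular analysis at $\tau=\pm1$, and an induction on the order of the first surviving multipole of $W$. The ``if'' direction is handled exactly as you say (Schwarzschild is static, hence Corollary~\ref{StaticCorollary} gives polynomial solutions at every order).

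There is, however, a substantive imprecision in your ``only if'' sketch. You write that the leading logarithm at order $p$ is a non-zero multiple of the multipole $W_{a_1\cdots a_p}$. In the actual mechanism (see Propositions in Section~\ref{Section:CoreAnalysis} and formula~\eqref{phitilde}) the obstruction is \emph{delayed by three orders}: if the first non-trivial moment sits at order $\pb+1$, the solutions at orders $\pb+1,\pb+2,\pb+3$ are still polynomial, and the logarithm first appears at order $\pb+4$, with coefficient proportional to $m^{3}\,\breve{w}_{\pb+1;2(\pb+1),k}$. Two consequences: (i) the induction step is not ``order $p$ kills moment $p$'' but rather ``smoothness at order $\pb+4$ kills the moment at order $\pb+1$'', and you must first verify the three intermediate orders stay polynomial --- this is where most of the work in \cite{Val10} and in Section~\ref{Section:CoreAnalysis} here actually goes; (ii) the non-degeneracy you flag as delicate depends on $m\neq0$, a hypothesis your outline does not isolate. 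Your strategy would close once these two points are incorporated, but as written the induction scheme is mis-indexed and the $m$-dependence of the obstruction coefficient is missing.
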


The context of this theorem is better appreciated if one recalls that
the Schwarzschild spacetime is the only static spacetime with
conformally flat slices ---see \cite{Fri04}. The use of conformally
flat data sets in the analysis of \cite{Val10} is a convenient technical
assumption. Otherwise, the full
complexity of time symmetric initial data sets would make it
impossible to identify useful structures. The analysis in
\cite{Val10} builds upon the original analysis in \cite{Fri98a} and the
computer algebra calculations of \cite{Val04a}, to make generic
assertions about a certain type of asymptotic expansions for the
conformal Einstein field equations made possible by the framework of
the cylinder at spatial infinity.

\medskip
The purpose of the present article is to consider an extension of
Theorem \ref{Theorem:ConformallyFlat} to the case of more general
---non-conformally flat--- time symmetric initial data sets. Again,
explicit computer algebra calculations carried out in \cite{Val04e}
suggest the type of result that one can expect to prove. A
fundamental difficulty in the way of constructing a generalisation to
Theorem \ref{Theorem:ConformallyFlat} is to obtain a parametrisation of time
symmetric initial data sets for which it is simple to decide whether
the data under consideration is static or not. 

\medskip
The properties of time symmetric initial data sets for the vacuum
Einstein field equations, $(\tilde{\mathcal{S}},\tilde{h}_{ij})$, to be analysed in
the present article are best discussed using a point compactification
of the initial hypersurface $\mathcal{S}$ and an associated
conformally rescaled metric $h_{ij}$ ---the \emph{conformal
  metric}. The 3-manifold $\mathcal{S}$ contains singled out points $\{i_1,\,i_2,\dots\}$
representing the points at infinity of the 3-manifold
$\tilde{\mathcal{S}}$. Our analysis will be local to one of these
points, so it will be assumed, without loss of generality, that there
is only one of them. This point will be deonted by $i$. In \cite{Fri88} it has
been shown that static initial data sets satisfy a certain
\emph{regularity condition} involving the Cotton tensor and its higher
order derivatives ---see equation \eqref{RegularityCondition} in the
main text. This property is,
however, not enough to fully assert whether an initial data set is
static ---initial data sets like those of Misner \cite{Mis63} and
Brill-Lindquist \cite{BriLin63} 
satisfy the regularity condition as
they are conformally flat, but clearly they do not, in general, give
rise to static spacetimes. Further conditions need to be imposed on the initial
data to obtain a static development. This gap between initial data sets satisfying the
regularity condition and initial data sets which are exactly static at
spatial infinity is still to be understood\footnote{Recently, there
has been some progress in this direction ---H. Friedrich, parallel
session talk in the GR19 Conference, Mexico.}. 

\medskip
The present article gets around the difficulty exposed in the previous
paragraph by considering a restricted class of initial data sets for
which it is simple to decide whether they are actually static or
not. This class of initial data sets is constructed by looking at
solutions to the equation giving rise to the conformal factor
$\vartheta$ relating the 3-metric $\tilde{h}_{ij}$ and $h_{ij}$, the
\emph{Yamabe equation} ---see equation \eqref{YamabeEquation} in the
main text. In a suitably small neighbourhood of infinity, the
solutions to the Yamabe equation can be split into its \emph{massless}
and \emph{massive} parts. The massless part contains the information
of the local geometry in a neighbourhood of $i$, whereas the massive
part contains information of global nature---in particular the
mass. The class of initial data sets to be used in the present article
takes the solution of the Yamabe equation for static data and adds to
it a further massive term that does not contribute to the mass so to
obtain a new solution to the Yamabe equation ---this can be done
because of the linearity of the setting. This new solution to the
Yamabe equation implies, in turn, a new solution to the constraint
equations in a neighbourhood of infinity with the same conformal
metric $\tilde{h}_{ij}$ as a static initial data set. We say that these
solutions to the constraint equations have a \emph{static
massless part}.  It can be verified that this class of initial data
sets satisfies the regularity condition of \cite{Fri88} ---cfr. also
equation \eqref{RegularityCondition} of the main text. This
observation is of relevance in the present article as it has been
shown in \cite{Fri98a} that this condition is necessary for solutions
to the regular finite initial value problem of the conformal Einstein
field equations to extend smoothly through the critical sets where
null infinity touches spatial infinity.

\medskip
For the class of time symmetric initial data sets for the Einstein
vacuum field equations discussed in the previous paragraph
one can prove the following generalisation of Theorem
\ref{Theorem:ConformallyFlat}:

\begin{main}
Given an initial data set with static massless part, the solution to the
regular finite initial value problem at spatial infinity is smooth
through the critical sets if and only if the data is exactly static in
a neighbourhood of infinity.
\end{main}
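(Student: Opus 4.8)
The plan is to work within the framework of the \emph{cylinder at spatial infinity} and to reduce the question of smoothness through the critical sets to the vanishing, order by order, of a hierarchy of obstruction quantities extracted from the transport equations induced by the conformal Einstein field equations on the generators of the cylinder. Recall first that for an \emph{exactly static} data set these obstructions vanish identically ---this reflects the smoothness of the conformal structure near spatial infinity for static vacuum spacetimes and is consistent with the fact, used in \cite{Fri88,Fri98a}, that static data satisfy the regularity condition. Now, a data set with static massless part differs from its underlying static data set only through an additive \emph{extra massive term} $W'$ in the solution $\vartheta$ of the Yamabe equation: $W'$ is a regular solution at $i$ of the (static) Yamabe equation with $W'(i)=0$, so that the mass is unchanged and the conformal metric is untouched. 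Hence the entire problem reduces to tracking how $W'$ propagates into the obstruction quantities and showing that it does so injectively.

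Concretely, I would proceed as follows. \textbf{(i)} Write out the conformal metric $h_{ij}$ ---which coincides with that of the reference static data set--- together with $\vartheta=\vartheta_{\mathrm{s}}+W'$, expand both in $h$-normal coordinates centred at $i$, and read off the null data (the symmetrised trace-free derivatives of the Cotton tensor of $h$ at $i$, equal to the static null data) and the multipole coefficients $w_p$, $p\geq 1$, of $W'$. \textbf{(ii)} From this compute the initial data on the cylinder for the conformal evolution system, and in particular for the rescaled Weyl spinor; since $W'$ enters $\vartheta$ linearly and the relevant conformal factor depends on $\vartheta$ algebraically, this data is at most quadratic in $W'$, so it may be written as (static data) $+$ (linear in $W'$) $+$ (quadratic in $W'$). \textbf{(iii)} Integrate the $\rho$-transport equations order by order; at each order $p$ the solution is a polynomial in $\tau$ plus, possibly, terms of the form $\log(1\mp\tau)$ times powers of $1\mp\tau$ at the critical sets, and the coefficients of these logarithmic terms define the obstructions $\upsilon_p$. \textbf{(iv)} Decompose $\upsilon_p=\upsilon_p^{(1)}[W']+\upsilon_p^{(2)}[W',W']$, the $W'$-independent piece being the static one and hence zero; from the graded structure of the $\rho$-expansion, the coefficient $w_k$ cannot enter $\upsilon_p$ for $p$ below a definite threshold $p_k$, and at $p=p_k$ the quadratic contribution $\upsilon_{p_k}^{(2)}$ still vanishes, so that $\upsilon_{p_k}$ is linear and homogeneous in $w_k$. \textbf{(v)} Conclude by induction on $k$: if all obstructions vanish but $W'\not\equiv 0$ near $i$, take the smallest $k$ with $w_k\neq 0$; then $\upsilon_{p_k}=c_k\,w_k$ with $c_k\neq 0$, a contradiction. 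Therefore $W'\equiv 0$ near $i$, i.e. the data is exactly static in a neighbourhood of infinity; the converse implication is immediate, being precisely the statement that static data produce no obstructions.

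I expect the main obstacle to be step \textbf{(v)}, specifically the proof that the constant $c_k$ multiplying the lowest genuinely new contribution is nonzero for \emph{every} $k$ ---that is, that the extra massive term really does obstruct smoothness rather than cancelling against the geometry of the static metric $h_{\mathrm{s}}$. In the conformally flat case of Theorem \ref{Theorem:ConformallyFlat} this is done through an explicit identification of the obstructions with multipole-type quantities; in the present non-conformally-flat setting the transport equations carry additional source terms built from the Cotton tensor of $h_{\mathrm{s}}$. The structural observation that keeps the argument manageable is that those extra sources, on their own, reproduce precisely the smooth, obstruction-free static solution, so that only the \emph{interaction} between the static geometry and $W'$ ---which is linear in $W'$ at leading order--- needs to be controlled; I would expect to establish the nonvanishing of $c_k$ by combining the explicit computer-algebra computations of \cite{Val04e} at low orders with an inductive analysis of the Jacobi-polynomial structure responsible for the logarithmic terms. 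A subsidiary point is to check that no obstruction resides in components of the conformal fields other than those giving rise to the logarithmic terms in the rescaled Weyl spinor; this is controlled by the regularity condition, which holds for the data under consideration by construction.
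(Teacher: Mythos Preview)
Your overall strategy is the paper's strategy: split off the static background, track the deviation $\breve W$ through the transport hierarchy on the cylinder, and run an induction on the lowest nonvanishing multipole of $\breve W$. The part you flag as the main obstacle---showing the obstruction constant $c_k$ is nonzero in the presence of the nontrivial static conformal metric---is indeed where the work lies, but the mechanism the paper uses is sharper than the one you outline.

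The key simplification you do not anticipate is that, for data static up to order $\pb$, the sector $\mathfrak{S}_{\pb+1}$ of the deviation transport equations at orders $\pb+1$, $\pb+2$, $\pb+3$, and in the bulk of $\pb+4$, is \emph{formally identical} to the corresponding equations for conformally flat data which are Schwarzschildean up to order $\pb$. This happens because, in the cn-gauge, the low-order static coefficients entering those equations sit only in the sector $\mathfrak{S}_0$ and are therefore purely Schwarzschildean; the nontrivial static multipoles first couple to $\breve W$ at order $\pb+4$, and even there only through a residual term coming from a Clebsch--Gordan product $\TT{4}{\cdot}{\cdot}\times\TT{2(\pb+1)}{\cdot}{\cdot}$. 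Consequently almost no new computer algebra is required: the Schwarzschildean obstruction $m^3\breve w_{\pb+1}$ from \cite{Val10} carries over verbatim. The genuinely new piece at order $\pb+4$ produces a second logarithmic contribution proportional to $\mathring w_{2;4}\,\breve w_{\pb+1}$; non-cancellation between the two is established not by an inductive Jacobi-polynomial argument but by a direct degree count---the polynomials multiplying $\ln(1\pm\tau)$ in the two pieces have different degrees in $\tau$ (degree $2\pb+14$ versus $2\pb+10$), so they cannot combine to something smooth. Your expectation that the Cotton sources require substantial fresh analysis, and that \cite{Val04e} would be the relevant input, therefore overestimates the difficulty; the correct reference for the heavy lifting is \cite{Val10}, and the static geometry enters only as a small correction whose logarithm cannot interfere with the Schwarzschildean one.

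A minor point: your threshold $p_k$ is not $k$ or $k+1$ but $k+3$; the solutions remain polynomial at orders $\pb+1,\pb+2,\pb+3$ and the obstruction first appears at $\pb+4$.
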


In other words, the smoothness of the development through the
critical sets forces the extra massive part added to the (background)
static initial data to vanish. A more precise version of this result
will be given in the main text.

\medskip
As in the case of the assumption of conformal flatness made in
\cite{Val10}, the use of initial data sets with a static massless part is a useful
technical assumption which allows to identify relevant structures in
the conformal field equations. It is clear that not all time symmetric
initial data sets admitting an analytic conformal compactification at
infinity have a static massless part. A general version of the main theorem of this
article can only be obtained once one knows what extra conditions have
to be imposed on a generic time symmetric initial data set to have a
static massless part ---cfr. similar remarks in the previous paragraphs. This
task requires learning how to exploit to the maximum extent the
conformal gauge freedom implicit in the conformal Ansatz. As the
conformal metric, $h_{ij}$, encodes all the freely specifiable
information of a time symmetric initial data set, the extra requirements
will have to be in the form of conformally invariant conditions on the
conformal class. 

\medskip
The proof of the main theorem builds upon the analysis of the
conformally flat case discussed in \cite{Val10}. This analysis relied
heavily on the use of computer algebra methods to transform the
transport equations implied by the conformal field equations at the
cylinder at spatial infinity into a system of reduced ordinary
differential equations for which explicit solutions can be computed
for \emph{any} order of the expansion. The approach in the present
article consists in conveniently grouping the various terms appearing
in the transport equations in \emph{Schwarzschildean terms} and
\emph{deviations-from-Schwarzschild terms}. The former are formally
identical to terms appearing in \cite{Val10} and thus, assertions
about their smoothness can be readily given. It turns out that most of
the terms that one needs to consider are Schwarzschildean terms. It is
only in the last step of the argument that deviation terms arise. As
it will be seen, their presence indicates that the extra massive part
that has been added to the static data has to vanish up to a certain
order ---thus, putting into action an inductive argument from which
the main theorem is obtained. Remarkably, essentially all the computer
algebra required for this argument has already been performed in
\cite{Val10}.

\subsection*{Outline of the article}
Section \ref{Section:Data} discusses some general
properties of time symmetric solutions to the Einstein constraint
equations in the conformal setting. It also introduces the class of
time symmetric initial data sets that will be used in our subsequent
analysis ---initial data sets with a static massless part. Section
\ref{Section:CylinderAtSpatialInfinity} gives a concise summary of the
framework of the cylinder at spatial infinity and of the so-called F-gauge. Its
main purpose is to introduce the notation to be used in the rest of
the article. It also provides an overview of the key properties of the
transport equations implied by the conformal Einstein field equations
at spatial infinity. Section \ref{Section:StaticCylinder} briefly
discusses the key result of the construction of the cylinder at
spatial infinity for static spacetimes ---namely, that the structures
are as smooth as they can be. Section
\ref{Section:DataStaticUpToAnOrder} discusses key properties of
initial data sets which are static up to a certain order ---the
results will be used extensively in the sequel. Section
\ref{Section:CoreAnalysis} contains the core of our analysis: a
discussion of the properties of solutions to the transport equations
at the cylinder at spatial infinity for data which is static up to a
certain order. The results presented in this section take the form of
an induction argument which leads, ultimately, to our main theorem in
Section \ref{Section:Conclusions}.

\subsection*{Notation and conventions}

The present article is concerned
with the asymptotic properties of spacetimes
$(\tilde{\mathcal{M}},\tilde{g}_{\mu\nu})$ solving the Einstein vacuum
field equations 
\begin{equation}
\tilde{R}_{\mu\nu}=0.
\label{EinsteinFieldEquations}
\end{equation}
 The metric $\tilde{g}_{\mu\nu}$ will be assumed to have signature
$(+,-,-,-)$ and $\mu, \,\nu,\dots$ are spacetime indices taking the
values $0,\dots,3$. The spacetime
$(\tilde{\mathcal{M}},\tilde{g}_{\mu\nu})$ will be thought of as the
development of a time symmetric initial data set
$(\tilde{\mathcal{S}},\tilde{h}_{ij})$ where $\tilde{\mathcal{S}}$ is
an asymptotically Euclidean hypersurface. The metric $\tilde{h}_{ij}$
will be taken to have signature $(-,-,-)$. The indices $i,\,j,\ldots$
will be spatial ones taking the values $1,\,2,\,3$.  The spinorial
conventions of \cite{PenRin84,PenRin86} will be adopted. The present
article draws heavily from the analysis in \cite{Fri98a,Fri04,Val10}
so we have followed the notation and conventions of these references
as closely as possible.

\section{A class of time symmetric data}
\label{Section:Data}

For time symmetric initial data sets $(\tilde{\mathcal{S}},\tilde{h}_{ij})$ the Einstein
vacuum field equations \eqref{EinsteinFieldEquations} imply the constraint equation
\begin{equation}
\label{HamiltonianConstraint}
\tilde{r}=0, \quad \mbox{on } \tilde{\mathcal{S}},
\end{equation}
where $\tilde{r}$ denotes the Ricci scalar of the metric
$\tilde{h}_{ij}$. 

\medskip
Our analysis will be local to a neighbourhood of infinity. Hence,
without loss of generality, only one asymptotically flat end will be
assumed. The asymptotic flatness of the time symmetric initial data
$(\tilde{\mathcal{S}},\tilde{h}_{ij})$ will be expressed in terms of
conditions on a conformally rescaled manifold. For this, it will be
assumed that there is a 3-dimensional, orientable, smooth compact
manifold $\mathcal{S}$, a metric $h_{ij}$, a point $i\in \mathcal{S}$,
a diffeomorphism $\Phi: \mathcal{S}\setminus\{i\}\rightarrow
\tilde{\mathcal{S}}$ and a function $\Omega \in C^2(\mathcal{S})\cap
C^\infty(\mathcal{S}\setminus\{i\})$ with the properties
\begin{subequations}
\begin{eqnarray}
&& \Omega(i)=0, \quad D_j\Omega(i)=0, \quad
D_jD_k\Omega(i)=-2h_{jk}(i), \label{Asymptotic:1}\\
&& \Omega>0 \mbox{ on } \mathcal{S}\setminus\{i\}, \label{Asymptotic:2}\\
&& h_{ij} = \Omega^2 \Phi_* \tilde{h}_{ij}, \label{Asymptotic:3}
\end{eqnarray}
\end{subequations}
where $D_j$ denotes the Levi-Civita covariant derivative of the
3-metric $h_{ij}$.  For the sake of simplicity the last condition will
be written as $h_{ij} = \Omega^2 \tilde{h}_{ij}$ so that
$\mathcal{S}\setminus\{ i\}$ is identified with
$\tilde{\mathcal{S}}$. If assumptions
\eqref{Asymptotic:1}-\eqref{Asymptotic:3} are satisfied, the pair
$(\tilde{S},\tilde{h}_{ij})$ will be said to be \emph{asymptotically
Euclidean} and regular. Suitable punctured neighbourhoods of the point
$i$ are mapped to the asymptotic end of $\tilde{\mathcal{S}}$.

\subsection{Asymptotically Euclidean and regular data}

The Hamiltonian constraint, equation \eqref{HamiltonianConstraint},
together with the boundary conditions
\eqref{Asymptotic:1}-\eqref{Asymptotic:3} imply on $\mathcal{B}_a(i)$
the \emph{Yamabe Equation}
\begin{equation}
\label{YamabeEquation}
\left( \Delta_h -\frac{1}{8} r\right) \vartheta =-4\pi \delta(i), \quad
  \vartheta \equiv \Omega^{-2},
\end{equation}
where $\delta(i)$ denotes the Dirac delta distribution with support on
$i$ while $\Delta$ and $r$ correspond, respectively, to the Laplacian
and the Ricci scalar of the conformal metric $h_{ij}$.  For later use we define
\begin{equation}
\label{omega}
\omega= \frac{2\Omega}{\sqrt{|D_k \Omega D^k\Omega|}}.
\end{equation}

\medskip
It is well known ---see e.g. \cite{Fri98a,Fri04}--- that if $a$ is
suitably small, then the solutions to the Yamabe equation
\eqref{YamabeEquation} on $\mathcal{B}_a(i)$ are of the form
\[
\vartheta = \frac{U}{|x|} +W, \qquad |x|=(
(x^1)^2+(x^2)^2+(x^3)^2)^{1/2},
\]
for some asymptotically Cartesian coordinates $x^i$. The terms
$U/|x|$ and $W$ will be known, respectively, as the \emph{massless} and
\emph{massive} parts of $\vartheta$. The function $U$, the \emph{Green
  function}, satisfies the equation 
\[
\left( \Delta_h -\frac{1}{8} r\right) \left(\frac{U}{|x|} \right) =-4\pi \delta(i),
\]
and describes the local geometry in $\mathcal{B}_a(i)$. The function
$W$ satisfies the equation 
\[
\left( \Delta_h -\frac{1}{8} r\right)W=0,
\]
annd contains information of global nature. In particular, $W(i)=m/2$,
where $m$ is the ADM mass of the initial data set. 

\medskip
A rescaling 
\[
h_{ij} \mapsto h'_{ij} =\theta^4 h_{ij}, \quad \Omega \mapsto \Omega'=\theta^2 \Omega,
\]
with a smooth positive factor $\theta$ satisfying $\theta(i)=1$ leaves
the physical metric $\tilde{h}_{ij}=\Omega^{-2}h_{ij}$
unchanged. However, it implies the transitions
\[
\vartheta\mapsto \vartheta'=\theta^{-1}\vartheta, \quad U\mapsto U'=\frac{|x'|}{|x|} \theta^{-1}U, \quad W\mapsto W'=\theta^{-1}W,
\]
where $|x'|$ is given in terms of the $h'$-normal coordinates. There
are several possibilities to fix this conformal gauge freedom. For the
purpose of the present analysis it turns out that a good choice is
that of the so-called \emph{conformal normal (cn)-gauge} introduced in
\cite{Fri98a}. In the following definition let $l_{ij}$ denote the Schouten tensor
of the metric $h_{ij}$.

\begin{definition}
The metric $h_{ij}$ is said to be in the cn-gauge if given a solution
$(x(\lambda),b(\lambda))$ to the 3-dimensional conformal geodesic
equations
\begin{eqnarray*}
&& \dot{x}^\nu D_{\nu} \dot{x}^\mu = -2  (b_\nu \dot{x}^\nu)\,
\dot{x}^\mu + (h_{\lambda\rho}\dot{x}^\lambda\dot{x}^\rho)\, h^{\mu\nu} b_\nu, \\
&& \dot{x}^\nu D_\nu b_\mu =( b_\nu \dot{x}^\nu)\, b_\mu -\tfrac{1}{2}
(h^{\lambda\rho}b_\lambda b_\rho)\, h_{\mu\nu}\dot{x}^\nu + l_{\lambda\mu}\dot{x}^\lambda,
\end{eqnarray*}
with initial conditions
\[
x(0)=i, \quad h(\dot{x},\dot{x})(i)=-1, \quad b(0)=0,
\]
one has that
\[
\langle b, \dot{x}\rangle =0.
\]   
\end{definition} 

\noindent
\textbf{Remark.} Let $h_{ij}$ be analytic in a neighbourhood of
$i$. Assuming that $a$ is sufficiently small, there exists on
$\mathcal{B}_a(i)$ a unique analytic rescaling $h_{ij}\mapsto
h'_{ij}=\theta^4h_{ij}$ for which $h'_{ij}$ is in the cn-gauge. The
metric and connection remain unchanged at $i$. As it will be discussed
in the sequel, the practical advantage of the cn-gauge is that it
renders simpler multipolar expansions for various quantities of
interest.

\subsection{Asymptotically static data}
Let $(\mathcal{S}, \mathring{h}_{ij})$ denote a static initial data
set given in the cn-gauge, and let 
\[
\mathring{\vartheta}\equiv \frac{\mathring{U}}{|x|} + \mathring{W},
\]
denote the corresponding solution to the Yamabe equation \eqref{YamabeEquation} in a suitably small neighbourhood $\mathcal{B}_a(i)$. The
static initial data set can be specified entirely in terms of
its multipole moments ---see e.g. \cite{Fri07}.  This important fact
will, however, not be exploited here. As a consequence of the analysis in
\cite{Fri04,Fri07} one has the following:

\begin{proposition}
Let $(\tilde{\mathcal{S}}, \tilde{h}_{ij})$ be a static initial data set
and let $\mathcal{B}_a(i)$ be a suitably small neighbourhood of
$i$. If the conformal metric $\mathring{h}_{ij}$ satisfies the cn-gauge $\mathcal{B}_a(i)$, then there
exist normal coordinates $x^i$ such that $\mathring{h}_{ij}$,
$\mathring{U}$ and $\mathring{W}$ are analytic in the neighbourhood.
\end{proposition}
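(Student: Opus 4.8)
The plan is to establish analyticity by working directly with the elliptic PDEs that govern $\mathring{U}$ and $\mathring{W}$, exploiting the fact that for static data the conformal metric itself is analytic. First I would recall that, by the classical theorem of M\"uller zum Hagen (and the results used in \cite{Fri04}), a static vacuum metric is analytic in harmonic/normal coordinates, so that $\mathring{h}_{ij}$ is real-analytic on a punctured neighbourhood of $i$; the point-compactification together with the static structure then provides analyticity up to and including $i$ in suitable coordinates (this is essentially the content of the cited analysis in \cite{Fri04,Fri07}). Having fixed such analytic coordinates, the cn-gauge rescaling $h_{ij}\mapsto\theta^4 h_{ij}$ can be performed analytically, since by the Remark above the factor $\theta$ solving the conformal-geodesic condition is itself analytic when the starting metric is; hence $\mathring{h}_{ij}$ in the cn-gauge is analytic in $\mathcal{B}_a(i)$, and we may additionally pass to $\mathring{h}$-normal coordinates $x^i$, which are analytic functions of the original coordinates.

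Next I would treat $\mathring{W}$. It satisfies the homogeneous Yamabe equation $(\Delta_h - \tfrac18 r)\mathring{W}=0$ on $\mathcal{B}_a(i)$, a second-order linear elliptic equation whose coefficients are built from $\mathring{h}_{ij}$ and its derivatives and are therefore real-analytic. By elliptic regularity for equations with analytic coefficients (the Cauchy--Kovalevskaya/Morrey--Nirenberg analytic-hypoellipticity theorem), any $C^2$ solution is real-analytic; since $\mathring{W}$ is bounded (indeed $\mathring{W}(i)=m/2$) and solves the equation classically on the punctured ball, it extends analytically across $i$. For $\mathring{U}$ one argues similarly but must first strip off the explicit singular factor: writing $\mathring{\vartheta}=\mathring{U}/|x| + \mathring{W}$ and using that $\mathring{U}/|x|$ is the Green-function part, the function $\mathring{U}$ satisfies an elliptic equation obtained by conjugating $(\Delta_h-\tfrac18 r)$ with multiplication by $|x|$; because the flat Green function $1/|x|$ is annihilated by the flat Laplacian away from the origin, the resulting operator for $\mathring{U}$ has analytic coefficients on the full (unpunctured) ball, with the normalisation $\mathring{U}(i)=1$ (this is where $D_jD_k\Omega(i)=-2h_{jk}(i)$ enters). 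Analytic hypoellipticity then gives analyticity of $\mathring{U}$ on $\mathcal{B}_a(i)$.

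The main obstacle, and the step deserving the most care, is the behaviour precisely at the puncture $i$: one is solving elliptic equations on a punctured ball with a prescribed singular leading term, and one must be sure that the "massless'' and "massive'' split is clean enough that $\mathring{U}$ and $\mathring{W}$ individually (not merely their combination $\mathring{\vartheta}$) are governed by operators with coefficients analytic \emph{through} $i$. This requires checking that in the cn-gauge and in $\mathring{h}$-normal coordinates the expansion of the conformal metric at $i$ is compatible with the ansatz $\mathring{U}/|x|+\mathring{W}$ with analytic $\mathring{U},\mathring{W}$ — in other words that the cn-gauge has been chosen so as not to reintroduce non-analytic (e.g. logarithmic) terms. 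This is exactly the role of the cn-gauge highlighted in the Remark: it is precisely engineered to make these multipole expansions analytic, and the static regularity condition of \cite{Fri88} guarantees the absence of obstructing logarithmic terms. Once these expansions are in place, the removable-singularity argument for elliptic equations with analytic coefficients closes the proof; the remaining bookkeeping (verifying the coefficient functions are analytic, tracking the normalisations at $i$) is routine and I would not grind through it here.
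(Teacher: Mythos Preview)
The paper does not supply its own proof of this proposition; it is stated as a consequence of the analysis in \cite{Fri04,Fri07}, where the analyticity ultimately comes from the structure of the conformal static field equations and from convergence estimates for the multipole (null-data) expansions of static solutions. Your sketch is therefore an independent attempt, and the treatment of $\mathring{h}_{ij}$ (via M\"uller zum Hagen plus the analytic cn-gauge rescaling) and of $\mathring{W}$ (via analytic hypoellipticity for a genuinely elliptic equation with analytic coefficients) is broadly sound, modulo the routine removable-singularity step needed for $\mathring{W}$ at $i$.

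The argument for $\mathring{U}$, however, has a real gap. Conjugating $L=\Delta_h-\tfrac18 r$ by multiplication with $|x|$ does \emph{not} yield an operator with analytic coefficients on the full ball to which Morrey--Nirenberg applies. Writing out $\rho\,L(\mathring{U}/\rho)=0$ in $\mathring{h}$-normal coordinates one obtains
\[
\Delta_h\mathring{U}-\frac{2}{\rho}\,\partial_\rho\mathring{U}+(\text{bounded})\,\mathring{U}=0,
\]
so the first-order coefficient is singular at $i$; multiplying through by $\rho$ instead produces a Fuchsian (degenerate-elliptic) operator, and ordinary analytic hypoellipticity is again unavailable. The standard route to analyticity of $\mathring{U}$ is the Hadamard parametrix construction: $\mathring{U}$ satisfies a first-order transport ODE along radial geodesics with a regular singular point at $\rho=0$ and initial value $\mathring{U}(i)=1$; for an analytic metric this ODE has analytic coefficients and its solution (essentially the square root of the Van Vleck--Morette determinant) is analytic. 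Note also that the absence of logarithms here is automatic in three dimensions (odd-dimensional Hadamard expansion) and has nothing to do with either the cn-gauge or the static regularity condition \eqref{RegularityCondition}; the latter concerns the Cotton tensor and the behaviour of the \emph{spacetime} transport solutions at the critical sets, not the analyticity of the initial-data quantities, so your final paragraph misidentifies the mechanism at work.
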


One also has that ---cfr. \cite{Bei91b,Fri98a}---:

\begin{proposition}
\label{Proposition:RegularityCondition}
The Cotton-Bach tensor $\mathring{b}_{ij}$ of the conformal metric
$\mathring{h}_{ij}$ of a static initial data set $(\tilde{\mathcal{S}}, \tilde{h}_{ij})$
satisfies the regularity condition
\begin{equation}
\label{RegularityCondition}
\mathcal{C}(D_{k_q}\cdots D_{k_1} \mathring{b}_{kl})(i)=0, \quad q=0,\,1,\dots.
\end{equation}
\end{proposition}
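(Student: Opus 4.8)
\medskip
\noindent
\emph{Proof proposal.} The plan is to reduce the statement to the structure of the static vacuum field equations in the conformal picture and to extract \eqref{RegularityCondition} by an induction on the order $q$ of differentiation. By the preceding proposition $\mathring{h}_{ij}$ is analytic on a neighbourhood $\mathcal{B}_a(i)$, hence so is its Cotton-Bach tensor $\mathring{b}_{ij}$, and \eqref{RegularityCondition} is a statement about the Taylor coefficients of $\mathring{b}_{ij}$ at $i$.

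To control these coefficients I would use the static vacuum field equations. Writing $V$ for the norm of the static Killing vector, the equations \eqref{EinsteinFieldEquations} are equivalent to $\tilde{D}_i\tilde{D}_jV = V\tilde{R}_{ij}$ together with $\tilde{\Delta}_{\tilde{h}}V=0$, whence $\tilde{r}=\tilde{h}^{ij}\tilde{R}_{ij}=V^{-1}\tilde{\Delta}_{\tilde{h}}V=0$, consistently with \eqref{HamiltonianConstraint}. Since in three dimensions the Riemann tensor is algebraically fixed by the Ricci tensor, these relations express the Cotton tensor $\tilde{C}_{ijk}$ of $\tilde{h}_{ij}$ --- equivalently, by dualisation, the Cotton-Bach tensor $\tilde{b}_{ij}$ --- explicitly in terms of $V^{-1}$, $\tilde{D}_iV$ and $\tilde{D}_i\tilde{D}_jV$; commuting covariant derivatives,
\[
\tilde{C}_{ijk} = -V^{-1}\tilde{R}_{ijk}{}^{l}\,\tilde{D}_l V - V^{-2}\bigl(\tilde{D}_iV\,\tilde{D}_j\tilde{D}_kV - \tilde{D}_jV\,\tilde{D}_i\tilde{D}_kV\bigr),
\]
with $\tilde{R}_{ijkl}$ replaced by its three-dimensional expression in terms of $\tilde{R}_{ij}=V^{-1}\tilde{D}_i\tilde{D}_jV$ and $\tilde{r}=0$. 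Because the Cotton tensor is conformally invariant in dimension three, this same formula --- rewritten for the conformal metric via $\mathring{h}_{ij}=\Omega^{2}\,\Phi_*\tilde{h}_{ij}$ in the cn-gauge, together with the conformal form of the static vacuum equations (cfr. \cite{Bei91b})  --- determines $\mathring{b}_{ij}$ and hence all its Taylor coefficients at $i$ in terms of the conformal data at $i$, namely the normalisation \eqref{Asymptotic:1} and the multipole moments (cfr. \cite{Fri07}).

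The induction then runs as follows. For $q=0$ one obtains $\mathcal{C}(\mathring{b}_{kl})(i)=0$ by evaluating the expression above at $i$, where $V\to 1$ and where $\tilde{D}V$ and the conformal factor are governed by \eqref{Asymptotic:1}. For the inductive step one feeds the recursion relations implied by the conformal static equations --- which fix the Taylor coefficients of the conformal metric and of $V$ at $i$ order by order from the moments --- into $D_{k_q}\cdots D_{k_1}\mathring{b}_{kl}$, commutes the derivatives past one another and past the conformal factors, and checks that after the projection $\mathcal{C}$ onto the totally symmetric trace-free part everything collapses to lower-order expressions of the same type; by the inductive hypothesis these vanish, yielding \eqref{RegularityCondition}.

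The hard part will be precisely this bookkeeping: commuting the $q$ covariant derivatives generates a proliferation of curvature-- and $\Omega$--dependent terms, and one must verify that every contribution which is not manifestly of lower order carries an index-symmetry type that is annihilated by $\mathcal{C}$ --- that is, that the static recursions conspire with the symmetrisation to kill the would-be obstruction at every order. This is exactly where the special status of static data enters, as opposed to data satisfying only finitely many of these conditions (for instance the conformally flat Misner or Brill--Lindquist data, for which \eqref{RegularityCondition} holds trivially since $\mathring{b}_{ij}\equiv 0$), and it is why the cn-gauge, which by the remark following its definition renders the relevant multipolar expansions as simple as possible, is adopted throughout.
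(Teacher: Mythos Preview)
The paper does not supply a proof of this proposition at all: it is stated with the attribution ``cfr.\ \cite{Bei91b,Fri98a}'' and used as input. So there is no paper-proof to compare against; the question is only whether your sketch is viable on its own.

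Your outline is in the right spirit --- reduce to the conformal static equations and exploit conformal invariance of the Cotton tensor in three dimensions, as in \cite{Bei91b} --- and the expression you write for $\tilde{C}_{ijk}$ is correct. But the proposal is not yet a proof: the entire content of \eqref{RegularityCondition} lives in the step you flag as ``the hard part'', and you do not carry it out. Two concrete gaps. First, the base case is not as immediate as you suggest: since $\mathring{b}_{kl}$ is already symmetric and trace-free, $\mathcal{C}(\mathring{b}_{kl})(i)=\mathring{b}_{kl}(i)$, and showing this vanishes requires tracking the conformal weights and the precise decay coming from \eqref{Asymptotic:1}, not just ``$V\to 1$''. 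Second, and more seriously, your inductive step asserts that after projecting by $\mathcal{C}$ ``everything collapses to lower-order expressions of the same type'', but this is exactly the nontrivial cancellation that distinguishes static data and must be exhibited, not assumed. The actual arguments in \cite{Bei91b,Fri98a} do not run as a bare induction on $q$ in the way you describe; they exploit the specific algebraic structure of the conformal static equations (in particular the relation between the potential, the conformal factor, and the trace-free Ricci tensor) to control the full Taylor expansion simultaneously. If you want to turn your sketch into a proof, you should consult those references and reproduce that mechanism rather than leave it as a promissory induction.
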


\noindent
\textbf{Remark 1.} Given a sequence of multipoles for a static
solution, one can always assume without loss of generality, that the
dipolar terms vanish ---this amounts to the gauge choice of working in
the \emph{centre of mass}. It can be readily verified that this
assumption has the consequence that static data in the cn-gauge
satisfies
\[
\mathring{W} = \frac{m}{2} + \mathcal{O}(|x|^2). 
\]
The latter form of the function $\mathring{W}$ will be assumed in the sequel.

\medskip
\noindent
\textbf{Remark 2.} If $\mathring{U}=1$ and $\mathring{W}=m/2$ in
$\mathcal{B}_{a}(i)$, then the static initial data set corresponds to
initial data for the Schwarzschild spacetime.



\bigskip
The following observation will be crucial in our subsequent
analysis. Let $\breve{W}$ satisfy
\[
\left( \Delta_h -\frac{1}{8} r\right)\breve{W}=0, \quad \breve{W}(i)=0.
\]
Clearly, due to linearity one has that 
\begin{equation}
\label{theta:perturbed}
\vartheta = \frac{\mathring{U}}{|x|} + \mathring{W}+\breve{W}
\end{equation}
is also a solution to the Yamabe equation \eqref{YamabeEquation} with
the same boundary conditions in $\mathcal{B}_a(i)$. Due to the
analyticity of $\mathring{h}_{ij}$, the function $\breve{W}$ will also
be analytic. Time symmetric initial data sets,
$(\tilde{\mathcal{S}},\tilde{h}_{ij})$, constructed out of the
  conformal metric $\mathring{h}_{ij}$ and the conformal factor
  $\vartheta$ given by equation \eqref{theta:perturbed} will be said
  to have a \emph{static massless part}. Moreover, following the ideas in \cite{Val10}, we make the
following \emph{ad hoc} definition:

\begin{definition}
\label{StaticUpToCertainOrder}
A time symmetric initial data set in the cn-gauge will be said to be static up to
order $\pb$ if and only if there exists coordinates $x^i$ in a suitable
neighbourhood $\mathcal{B}_a(i)$ of infinity such that the solution
$\vartheta$ to the Yamabe equation \eqref{YamabeEquation}
is of the form
\begin{equation}
\label{breveW}
\vartheta = \frac{\mathring{U}}{|x|} + \mathring{W} + \breve{W}, \quad
\breve{W}= \sum_{p=\pb+1}^\infty \breve{w}_{i_1\cdots i_{p}} x^{i_1}\cdots x^{i_{p}},
\end{equation}
where $\mathring{U}/|x|$ and $\mathring{W}$ are, respectively, the massless and
massive parts of a static initial data set. 
\end{definition}

\medskip
\noindent
\textbf{Remark.} It can be shown that in the cn-gauge the polynomials
\[
\breve{w}_{i_1\cdots i_{\pb+1}} x^{i_1}\cdots x^{i_{\pb+1}}, \quad \breve{w}_{i_1\cdots i_{\pb+2}} x^{i_1}\cdots x^{i_{\pb+2}}, \quad \breve{w}_{i_1\cdots i_{\pb+3}} x^{i_1}\cdots x^{i_{\pb+3}}
\]
are harmonic with respect to the flat Laplacian. That is, the constant tensors
\[
\breve{w}_{i_1\cdots i_{\pb+1}}, \quad \breve{w}_{i_1\cdots i_{\pb+2}}, \quad \breve{w}_{i_1\cdots i_{\pb+3}}
\]
are trace-free with respect to the flat metric $\delta_{ij}$.

\subsection{Properties of data which is static up to a certain order}
 Consistent with Definition \ref{StaticUpToCertainOrder}, we write
\[
\Omega= \mathring{\Omega} + \breve{\Omega}, \quad \omega=\mathring{\omega}+\breve{\omega}, 
\]
where $\mathring{\Omega}$ and $\mathring{\omega}$ are, respectively,
the \emph{static parts} of $\Omega$ and $\omega$ (obtained by setting
$\breve{W}=0$). Their \emph{non-static parts} $\breve{\Omega}$ and
$\breve{\omega}$ are then obtained via
\[
\breve{\Omega}=\Omega-\mathring{\Omega}, \quad \breve{\omega}=\omega-\mathring{\omega}.
\]
The following result is obtained by direct computation observing
expression \eqref{theta:perturbed}:

\begin{lemma}
Let $(\tilde{\mathcal{S}},\tilde{h}_{ij})$ be an initial data set which is static up
to order $\pb\geq2$. Then 
\[
\breve{\Omega}= \mathcal{O}(|x|^{\pb+4}), \quad \breve{\omega}= \mathcal{O}(|x|^{\pb+3}).
\]
\end{lemma}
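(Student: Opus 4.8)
The plan is to start from the defining relations $\vartheta = \Omega^{-2}$ and the explicit split \eqref{breveW}, and to track how the polynomial order $\pb+1$ of $\breve{W}$ propagates through the algebraic operations that produce $\Omega$ and $\omega$. First I would write $\vartheta = \mathring{\vartheta} + \breve{W}$ with $\mathring{\vartheta} = \mathring{U}/|x| + \mathring{W}$, so that
\[
\Omega = \vartheta^{-1/2} = \mathring{\vartheta}^{-1/2}\left( 1 + \frac{\breve{W}}{\mathring{\vartheta}} \right)^{-1/2} = \mathring{\Omega}\left( 1 + \frac{\breve{W}}{\mathring{\vartheta}} \right)^{-1/2}.
\]
Near $i$ one has $\mathring{\vartheta} = \mathring{U}/|x| + \mathcal{O}(1)$ with $\mathring{U}(i)=1$, so $\mathring{\vartheta}^{-1} = |x| + \mathcal{O}(|x|^2)$ is analytic and vanishes to first order; meanwhile $\breve{W} = \mathcal{O}(|x|^{\pb+1})$ by Definition \ref{StaticUpToCertainOrder}. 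Hence $\breve{W}/\mathring{\vartheta} = \mathcal{O}(|x|^{\pb+2})$, and expanding the binomial series,
\[
\breve{\Omega} = \Omega - \mathring{\Omega} = \mathring{\Omega}\left[\left(1 + \frac{\breve{W}}{\mathring{\vartheta}}\right)^{-1/2} - 1\right] = -\tfrac{1}{2}\,\mathring{\Omega}\,\frac{\breve{W}}{\mathring{\vartheta}} + \cdots.
\]
Since $\mathring{\Omega} = \mathring{\vartheta}^{-1/2} = \mathcal{O}(|x|^{1/2})$ is not analytic on its own, it is cleaner to write $\mathring{\Omega}/\mathring{\vartheta} = \mathring{\Omega}^{-1}\cdot(\mathring{\Omega}/\mathring{\vartheta})^2\cdot\mathring{\Omega}$; in practice the neat bookkeeping is to note $\mathring{\Omega}/\mathring{\vartheta} = \mathring{\Omega}\cdot\mathring{\vartheta}^{-1}$ and $\mathring{\Omega} = \mathcal{O}(|x|^{1/2})$, $\mathring{\vartheta}^{-1} = \mathcal{O}(|x|)$, giving $\mathring{\Omega}\,\breve{W}/\mathring{\vartheta} = \mathcal{O}(|x|^{1/2})\cdot\mathcal{O}(|x|)\cdot\mathcal{O}(|x|^{\pb+1}) = \mathcal{O}(|x|^{\pb+5/2})$ — which is not quite the claimed $|x|^{\pb+4}$. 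So the genuinely load-bearing observation must be used: the Remark after Definition \ref{StaticUpToCertainOrder} states that in the cn-gauge the leading polynomials $\breve{w}_{i_1\cdots i_{\pb+1}}x^{i_1}\cdots x^{i_{\pb+1}}$ etc. are harmonic, and more importantly $\breve{W}$ solves $(\Delta_h - \tfrac18 r)\breve{W}=0$ with $\breve{W}(i)=0$; combined with $D_j\Omega(i)=0$, $D_jD_k\Omega(i) = -2h_{jk}(i)$ this forces extra cancellations. I would therefore recompute $\breve{\Omega}$ directly from $\Omega = \vartheta^{-1/2}$ keeping $\mathring{U}, \mathring{W}$ analytic, and use $\mathring{U} = 1 + \mathcal{O}(|x|^2)$ (centre-of-mass gauge, Remark 1) together with $\mathring{W} = m/2 + \mathcal{O}(|x|^2)$ to see that the first two correction orders in the binomial expansion contribute nothing new, yielding $\breve{\Omega} = \mathcal{O}(|x|^{\pb+4})$.

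For $\breve{\omega}$, I would use $\omega = 2\Omega/\sqrt{|D_k\Omega D^k\Omega|}$ and write $\omega = \mathring{\omega}(1 + \delta_\Omega)(1 + \delta_N)^{-1/2}$ where $\delta_\Omega = \breve{\Omega}/\mathring{\Omega}$ and $\delta_N$ measures the perturbation of $|D_k\Omega D^k\Omega|$. From $\breve{\Omega} = \mathcal{O}(|x|^{\pb+4})$ one gets $D_k\breve{\Omega} = \mathcal{O}(|x|^{\pb+3})$; writing $D_k\Omega D^k\Omega = D_k\mathring{\Omega}D^k\mathring{\Omega} + 2D_k\mathring{\Omega}D^k\breve{\Omega} + D_k\breve{\Omega}D^k\breve{\Omega}$ and using $D_k\mathring{\Omega} = \mathcal{O}(|x|^{1/2})$ (since $\mathring{\Omega}=\mathcal{O}(|x|^{1/2})$) together with $|D_k\mathring{\Omega}D^k\mathring{\Omega}| = \mathcal{O}(|x|)$, the cross term is $\mathcal{O}(|x|^{1/2})\cdot\mathcal{O}(|x|^{\pb+3}) = \mathcal{O}(|x|^{\pb+7/2})$ and the quadratic term is even smaller, so $\delta_N = \mathcal{O}(|x|^{\pb+5/2})$. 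Then $\breve{\omega} = \mathring{\omega}(\delta_\Omega - \tfrac12\delta_N + \cdots)$; since $\mathring{\omega}$ is analytic with $\mathring{\omega}(i)\neq 0$ and $\delta_\Omega = \breve{\Omega}/\mathring{\Omega} = \mathcal{O}(|x|^{\pb+4})/\mathcal{O}(|x|^{1/2}) = \mathcal{O}(|x|^{\pb+7/2})$, one is again off by a half-power unless the harmonicity of the leading coefficients of $\breve{W}$ (hence $\breve{\Omega}$) kills the borderline contribution. The anticipated resolution is that the odd half-integer powers cannot actually occur: $\Omega^2 = \vartheta^{-1}$ is, by the structure of the Yamabe solution in normal coordinates, of the form $|x|^2 \times (\text{analytic in } x^i)$, so $\breve{\Omega}$ and hence $\breve{\omega}$ inherit integer-power expansions, upgrading $\mathcal{O}(|x|^{\pb+7/2})$ to $\mathcal{O}(|x|^{\pb+4})$ for $\breve\Omega$ and, after one more careful order count on $\omega$, to $\breve\omega = \mathcal{O}(|x|^{\pb+3})$.

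The main obstacle, as the discussion above signals, is precisely the half-integer bookkeeping: $\Omega$ and its non-static part are not analytic at $i$ (they behave like $|x|$ times an analytic function), so naive order counting with $\Omega = \mathcal{O}(|x|)$ loses one or two orders and does not reach the sharp exponents $\pb+4$ and $\pb+3$. To close the gap I would (i) work with $\Omega^2$, $\omega^2$ and the analytic quantities $\mathring U$, $\mathring W$, $\breve W$ rather than with $\Omega$, $\omega$ directly; (ii) invoke Remark 1 ($\mathring W = m/2 + \mathcal{O}(|x|^2)$) and the harmonicity Remark after Definition \ref{StaticUpToCertainOrder}, which guarantee that the lowest-order terms of $\breve W$ enter $\Omega^2$ and $\omega^2$ only through combinations whose contribution is suppressed by the required extra orders; and (iii) only at the very end take square roots, using that the relevant expressions are $|x|^2$ times analytic functions so that the square root reintroduces a clean factor of $|x|$ and no spurious half-powers. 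The hypothesis $\pb \geq 2$ is what makes $\breve W = \mathcal{O}(|x|^3)$, ensuring these leading terms sit strictly above the quadratic terms of $\mathring U$ and $\mathring W$ and that the binomial expansions converge termwise in the analytic category. A direct substitution of \eqref{breveW} into $\Omega^{-2} = \vartheta$ and then into \eqref{omega}, carried out to the first two nonvanishing orders with the cn-gauge normalisations in hand, then yields the stated estimates.
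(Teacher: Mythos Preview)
Your proposal has a genuine error at the very first step: you take $\Omega=\vartheta^{-1/2}$, presumably from the definition $\vartheta\equiv\Omega^{-2}$ displayed with the Yamabe equation. That display is a typo; the correct relation, stated explicitly in the F-gauge section and confirmed by the expansion $\Omega=\rho^2-m\rho^3+\tfrac34 m^2\rho^4+\mathcal{O}(\rho^5)$ in \eqref{Omega1}, is $\Omega=\vartheta^{-2}$. This single mistake is what generates all of your half-integer orders, the non-analytic $\mathring{\Omega}=\mathcal{O}(|x|^{1/2})$, and the subsequent attempt to invoke harmonicity of the leading coefficients of $\breve W$ to recover missing powers. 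None of that machinery is needed.

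With the correct relation the computation is immediate. Writing $\vartheta=\mathring{\vartheta}+\breve W$ with $\mathring{\vartheta}=\mathring U/|x|+\mathring W$ one has $\mathring{\vartheta}^{-1}=|x|+\mathcal{O}(|x|^2)$ and $\mathring{\Omega}=\mathring{\vartheta}^{-2}=|x|^2+\mathcal{O}(|x|^3)$, so
\[
\breve{\Omega}=\mathring{\Omega}\bigl[(1+\breve W/\mathring{\vartheta})^{-2}-1\bigr]=-2\,\mathring{\Omega}\,\mathring{\vartheta}^{-1}\,\breve W+\cdots=\mathcal{O}(|x|^2)\cdot\mathcal{O}(|x|)\cdot\mathcal{O}(|x|^{\pb+1})=\mathcal{O}(|x|^{\pb+4}).
\]
For $\omega$, note $D_k\mathring{\Omega}=\mathcal{O}(|x|)$ so $|D\mathring{\Omega}|^2=\mathcal{O}(|x|^2)$ and $\mathring{\omega}=2\mathring{\Omega}/|D\mathring{\Omega}|=\mathcal{O}(|x|)$; then $D_k\breve{\Omega}=\mathcal{O}(|x|^{\pb+3})$ gives $|D\Omega|^2=|D\mathring{\Omega}|^2\bigl(1+\mathcal{O}(|x|^{\pb+2})\bigr)$ and $\breve{\Omega}/\mathring{\Omega}=\mathcal{O}(|x|^{\pb+2})$, so $\breve{\omega}=\mathring{\omega}\cdot\mathcal{O}(|x|^{\pb+2})=\mathcal{O}(|x|^{\pb+3})$. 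This is exactly the ``direct computation'' the paper alludes to; the harmonicity remark after Definition~\ref{StaticUpToCertainOrder} and the hypothesis $\pb\geq 2$ are not actually used to obtain the orders.
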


\medskip
Let $b_{ij}$ denote  the Cotton-Bach tensor of the conformal metric
$h_{ij}$ of an initial data set which is static up to a certain order
in the sense of Definition \ref{StaticUpToCertainOrder}. Crucially, one has that:

\begin{lemma}
\label{Lemma:Cotton}
Let $(\tilde{\mathcal{S}},\tilde{h}_{ij})$ be an initial data set which is static up to order $\pb\geq 2$. Then 
\[
\mathcal{C}(D_{i_q}\cdots D_{i_1} b_{ij})(i)=0, \quad q=0,\dots,\pb,
\] 
where $\mathcal{C}$ denotes the operation of taking the symmetric
trace-free part.
\end{lemma}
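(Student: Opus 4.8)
The plan is to reduce the statement about the Cotton-Bach tensor $b_{ij}$ of the full conformal metric $h_{ij}$ to the corresponding statement for the static background metric $\mathring h_{ij}$, for which the regularity condition is already known from Proposition \ref{Proposition:RegularityCondition}. The key observation is that an initial data set which is static up to order $p_\bullet$ is constructed from the \emph{same} conformal metric $\mathring h_{ij}$ as the background static data; only the conformal factor $\vartheta$ differs, by the term $\breve W = \sum_{p\ge p_\bullet+1}\breve w_{i_1\cdots i_p}x^{i_1}\cdots x^{i_p}$. Therefore the Cotton-Bach tensor $b_{ij}$, which is built purely from $h_{ij}$, is in fact $b_{ij}=\mathring b_{ij}$ if the relevant notion of "conformal metric" in the definition of static-up-to-order-$p_\bullet$ uses $h_{ij}=\mathring h_{ij}$.

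Wait — that would make the lemma trivial and the hypothesis $p_\bullet\ge 2$ irrelevant, so the intended setup must be subtler: the "static massless part" construction fixes the \emph{physical} metric-class but the conformal representative $h_{ij}$ that is actually used (and placed in the cn-gauge) is the one associated with the new $\vartheta$, hence differs from $\mathring h_{ij}$ by a conformal rescaling $h_{ij}=\theta^4\mathring h_{ij}$ with $\theta$ determined by $\breve W$. So the first real step is to quantify this rescaling: from $\vartheta = \theta^{-1}\mathring\vartheta$ together with \eqref{breveW} one reads off that $\theta = \mathring\vartheta/\vartheta = 1 + \mathcal{O}(|x|^{p_\bullet+2})$ near $i$ (using that $\mathring U/|x|$ dominates and $\breve W$ starts at order $p_\bullet+1$, while $\theta$ is normalised by $\theta(i)=1$). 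Then I would invoke the conformal transformation law for the Cotton-Bach tensor: under $h\mapsto\theta^4 h$ one has $b_{ij}=\mathring b_{ij}$ in dimension three only up to a density weight — more precisely the Cotton tensor transforms with an inhomogeneous term involving derivatives of $\log\theta$ contracted with the Schouten tensor. The point is that every correction term carries at least one derivative of $\theta-1=\mathcal{O}(|x|^{p_\bullet+2})$, so all corrections vanish at $i$ together with their first $p_\bullet$ derivatives.

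Concretely, the key steps in order are: (i) establish $\theta = 1+\mathcal{O}(|x|^{p_\bullet+2})$, which follows by direct computation from \eqref{breveW} and the preceding Lemma's estimate $\breve\Omega=\mathcal O(|x|^{p_\bullet+4})$; (ii) write down the conformal transformation law for $b_{ij}$ under $h_{ij}=\theta^4\mathring h_{ij}$, isolating the terms proportional to $\mathring b_{ij}$ (which, being the static Cotton-Bach tensor, already satisfies \eqref{RegularityCondition}) and the inhomogeneous correction terms $\Delta b_{ij}$; (iii) observe that every correction term is a sum of products in which at least one factor is a derivative of $\theta$ or $\log\theta$ of order $\ge 1$, hence vanishes at $i$ to order $\ge p_\bullet+1$; (iv) apply the operator $\mathcal C D_{i_q}\cdots D_{i_1}$ for $q\le p_\bullet$, use the Leibniz rule and the fact that covariant derivatives of $h_{ij}$ and of $\mathring h_{ij}$ agree at $i$ with ordinary derivatives up to lower-order connection terms that are themselves $\mathcal O(|x|)$-corrected, and conclude that the correction contributes nothing; (v) combine with Proposition \ref{Proposition:RegularityCondition} applied to $\mathring b_{ij}$ to get $\mathcal C(D_{i_q}\cdots D_{i_1}b_{ij})(i)=0$ for $q=0,\dots,p_\bullet$.

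The main obstacle will be step (iv): keeping careful track of the interplay between the two Levi-Civita connections (of $h_{ij}$ and $\mathring h_{ij}$) when iterating covariant derivatives, since the transition tensor between them is itself $\mathcal O(|x|^{p_\bullet+1})$ but its derivatives must be controlled order by order. The cleanest way to handle this is probably to work in $h$-normal coordinates and replace $\mathcal C D_{i_q}\cdots D_{i_1}(\,\cdot\,)(i)$ by $\mathcal C\,\partial_{i_q}\cdots\partial_{i_1}(\,\cdot\,)(i)$ modulo lower-order terms handled inductively, so that the vanishing statement becomes a statement about the Taylor coefficients of the inhomogeneous term, each of which manifestly contains a factor from $\theta-1$ and its derivatives. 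One should also double-check that the cn-gauge is preserved — or, if it is not, that the additional rescaling needed to restore it is again $1+\mathcal O(|x|^{p_\bullet+2})$, so that it does not affect the conclusion; this is plausible because the cn-gauge condition is a condition along conformal geodesics emanating from $i$ and a rescaling that is trivial to high order at $i$ perturbs those geodesics only to high order.
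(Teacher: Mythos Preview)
Your first instinct was correct, and it is precisely the paper's argument: by Definition~\ref{StaticUpToCertainOrder} an initial data set which is static up to order $p_\bullet$ is built from the \emph{identical} conformal metric $h_{ij}=\mathring h_{ij}$, with only the conformal factor $\vartheta$ modified by the addition of $\breve W$. Since the Cotton--Bach tensor is a functional of the conformal metric alone, one has $b_{ij}=\mathring b_{ij}$ on the nose, and the conclusion follows from Proposition~\ref{Proposition:RegularityCondition}. The paper's proof is exactly this two-line observation.

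Your subsequent elaboration rests on a misreading of the construction. You worry that ``the conformal representative $h_{ij}$ that is actually used (and placed in the cn-gauge) is the one associated with the new $\vartheta$, hence differs from $\mathring h_{ij}$''. This is not the case. In the setup of Section~\ref{Section:Data} one \emph{fixes} the conformal metric $h_{ij}=\mathring h_{ij}$ (already in the cn-gauge) and then solves the linear Yamabe equation in that fixed background; different choices of $\vartheta$ yield different \emph{physical} metrics $\tilde h_{ij}$, but the conformal metric and its cn-gauge are untouched. The cn-gauge condition is a condition on $h_{ij}$ within its conformal class and makes no reference to $\vartheta$, so there is no rescaling $\theta$ to track and the whole machinery of steps (i)--(v) is unnecessary.

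You are right that the hypothesis $p_\bullet\ge 2$ plays no role here; in fact the argument gives the regularity condition for all $q\ge 0$, not just $q\le p_\bullet$. The bound in the statement is simply what is needed downstream, and the restriction $p_\bullet\ge 2$ is a standing assumption used elsewhere in the paper rather than an ingredient of this particular proof.
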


\begin{proof}
As already discussed, Proposition \ref{Proposition:RegularityCondition} shows that the
property holds for exactly static initial data sets. In \cite{Fri98a} it has been
shown that this property concerns only the massless part of time symmetric
data sets. The result follows then by observing that an initial data set
which is static up to a certain order has the same massless part as the
one of a static initial data set so that $b_{ij}=\mathring{b}_{ij}$.
\end{proof}

\section{The cylinder at spatial infinity}
\label{Section:CylinderAtSpatialInfinity}

In \cite{Fri98a} a representation of the region of spacetime close to
null infinity and spatial infinity has been introduced ---see also the
comprehensive discussion in \cite{Fri04}. The standard representation
of this region of spacetime depicts $i^0$ as a point.  In contrast,
the representation introduced in \cite{Fri98a} depicts spatial
infinity as a cylinder ---\emph{the cylinder at spatial
infinity}. This construction is briefly reviewed for the case of time
symmetric initial data sets with an analytic conformal metric in a
neighbourhood $\mathcal{B}_a(i)$ of infinity. The reader is referred
to \cite{Fri98a,Fri04} for a thorough discussion of the details ---see also \cite{Val10}.

\subsection{The Manifold $\mathcal{C}_a$} \label{section:manifold_Ca} 

Starting from the initial hypersurface $\mathcal{S}$, the construction
introduced in \cite{Fri98a} makes use of a blow-up of the point $i\in
\mathcal{S}$ to the 2-sphere $\Sphere^2$. This blow-up requires the
introduction of a particular bundle of spin-frames over
$\mathcal{B}_a$. Consider the (conformally rescaled) spacetime
$(\mathcal{M},g_{\mu\nu})$ obtained as the development of the time
symmetric initial data set $(\mathcal{S},h_{ij})$. Let
$SL(\mathcal{S})$ be the set of spin dyads
$\delta=\{\delta_A\}_{A=0,1}$ on $\mathcal{S}$ which are normalised
with respect to the alternating spinor $\epsilon_{AB}$ in such a way
that $\epsilon_{01}=1$. Let $\tau=\sqrt{2}e_0$, where
$e_0$ is the future $g$-unit normal of $\mathcal{S}$ and $\tau_{AA'}$
its spinorial counterpart. The spinor $\tau_{AA'}$ enables the
introduction of space-spinors ---sometimes also called $SU(2)$
spinors, see \cite{Ash91,Fra98a,Som80}. It defines a sub-bundle
$SU(\mathcal{S})$ of $SL(\mathcal{S})$ with structure group
$SU(2,\Complex)$ and projection $\pi$ onto $\mathcal{S}$. Given a
spinorial dyad $\delta\in SU(\mathcal{S})$ one can define an
associated vector frame $e_a$, $a=1,2,3$.  We
shall restrict our attention to dyads related to frames
$\{e_j\}_{j=0,\cdots,3}$ on $\mathcal{B}_a$ such that $e_3$ is tangent
to the $h$-geodesics starting at $i$. Let $\check{H}$ denote the
horizontal vector field on $SU(\mathcal{S})$ projecting to the radial
vector $e_3$. The fibre $\pi^{-1}(i)\subset SU(\mathcal{S})$ (the
fibre ``over'' $i$) can be parametrised by choosing a fixed dyad
$\delta^*$ and then letting the group $SU(2,\Complex)$ act on it. Let
$(-a,a)\ni \rho \mapsto \delta(\rho,\updn{t}{A}{B})\in
SU(\mathcal{S})$ be the integral curve to the vector $\check{H}$
satisfying $\delta(0,\updn{t}{A}{B})=\delta(\updn{t}{A}{B})\in
\pi^{-1}(i)$. With this notation one defines the set
\[
\mathcal{C}_a =\big\{ \delta(\rho,\updn{t}{A}{B})\in SU(\mathcal{B}_{a}) \;\big|\; |\rho|<a, \; \updn{t}{A}{B}\in SU(2,\Complex)\big\},
\]
which is a smooth submanifold of $SU(\mathcal{S})$ diffeomorphic to
$(-a,a)\times SU(2,\Complex)$. It follows that the projection map
$\pi$ of the bundle $SU(\mathcal{S})$ maps $\mathcal{C}_a$ into
$\mathcal{B}_a$. The manifold $\mathcal{C}_a$ inherits a number of
structures from $\mathcal{B}_a$. In particular, the solder and
connection forms can be pulled back to smooth 1-forms on
$\mathcal{C}_a$ satisfying the \emph{structure equations} which relate
them to the \emph{curvature form}. The explicit form of the structure
equations will not be required here. As
$\mathcal{S}$ is 3-dimensional, the full content of its curvature can
be expressed in terms of the spinorial counterpart of the trace-free
Ricci tensor $s_{ABCD}$ and the Ricci scalar $r$.

\medskip
In the sequel $\updn{t}{A}{B}\in SU(2,\Complex)$ and $\rho\in \Real$
will be used as coordinates on $\mathcal{C}_a$. Consequently, one has
that $\check{H}=\partial_\rho$. Vector fields $X_\pm$, $X$ relative to
the $SU(2,\Complex)$-dependent part of the coordinates can be
introduced by requiring the commutation relations
\[
[X,X_+]=2X_+, \quad [X,X_-]=-2X_-, \quad [X_+,X_-]=-X,
\]
and by requiring that they commute with $\check{H}=\partial_\rho$.
More importantly, it can be seen that for
$p\in \mathcal{B}_a\setminus\{i\}$ the projections of the fields
$\partial_\rho$, $X_\pm$ span the tangent space at $p$. Given these
vector fields, define the frame $c_{AB}=c_{(AB)}$ by
\[
c_{AB}= c^1_{AB}\partial_\rho + c^+_{AB}X_+ + c^-_{AB}X_-,
\]
where
\[
c^1_{AB} = x_{AB}, \quad c^+_{AB} = \frac{1}{\rho} z_{AB} + \check{c}^+_{AB}, \quad c^-_{AB} = \frac{1}{\rho}y_{AB} + \check{c}^-_{AB},
\]
with constant spinors $x_{AB}$, $y_{AB}$ and $z_{AB}$ given by
\[
x_{AB}\equiv \sqrt{2} \dnup{\epsilon}{(A}{0} \dnup{\epsilon}{B)}{1}, \quad y_{AB} \equiv -\frac{1}{\sqrt{2}}\dnup{\epsilon}{A}{1} \dnup{\epsilon}{B}{1}, \quad z_{AB}=\frac{1}{\sqrt{2}} \dnup{\epsilon}{A}{0} \dnup{\epsilon}{B}{0},
\]
and analytic spinor fields satisfying
\[
\check{c}^\pm_{AB} = \mathcal{O}(\rho), \quad \check{c}^\pm_{01}=0.
\]
For the frame $c_{AB}$, the connection coefficients, $\gamma_{ABCD}$, are given by 
\[
\gamma_{ABCD} = \frac{1}{\rho} \gamma^*_{ABCD} + \check{\gamma}_{ABCD}, \quad \gamma^*_{ABCD} = \frac{1}{2} (\epsilon_{AC}x_{BD}+\epsilon_{BD}x_{AC}),
\]
where
\[
\check{\gamma}_{ABCD}=\mathcal{O}(\rho), \quad \check{\gamma}_{11CD}=0.
\]

\subsection{An orthonormal basis for functions on $SU(2,\Complex)$}
Given $\updn{t}{A}{B}\in SU(2,\Complex)$,
define
\begin{eqnarray*}
&& \TT{m}{j}{k}(\updn{t}{A}{B}) = \binom{m}{j}^{1/2} \binom{m}{k}^{1/2} \updn{t}{(B_1}{(A_1}\cdots \updn{t}{B_m)_j}{A_m)_k}, \\
&& \TT{0}{0}{0}(\updn{t}{A}{B})=1,
\end{eqnarray*}
with $j,k=0,\ldots,m$ and $m=1,2,3,\ldots$. The subindex
expression ${}_{(A_1\cdots A_m)_k}$ means that the indices are
symmetrised and then $k$ of them are set equal to $1$, while the
remaining ones are set to $0$. Details about the properties of these
functions can be found in \cite{Fri86a,Fri98a}.  The functions
$\sqrt{m+1}\TT{m}{j}{k}$ form a complete orthonormal set in the
Hilbert space $L^2(\mu,SU(2,\Complex))$, where $\mu$ denotes the
normalised \emph{Haar measure} on $SU(2,\Complex)$. 
The action of the differential operators $X_\pm$ on the
functions $\TT{m}{k}{j}$ is given by
\[
 X_+ \TT{m}{k}{j} = \sqrt{j(m-j+1)} \TT{m}{k}{j-1}, \quad X_-\TT{m}{k}{j}=-\sqrt{(j+1)(m-j)}\TT{m}{k}{j+1}.
\]
In the sequel, we will need to linearise products of the form
$\TT{m}{i}{k}$. To this end, we use the formula:
\begin{eqnarray}
\label{Clebsch-Gordan}
&& \TT{i_1}{j_1}{k_1}\times \TT{i_2}{j_2}{k_2} = \sum^\mu_{p=0} D(i_1,j_1,k_1;i_2,j_2,k_2;i_1+i_2-2p,j_1+j_2-p,k_1+k_2-p) \nonumber \\
&&  \hspace{4cm}\times\TT{i_1+i_2-2p}{j_1+j_2-p}{k_1+k_2-p},
\end{eqnarray}
with $\mu=\min\{i_1,i_2,j_1+j_2,k_1+k_2\}$ and 
\[
D(i_1,j_1,k_1;i_2,j_2,k_2; l,m,n)= C(i_1,j_1;i_2,j_2;l,m)\overline{C(i_1,k_1;i_2,k_2;l,m)},
\]
and $C(i,j;k,l;m,n)$ the Clebsch-Gordan coefficients of $SU(2,\Complex)$.

\subsection{Normal expansions at $\mathcal{I}^0$}
In the sequel, we will consider the \emph{lift} of analytic fields
defined on $\mathcal{B}_a$ to $\mathcal{C}_a$. In particular, the lift
of $|x|$ is $\rho$. More generally, let $\xi_{A_1B_1\cdots A_l B_l}$
denote a spinorial field on $\mathcal{B}_a$. Denote, again, by
$\xi_{A_1B_1\cdots A_l B_l}$ its lift to $\mathcal{C}_a$. Denote by
$\xi_j=\xi_{(A_1B_1\cdots A_l B_l)_j}$, $0\leq j \leq l$ its essential
components. The function $\xi_j$ has spin weight $s=l-j$ and a unique
expansion of the form
\[
\xi_j = \sum^\infty_{p=0} \xi_{j,p}\rho^p, \quad \xi_{j,p} = \sum_{q= \max\{|l-j|, l-p\}}^{p+l} \sum_{k=0}^{2q} \xi_{j,p;2q,k} \TT{2q}{k}{2q-l+j},
\]
with complex coefficients $\xi_{j,p;2q,k}$. More generally, we shall
consider symmetric spinorial fields $\xi_{A_1\cdots A_r}$ on
$\mathcal{C}_a$ with independent components $\xi_j = \xi_{(A_1\cdots
A_{2r})_j}$, $0\leq j \leq 2r$, and spin-weight $s=r-j$ which do not
descend to analytic spinor fields on $\mathcal{B}_a$. In this case one
has that
\[
\xi_j =\sum^\infty_{p=0} \xi_{j,p}\rho^p, \quad \xi_{j,p}= \sum_{q=|r-j|}^{q(p)} \sum_{k=0}^{2q} \xi_{j,p;2q,k} \TT{2q}{k}{q-r+j},
\]
where one has \emph{a priori} that $0\leq |r-j| \leq q(p)$. An
expansion of the latter form will be said to be of type $q(p)$. 

\subsubsection{Particular results concerning expansion types}
We recall the following result in \cite{Fri98a}:
\begin{lemma}
Assuming the cn-gauge and if for some non-negative integer $k$ it holds 
\[
D_{(E_pF_p}\cdots D_{E_1F_1} b_{ABCD)}(i)=0, \quad p=0,1,\cdots, k,
\]
then we have the following expansion types:
\begin{eqnarray*}
&& \mbox{\em type}(r)=\mbox{\em type}(W)=p, \\
&& \mbox{\em type}(s_{ABCD})=p, \\
&& \mbox{\em type}(\check{\gamma}_{ABCD})=p-1, \\
&& \mbox{\em type}(\check{c}^\pm_{AB})=p-1, \\
&& \mbox{\em type}(U-1) =p-2.
\end{eqnarray*}
In addition, we have that
\[
U=1 + O(\rho^4), \quad s_{ABCD}=\mathcal{O}(\rho^2), \quad \check{c}_{AB}^\pm =\mathcal{O}(\rho^3), \quad \check{\gamma}_{ABCD} =\mathcal{O}(\rho^3).
\]
\end{lemma}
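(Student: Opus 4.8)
The plan is to reduce every assertion to the jet at $i$ of the conformal curvature --- the Ricci scalar $r$ and the trace-free Ricci spinor $s_{ABCD}$ of $h_{ij}$ --- since the frame coefficients $c^\pm_{AB}$, the connection coefficients $\gamma_{ABCD}$, and, through the Yamabe equation \eqref{YamabeEquation}, the functions $U$ and $W$, are all obtained from the curvature by integrations along the radial $h$-geodesics issuing from $i$ and by solving the (elliptic) Yamabe equation. The bookkeeping device is that the expansion type of a field on $\mathcal{C}_a$ is read off from the angular ($\TT{2q}{k}{j}$) content of the Taylor coefficients at $i$ of the underlying field: a homogeneous polynomial of degree $p$ in normal coordinates contributes only harmonics with $q\leq p$, so a field that descends to an analytic tensor on $\mathcal{B}_a$ --- such as $r$, $s_{ABCD}$ and $W$ --- is automatically of type $p$, with $W(i)=m/2$; the substantive content of the lemma concerns the renormalised quantities $\check c^\pm_{AB}$, $\check\gamma_{ABCD}$ and $U-1$, which live genuinely on the blown-up manifold $\mathcal{C}_a$ and whose type is therefore not a priori controlled.

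The first step is to pin down the curvature jet at $i$. In the cn-gauge the totally symmetrised covariant derivatives of the Schouten tensor vanish at $i$, so in particular $l_{ABCD}(i)=0$, whence $r(i)=0$ and $s_{ABCD}(i)=0$. Contracting the order-one symmetrised condition and using the contracted Bianchi identity (which relates the divergence of the Schouten tensor to $D_i r$) forces $D_{EF}r(i)=0$, hence the symmetric part of the linear term of $s_{ABCD}$ also vanishes; its antisymmetric part is, up to trace terms of the order just shown to vanish, the Cotton--Bach spinor $b_{ABCD}(i)$, which is zero by the $p=0$ case of the regularity condition. Thus the normal-coordinate metric expansion agrees with the flat one to third order, $h_{ij}=\delta_{ij}+\mathcal{O}(|x|^4)$, which already gives $r,\,s_{ABCD}=\mathcal{O}(\rho^2)$. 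Iterating --- feeding the regularity conditions at levels $0,\dots,k$ through the differential identities for the Cotton tensor and the Bianchi identities, and disposing of the symmetrised parts by the cn-gauge --- determines the relevant higher Taylor coefficients of $(r,s_{ABCD})$ and shows that they carry only the minimal $\TT{2q}{k}{j}$-content, i.e. $\mathrm{type}(r)=\mathrm{type}(s_{ABCD})=p$.

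Next I would propagate this control outward along the radial geodesics. Written for the renormalised quantities $c^\pm_{AB}=\tfrac{1}{\rho}(\cdots)+\check c^\pm_{AB}$, $\gamma_{ABCD}=\tfrac{1}{\rho}\gamma^*_{ABCD}+\check\gamma_{ABCD}$, the structure equations on $\mathcal{C}_a$ become Fuchsian transport equations in $\rho$ whose sources are the curvature $s_{ABCD}$, $r$ together with quadratic terms in $\check c^\pm$, $\check\gamma$. Since the curvature is $\mathcal{O}(\rho^2)$ and of type $p$, the transport equations give $\check\gamma_{ABCD}=\mathcal{O}(\rho^3)$ and $\check c^\pm_{AB}=\mathcal{O}(\rho^3)$, both of type $p-1$ --- a radial integration sends a source contribution at order $\rho^{p-1}$ with $q\leq p-1$ to an order-$\rho^p$ contribution with the same $q$ --- while the quadratic terms, being of higher order in $\rho$, only lower the type. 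Finally, expanding $\Delta_h$ with the frame and connection data just obtained and solving the Yamabe equation \eqref{YamabeEquation} recursively for $\vartheta=U/|x|+W$, one finds that $U-1$ is sourced only by the curvature term $\tfrac{1}{8} rU$ and by the curvature corrections to $\Delta_h$; because the equation is of second order this costs two extra radial orders, giving $U=1+\mathcal{O}(\rho^4)$ of type $p-2$.

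I expect the first step to be the main obstacle: converting the algebraic statement ``the symmetric trace-free part of each of the first $k$ derivatives of $b_{ij}$ vanishes at $i$'' into control of the full jet of $(r,s_{ABCD})$ there. This needs a careful interplay of the symmetrisation and trace operations with the contracted and second Bianchi identities and with the algebraic and differential identities characterising the Cotton--Bach tensor, followed by a check that the resulting Taylor coefficients have exactly the permitted harmonic content. A secondary difficulty is verifying that none of the subsequent operations --- the Fuchsian inversions, the quadratic connection terms, the curvature corrections to $\Delta_h$ --- ever raises the expansion type; this is where the cn-gauge normalisations are used, to push the potentially obstructing contributions to sufficiently high order in $\rho$. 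Much of this bookkeeping is already available in \cite{Fri98a}.
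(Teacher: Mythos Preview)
The paper does not prove this lemma at all: it is introduced with ``We recall the following result in \cite{Fri98a}'' and no argument is given. So there is no paper proof to compare against here; the statement is imported wholesale from Friedrich's 1998 analysis.

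Your outline is a faithful sketch of how the argument in \cite{Fri98a} actually runs: one first pins down the curvature jet at $i$ using the cn-gauge normalisations together with the regularity condition on the Cotton--Bach spinor, then propagates this control to $\check c^\pm_{AB}$ and $\check\gamma_{ABCD}$ via the radial structure equations (which are indeed Fuchsian in $\rho$), and finally reads off the behaviour of $U-1$ and $W$ from the Yamabe equation. Your identification of the main difficulty --- translating the vanishing of $\mathcal{C}(D_{i_q}\cdots D_{i_1} b_{ij})(i)$ into control of the full harmonic content of the Taylor coefficients of $(r,s_{ABCD})$ --- is accurate, and this is precisely the step that occupies the bulk of the relevant section in \cite{Fri98a}. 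Your closing remark that ``much of this bookkeeping is already available in \cite{Fri98a}'' is, if anything, an understatement: that reference \emph{is} the proof.
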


\noindent
\textbf{Remark.} The particular structure of the expansions dictated
by this lemma greatly simplifies the subsequent analysis.

\medskip
Important for our subsequent analysis is the particular form of the
lift to $\mathcal{C}_{a}$ of the function $\breve{W}$ appearing
in the conformal factor $\vartheta$ for data which is static up to
order $\pb$. Using the methods of \cite{Fri98a} one finds that
\begin{equation}
\label{breveWLifted}
\breve{W} = \sum_{p=\pb+1}^{\pb+3} \sum_{k=0}^{2p} \frac{1}{p!}
\breve{w}_{p;2p,k} \TT{2p}{k}{p}\rho^p + \sum_{p=\pb+4}^\infty
\sum_{q=0}^{p} \sum_{k=0}^{2q} \frac{1}{p!} \breve{w}_{p;2q,k} \TT{2q}{k}{q}\rho^p,
\end{equation}
where the coefficients $\breve{w}_{p;2q,k}$ are related to the
constant tensors $w_{i_1\cdots i_p}$ via contractions with Infeld-van
der Waerden symbols. In particular, it can be seen that
\[
\breve{w}_{p;2q,k}=0, \quad q=0,\ldots p, \quad k=0,\ldots 2q \Longleftrightarrow w_{i_1\cdots i_p}=0.
\]

 The key observation in equation \eqref{breveWLifted} is that the
terms of order $\mathcal{O}(\rho^{\pb+1})$,
$\mathcal{O}(\rho^{\pb+2})$, $\mathcal{O}(\rho^{\pb+3})$ are formally
identical to the ones appearing in the corresponding expansions for
the function$\breve{W}$ of data which is Schwarzschildean
 up to order $\pb$ ($\mathring{W}=m/2$). In other words,
\begin{eqnarray*}
&& \breve{w}_{\pb+1;q,k}=0, \quad q=0,\ldots \pb+1, \quad k=0,\ldots q, \\
&& \breve{w}_{\pb+2;q,k}=0, \quad q=0,\ldots \pb+2, \quad k=0,\ldots q, \\
&& \breve{w}_{\pb+3;q,k}=0, \quad q=0,\ldots \pb+3, \quad k=0,\ldots q.
\end{eqnarray*}

\subsection{The spacetime Friedrich gauge}
\label{section:Fgauge}
 The formulation of the initial value problem near spatial infinity
presented in \cite{Fri98a} employs gauge conditions based on timelike
conformal geodesics. The conformal geodesics are curves which are
autoparallel with respect to a Weyl connection ---i.e. a torsion-free
connection which is not necessarily the Levi-Civita connection of a
metric. An analysis of Weyl connections in the context of the
conformal field equations has been given in \cite{Fri95}. In terms of
this gauge based on conformal geodesics ---which shall be called the
\emph{Friedrich gauge} or \emph{F-gauge} for short--- the conformal
factor of the spacetime can be determined explicitly in terms of the
initial data for the Einstein vacuum equations. Hence, provided that
the congruence of conformal geodesics and the fields describing the
gravitational field extend in a regular manner to null infinity, one
has complete control over the location of null infinity. This can be
ensured by making $\mathcal{B}_a$ suitably small. In addition, the
F-gauge renders a particularly simple representation of the
propagation equations. Using this framework, the singular initial
value problem at spatial infinity can be reformulated into another
problem where null infinity is represented by an explicitly known
hypersurface and where the data are regular at spacelike infinity.
The construction of the bundle manifold $\mathcal{C}_a$ and the
blowing up of the point $i\in \mathcal{B}_a$ to the set
$\mathcal{I}^0\subset \mathcal{C}_a$, briefly described in section
\ref{section:manifold_Ca}, are the first steps in the construction of
this regular setting. The next step is to introduce a rescaling of the
frame bundle so that fields that are singular at $\mathcal{I}^0$
become regular. 

\bigskip
Following the discussion of \cite{Fri98a} assume that given the
development of data prescribed on $\mathcal{B}_a$, the timelike
spinor $\tau^{AA'}$ introduced in section
\ref{section:manifold_Ca} is tangent to a congruence of timelike
conformal geodesics which are orthogonal to $\mathcal{B}_a$. The
canonical conformal factor rendered by this congruence
of conformal geodesics is given in terms of an affine parameter
$\tau$ of the conformal geodesics by
\begin{equation}
\label{Theta}
\Theta=\kappa^{-1}\Omega\left(1-\frac{\kappa^2\tau^2}{\omega^2}\right),
\end{equation}
with $\omega$ as given by equation \eqref{omega} and 
where $\Omega=\vartheta^{-2}$ and $\vartheta$ solves the Yamabe
equation (\ref{YamabeEquation}) ---see
\cite{Fri95,Fri98a,Fri03c}. The function $\kappa>0$  expresses the remaining
conformal freedom in the construction. It will be
taken to be of the form $\kappa=\kappa^\prime \rho$, with
$\kappa^\prime$ analytic, $\kappa'(i)=1$. Associated to the conformal
factor $\Theta$ there is a 1-form $d_\mu$ from which the Weyl
connection can be obtained. In spinorial terms, one has that for conformally flat data
\begin{equation}
d_{AA'}= \frac{1}{\sqrt{2}}\tau_{AA'} \partial_\tau{\Theta} -\updn{\tau}{B}{A'} d_{AB}, \quad
d_{AB}= 2\rho \left( \frac{Ux_{AB}-\rho D_{AB}U-\rho^2D_{AB}W}{(U+\rho W)^3} \right). \label{1-form}
\end{equation}
 The function $\kappa$ in the conformal factor $\Theta$,
induces a scaling $\delta_A \mapsto \kappa^{1/2} \delta_A$ of the spin
frame. Accordingly, one considers the bundle manifold
$\mathcal{C}_{a,\kappa}=\kappa^{1/2}\mathcal{C}_a$ of scaled spinor
frames. Using $\mathcal{C}_{a,\kappa}$ one defines the set
\[
\mathcal{M}_{a,\kappa}=\left\{ (\tau,q) \big |  q\in
\mathcal{C}_{a,\kappa}, -\frac{\omega(q)}{\kappa(q)} \leq \tau \leq \frac{\omega(q)}{\kappa(q)} \right\},
\]
which, assuming that the congruence of null
geodesics and the relevant fields extend adequately, can be identified with the
development of $\mathcal{B}_a$ up to null infinity ---that is, the region of
spacetime near null and spatial infinity. In addition, one defines the sets:
\begin{subequations}
\begin{eqnarray*}
&& \mathcal{I}=\big \{(\tau,q)\in \mathcal{M}_{a,\kappa} \;\big|\; \rho(q)=0, \;|\tau|<1\big\}, \\
&& \mathcal{I}^\pm= \big \{ (\tau,q)\in \mathcal{M}_{a,\kappa} \;\big |\; \rho(q)=0, \;\tau=\pm1\big \}, \\
&& \mathscr{I}^\pm=\left\{ (\tau,q)\in \mathcal{M}_{a,\kappa} \;\big | \; \rho(q)>0, \;\; \tau=\pm \frac{\omega(q)}{\kappa(q)} \right\},
\end{eqnarray*}
\end{subequations}
which will be referred to as, respectively, the \emph{cylinder at
spatial infinity}, the \emph{critical sets} and \emph{future and past
null infinity}. In order to coordinatise the hypersurfaces of constant
parameter $\tau$, one extends the coordinates $(\rho,\updn{t}{A}{B})$
off $\mathcal{C}_{a,\kappa}$ by requiring them to be constant along
the conformal geodesics ---i.e.  one has a system of \emph{conformal
Gaussian coordinates}. 

\bigskip
\noindent
\textbf{Remark.} For the purpose of the analysis carried out in this
article it turns out that the most convenient choice of the function
$\kappa$ in the conformal factor $\Theta$ of equation (\ref{Theta}) is
\[
\kappa =\rho.
\]
This leads to considerable simplifications in all the relevant
expressions. From this point onwards, this choice will always be
assumed.

\subsection{The evolution equations}
On the manifold $\mathcal{M}_{a,\kappa}$ it is possible to introduce a
calculus based on the derivatives $\partial_\tau$ and $\partial_\rho$
and on the operators $X_+$, $X_-$ and $X$. The operators
$\partial_\rho$, $X_+$, $X_-$ and $X$ originally defined on
$\mathcal{C}_{a}$ can be suitably extended to the rest of the manifold
by requiring them to commute with the vector field $\partial_\tau$. In
order to derive the propagation equations, a frame $c_{AA'}$ and the
associated spin connection coefficients $\Gamma_{AA'BC}$ of the Weyl
connection $\nabla$ will be used. The gravitational field is, in
addition, described by the spinorial counterparts of the Schouten
tensor of the Weyl connection, $\Theta_{AA'BB'}$, and of the rescaled
Weyl tensor, $\phi_{ABCD}$ ---see \cite{Fri95,Fri98a,Fri04}. Let
$\phi_i \equiv \phi_{(ABCD)_i}.$ In the present gauge, the information
of the spacetime spinors $c^\mu_{AA'}$, $\Theta_{AA'BB'}$ and
$\Gamma_{AA'BC}$ is encoded, respectively, in space spinors
$c^\mu_{AB}$, $\Theta_{ABCD}$ and $\Gamma_{ABCD}$ ---see \cite{Fri04}
for the detailed relation between the two sets of spinors.

In what follows, we will arrange the independent components of the
spinorial fields $c^\mu_{AB}$, $\Theta_{ABCD}$ and $\Gamma_{ABCD}$ in
the vector $\bm \upsilon$, and those of $\phi_{ABCD}$ in the vector $\bm
\phi$.  Suitable field equations for the fields contained in
${\bm\upsilon}$ and ${\bm\phi}$ can be obtained from the first and
second Cartan structure equations, the Bianchi identity of the
conformally rescaled spacetime $(\mathcal{M},g_{\mu\nu})$, and the
Bianchi identity of the physical spacetime
$(\tilde{\mathcal{M}},\tilde{g}_{\mu\nu})$ ---see
e.g. \cite{Fri03a,Fri04} for details. A solution to the equations thus
constructed implies a solution to the vacuum Einstein field
equations. The vector ${\bm \upsilon}$ has 45 independent complex components,
while ${\bm \phi}$ has 5 independent complex components.
Using the F-gauge it can be shown that the extended conformal field
equations given in \cite{Fri98a} imply the following evolution
equations for the unknowns ${\bm \upsilon}$
\begin{equation}
\label{upsilon:propagation}
\partial_\tau {\bm\upsilon} = \mathbf{K}\cdot{\bm\upsilon} + \mathbf{Q}({\bm\upsilon},{\bm\upsilon})+ \mathbf{L}\cdot{\bm\phi},
\end{equation}
where $\mathbf{K}$ and $\mathbf{Q}$ denote, respectively,  linear and
 quadratic constant matrix-valued functions with constant entries, and
$\mathbf{L}$ is a linear matrix-valued function with coefficients
depending on the coordinates and such that
$\mathbf{L}|_{\rho=0}=0$. For the unknowns ${\bm\phi}$, the Bianchi
identity $\nabla^{AA'}\phi_{ABCD}=0$ implies, respectively, a set of
propagation and constraint equations of the form:
\begin{subequations}
\begin{eqnarray}
&&\mathbf{E}\cdot \partial_\tau {\bm\phi} + \mathbf{A}^\mu (\mathbf{c})\partial_\mu {\bm\phi}
=\mathbf{F}({\bm\Gamma})\cdot {\bm\phi}, \label{bianchi:propagation}\\
&&\mathbf{B}^\mu (\mathbf{c}) \cdot\partial_{\mu}{\bm\phi} =\mathbf{G}({\bm\Gamma}),
\label{bianchi:constraint}
\end{eqnarray}
\end{subequations}
where $\mathbf{E}$ denotes the $5\times 5$ unit matrix,
$\mathbf{A}^\mu (\mathbf{c})$ and $\mathbf{B}^\mu (\mathbf{c})$,
$\mu=0,\ldots,3$, are, respectively, $5\times 5$ and $3\times 5$
constant matrix-valued linear functions depending on the frame field
coefficients $c^\mu_{AB}$. Finally, $\mathbf{F}({\bm\Gamma})$ and
$\mathbf{G}({\bm\Gamma})$ denote constant matrix-valued linear
functions of the connection coefficients $\Gamma_{ABCD}$.
 
\subsection{Initial data for the evolution equations}
\label{Section:InitialDataFormulae}
For quick reference, we include here the expressions for the initial data
for the  curvature components appearing in equations
\eqref{upsilon:propagation}-\eqref{bianchi:constraint}. These
expressions can be deduced from the conformal constraint equations
---see \cite{Fri98a}.  On $\mathcal{C}_{a,\kappa}$ one has that:
\begin{subequations}
\begin{eqnarray}
&& \Theta_{ABCD} = - \frac{\kappa^2}{\Omega} D_{(AB} D_{CD)} \Omega +
\frac{1}{12}\kappa^2 r h_{ABCD}, \label{Ricci:Data}\\ 
&& \phi_{ABCD} = \frac{\kappa^3}{\Omega^2}\left( D_{(AB} D_{CD)} +
  \Omega s_{ABCD} \right), \label{Weyl:Data}
\end{eqnarray}
\end{subequations}
where, $\Omega$ denotes the lift to $\mathcal{C}_{a,\kappa}$ of the
conformal factor obtained from equation \eqref{theta:perturbed}. The
expressions for the frame and connection coefficients do not involve
the function $\Omega$, and thus their detailed expressions
will not be required here.

\subsection{The transport equations} 
Consider now the system
(\ref{upsilon:propagation})-(\ref{bianchi:propagation}) with data
given on $\mathcal{C}_{a,\kappa}$, and let be given a neighbourhood $\mathcal{W}$
of $\mathcal{C}_{a,\kappa}$ in $\mathcal{M}_{a,\kappa}$ on which a
unique smooth solution of the Cauchy problem exists. From the point
of view of the propagation equations, the subset $\mathcal{W}\cap
\mathcal{I}$ is a regular hypersurface. Introduce the notation
\[
{\bm\upsilon}^{(0)}\equiv {\bm \upsilon}|_{\mathcal{W}\cap \mathcal{I}}, \quad 
{\bm \phi}^{(0)} \equiv {\bm \phi}|_{\mathcal{W}\cap \mathcal{I}}.
\]
 Due to the
property $\mathbf{L}|_{\rho=0}=0$, equations (\ref{upsilon:propagation})
decouple from equations (\ref{bianchi:propagation}) and can be
integrated on $\mathcal{W}\cap \mathcal{I}$ using the observation that
the restriction of the initial data to $\mathcal{I}^0$ coincides with
Minkowski data. The solutions thus obtained extend analytically to the
whole of $\mathcal{I}$ and in particular to the critical sets
$\mathcal{I}^\pm$. The set $\mathcal{I}$ turns out to be a \emph{total
characteristic} of the system
(\ref{upsilon:propagation})-(\ref{bianchi:propagation}) in the sense
that the whole system reduces to an interior system on $\mathcal{I}$.
Moreover, the constraint equations (\ref{bianchi:constraint}) also
reduce to an interior system on $\mathcal{I}$. As mentioned before,
this feature is a consequence of the fact that the unphysical metric
$g_{\mu\nu}$ determined by a solution to the conformal field equations
degenerates as $\rho\rightarrow 0$.

\medskip
A crucial
structural property is that
\begin{equation}
\label{degeneracy}
\mathbf{E} + \mathbf{A}^0(\mathbf{c})= \mbox{diag}(1+\tau,1,1,1,1-\tau) \quad \mbox{ on } \mathcal{I},
\end{equation}
so that the matrix $\mathbf{E} + \mathbf{A}^0(\mathbf{c})$ which is
positive definite for $|\tau|<1$ degenerates at
$\mathcal{I}^\pm$. Understanding the effects of this degeneracy is the
main motivation behind the analysis in the present article and that in
\cite{Val10}.

\medskip
In order to analyse the effects of the degeneracy of the matrix
$\mathbf{E} + \mathbf{A}^0(\mathbf{c})$ we will consider the hierarchy
of transport equations which is obtained by repeated application of
the operator $\partial_\rho$ to equations
\eqref{upsilon:propagation}-\eqref{bianchi:constraint} and then
evaluation on $\mathcal{I}$. By this procedure one obtains interior
systems for the quantities
\[
{\bm\upsilon}^{(p)}\equiv \partial^{(p)}_\rho {\bm\upsilon}|_{\mathcal{I}}, \quad {\bm\phi}^{(p)} \equiv \partial^{(p)}_\rho{\bm\phi}|_{\mathcal{I}}. 
\]
The transport equations take the following form for $p\geq 1$:
\begin{subequations}
\begin{eqnarray}
&&\hspace{-8mm}\partial_\tau {\bm\upsilon}^{(p)} = \mathbf{K}\cdot{\bm\upsilon}^{(p)}+\sum_{j=0}^{p}\binom{p}{j}\left(\mathbf{Q}({\bm\upsilon}^{(j)},{\bm\upsilon}^{(p-j)})+ \mathbf{L}^{(j)}\cdot{\bm\phi}^{(p-j)}\right), \label{upsilon:transport} \\
&&\hspace{-8mm}\mathbf{E}\cdot\partial_\tau{\bm\phi}^{(p)} +
(\mathbf{A}^\mu)^{(0)}\cdot\partial_\mu{\bm\phi}^{(p)}= \mathbf{F}({\bm\Gamma}^{(0)})\cdot{\bm\phi}^{(p)} \nonumber \\
&& \hspace{4.2cm} +\sum_{j=1}^p
\binom{p}{j}\left(\mathbf{F}({\bm\Gamma}^{(j)})\cdot{\bm\phi}^{(p-j)}-(\mathbf{A}^\mu)^{(j)}\cdot\partial_\mu
 {\bm\phi}^{(p-j)}\right), \label{pbianchi:transport} \\
&&\hspace{-8mm} (\mathbf{B}^\mu)^{(0)}\cdot\partial_\mu{\bm\phi}^{(p)} =\mathbf{G}({\bm\Gamma}^{(0)})\cdot \phi^{(p)}  +\sum_{j=1}^p\binom{p}{j}\left(\mathbf{G}({\bm\Gamma}^{(j)})\cdot{\bm\phi}^{(p-j)}-(\mathbf{B}^\mu)^{(j)}\cdot\partial_\mu {\bm\phi}^{(p-j)}\right). \label{cbianchi:transport}
\end{eqnarray}
\end{subequations}
Note that the non-homogeneous terms in the equations
(\ref{upsilon:transport})-(\ref{cbianchi:transport}) depend on
${\bm\upsilon}^{(p')}$, ${\bm\phi}^{(p')}$ for $0\leq p' <p$. Thus, if
their values are known, then equations 
(\ref{upsilon:transport})-(\ref{pbianchi:transport}) constitutes an
interior system of linear equations for ${\bm\upsilon}^{(p)}$ and
${\bm\phi}^{(p)}$. The principal part of these equations is universal,
in the sense that it is independent of the value of $p$. If the
initial data on $\mathcal{C}_{a,\kappa}$ for the system
(\ref{upsilon:propagation})-(\ref{bianchi:propagation}) is analytic
---as it is the case in the present analysis--- then suitable initial
data for the interior system
(\ref{upsilon:transport})-(\ref{pbianchi:transport}) can be obtained
by repeated $\rho$-differentiation and evaluation on $\mathcal{I}^0$.

\bigskip
The language of jets is natural in the present context. For
$p=0,1,2,\ldots$ and any sufficiently smooth (possibly vector valued)
function $f$ defined on $\mathcal{M}_{a,\kappa}$, the sets of
functions $\{f^{(0)}$, $f^{(1)}$, \ldots, $f^{(p)}\}$ on $\mathcal{I}$
will be denoted by $J^{(p)}_\mathcal{I}[f]$ and referred to as
\emph{the jet order $p$ of $f$ on} $\mathcal{I}$ ---and similarly with
$\mathcal{I}$ replaced by $\mathcal{I}^0$. If $\mathbf{u}=({\bm\upsilon},{\bm\phi})$ is a
solution to the equations (\ref{upsilon:transport}),
(\ref{pbianchi:transport}) and (\ref{cbianchi:transport}), we refer to
$J^{(p)}_\mathcal{I}[\mathbf{u}]$ as to \emph{the s-jet of $\mathbf{u}$ of order $p$}
and to the data $J^{(p)}_{\mathcal{I}^0}[\mathbf{u}]$ as to \emph{the d-jet of
$\mathbf{u}$ of order $p$}. An s-jet $J^{(p)}_\mathcal{I}[\mathbf{u}]$ of order $p$ will
be called \emph{regular} on $ \overline{\mathcal{I}}\equiv \mathcal{I} \cup
\mathcal{I}^+ \cup \mathcal{I}^-$ if the corresponding functions
extend smoothly to the critical sets $\mathcal{I}^\pm$.

\medskip
We note the following  result of \cite{Val10} which will be used in the sequel.
\begin{lemma}
\label{Lemma:LogsLeadToLogs}
  If the s-jets $J^{(p-1)}_{\mathcal{I}}[{\bm\upsilon}]$ and
  $J^{(p-1)}_{\mathcal{I}}[{\bm\phi}]$ have polynomial dependence in $\tau$
  for some $p\geq 1$, then $J^{(p)}_{\mathcal{I}}[{\bm\upsilon}]$ has also
  polynomial dependence in $\tau$.
\end{lemma}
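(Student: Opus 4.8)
The plan is to turn the transport equation \eqref{upsilon:transport} for ${\bm\upsilon}^{(p)}$ into a linear inhomogeneous ordinary differential equation in $\tau$ with polynomial coefficients and polynomial source, and then to invoke the fact that its homogeneous part propagates polynomials to polynomials. First I would isolate on the right-hand side of \eqref{upsilon:transport} the terms that carry an order-$p$ unknown. Since $\mathbf{L}^{(0)}=0$ (as $\mathbf{L}|_{\rho=0}=0$ and $\mathcal{I}\subset\{\rho=0\}$), the contribution $\binom{p}{0}\mathbf{L}^{(0)}\cdot{\bm\phi}^{(p)}$ vanishes, so ${\bm\phi}^{(p)}$ does not occur at all on the right-hand side; this is precisely why the conclusion concerns only ${\bm\upsilon}^{(p)}$. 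The only terms involving ${\bm\upsilon}^{(p)}$ are then $\mathbf{K}\cdot{\bm\upsilon}^{(p)}$ together with the $j=0$ and $j=p$ contributions $\mathbf{Q}({\bm\upsilon}^{(0)},{\bm\upsilon}^{(p)})$ and $\mathbf{Q}({\bm\upsilon}^{(p)},{\bm\upsilon}^{(0)})$; all remaining $\mathbf{Q}$-terms involve only ${\bm\upsilon}^{(j)}$, ${\bm\upsilon}^{(p-j)}$ with $1\le j\le p-1$, hence lie in $J^{(p-1)}_{\mathcal{I}}[{\bm\upsilon}]$, while the surviving $\mathbf{L}$-terms $\sum_{j=1}^{p}\binom{p}{j}\mathbf{L}^{(j)}\cdot{\bm\phi}^{(p-j)}$ involve only ${\bm\phi}^{(p-j)}$ with $1\le j\le p$, hence lie in $J^{(p-1)}_{\mathcal{I}}[{\bm\phi}]$. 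Collecting terms, \eqref{upsilon:transport} reads
\begin{equation*}
\partial_\tau{\bm\upsilon}^{(p)} = \mathbf{M}(\tau)\cdot{\bm\upsilon}^{(p)} + \mathbf{b}(\tau), \qquad \mathbf{M}(\tau) = \mathbf{K} + \mathbf{Q}({\bm\upsilon}^{(0)},\,\cdot\,) + \mathbf{Q}(\,\cdot\,,{\bm\upsilon}^{(0)}),
\end{equation*}
where $\mathbf{b}(\tau)$ is assembled from the s-jets $J^{(p-1)}_{\mathcal{I}}[{\bm\upsilon}]$, $J^{(p-1)}_{\mathcal{I}}[{\bm\phi}]$ and the coefficients $\mathbf{L}^{(j)}$.

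Next I would observe that every ingredient of this ODE is polynomial in $\tau$: by hypothesis $J^{(p-1)}_{\mathcal{I}}[{\bm\upsilon}]$ and $J^{(p-1)}_{\mathcal{I}}[{\bm\phi}]$ depend polynomially on $\tau$ ---in particular ${\bm\upsilon}^{(0)}$ does, so $\mathbf{M}(\tau)$ is polynomial--- and the explicit structure of the equations recorded in \cite{Fri98a,Val10} shows that the coefficients $\mathbf{L}^{(j)}$ are polynomial in $\tau$ as well, whence $\mathbf{b}(\tau)$ is polynomial. The initial datum ${\bm\upsilon}^{(p)}|_{\mathcal{I}^0}$ is independent of $\tau$, being part of the d-jet $J^{(p)}_{\mathcal{I}^0}[\mathbf{u}]$. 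The conclusion then rests on the structural property, established in \cite{Fri98a} and exploited systematically through computer algebra in \cite{Val04a,Val10}, that the homogeneous operator $\partial_\tau-\mathbf{M}(\tau)$ ---which is \emph{universal}, i.e. the same for every $p$--- admits a fundamental matrix $\Phi(\tau)$ which is polynomial in $\tau$ with polynomial inverse; equivalently, integrating $\partial_\tau{\bm\upsilon}^{(p)}=\mathbf{M}(\tau)\cdot{\bm\upsilon}^{(p)}+\mathbf{b}(\tau)$ from $\tau=0$ with $\tau$-independent initial value and polynomial $\mathbf{b}$ yields a polynomial. By Duhamel's formula one then has ${\bm\upsilon}^{(p)}(\tau)=\Phi(\tau)\big[\,{\bm\upsilon}^{(p)}|_{\mathcal{I}^0}+\int_0^\tau\Phi(s)^{-1}\mathbf{b}(s)\,ds\,\big]$, and since the integrand is a polynomial in $s$ the bracket is a polynomial in $\tau$; hence $J^{(p)}_{\mathcal{I}}[{\bm\upsilon}]=J^{(p-1)}_{\mathcal{I}}[{\bm\upsilon}]\cup\{{\bm\upsilon}^{(p)}\}$ depends polynomially on $\tau$, as claimed.

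The hard part is the last input: justifying that the homogeneous $\upsilon$-transport operator on $\mathcal{I}$ sends polynomials to polynomials. A generic linear system with polynomial coefficients integrates to functions containing exponentials or, in resonant situations, logarithms, and it is only the particular algebraic form of $\mathbf{K}$, $\mathbf{Q}$ and of the Minkowskian zeroth-order solution ${\bm\upsilon}^{(0)}$ ---the very same structure that, in the ${\bm\phi}$-sector, is responsible for the logarithmic obstructions at the critical sets--- that forces $\Phi(\tau)$ and $\Phi(\tau)^{-1}$ to be polynomial. In the present argument this fact is not re-derived from scratch; rather one appeals to the analysis of \cite{Fri98a} and to the explicit polynomial-in-$\tau$ solutions of the reduced ordinary differential equations obtained in \cite{Val10} at every order. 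Granted that, the proof above is essentially bookkeeping: identifying which terms in \eqref{upsilon:transport} are of order exactly $p$ and which are of lower order, and noting that the latter are covered by the hypothesis.
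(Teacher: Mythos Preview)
Your reconstruction is sound, and in fact the paper does not supply its own proof of this lemma at all: it is quoted verbatim as ``the following result of \cite{Val10}'' and used as a black box. So there is no in-paper argument to compare against; what you have written is essentially the argument one finds in \cite{Val10}, namely: (i) use $\mathbf{L}^{(0)}=\mathbf{L}|_{\rho=0}=0$ to decouple ${\bm\phi}^{(p)}$ from the $\upsilon$-equation at order $p$; (ii) recognise the remaining equation as a linear ODE in $\tau$ for ${\bm\upsilon}^{(p)}$ whose homogeneous part is universal (built only from $\mathbf{K}$, $\mathbf{Q}$ and the Minkowskian ${\bm\upsilon}^{(0)}$) and whose source is polynomial by hypothesis; (iii) invoke the fact that the fundamental matrix of the homogeneous $\upsilon$-system on $\mathcal{I}$ is polynomial with polynomial inverse.

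You correctly flag (iii) as the substantive input and defer it to \cite{Fri98a,Val10}; that is also exactly what the present paper does. One small sharpening: you do not actually need ${\bm\upsilon}^{(0)}$ to be merely polynomial---on $\mathcal{I}$ it is the explicit Minkowski background, so $\mathbf{M}(\tau)$ is known in closed form, which is what makes the polynomiality of $\Phi(\tau)$ and $\Phi(\tau)^{-1}$ checkable rather than miraculous. Likewise, the polynomiality in $\tau$ of $\mathbf{L}^{(j)}$ follows from the explicit quadratic-in-$\tau$ form \eqref{Theta} of $\Theta$ and the expression \eqref{1-form} for $d_{AB}$, not from a general structural principle; it would strengthen the write-up to say so.
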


\subsection{Decomposition in spherical harmonics}

Our analysis requires decomposing the entries of the vector unknowns
${\bm\upsilon}^{(p)}$ and ${\bm\phi}^{(p)}$ in terms of the functions
$\TT{i}{j}{k}$. Following the discussion in \cite{Val10}, given a
vector $\mathbf{u}^{(p)}=(u_1^{(p)},\ldots,u^{(p)}_N)$ with entries in
$\Real\times \Sphere^2$ and non-negative integers $q$ and
$k=0,\ldots,2q$ one defines the \emph{sector}
$\mathfrak{S}_{q,k}[\mathbf{u}^{(p)}]$ as the collection of coefficients
\[
u_{i,p;2q,k} \equiv (2q+1)\int_{SU(2)} \bar{u}_i^{(p)} \TT{2q}{k}{q-s}\mbox{d}\mu,
\]
where $s$ is the spin-weight of $u^{(p)}_i$, $\mbox{d}\mu$ is the
Haar measure of $SU(2)$ and $\bar{u}_i^{(p)}$ denotes the
complex conjugate  of $u_i^{(p)}$. Furthermore, one defines
\[
\mathfrak{S}_q[\mathbf{u}^{(p)}] \equiv \bigcup^{2q}_{k=0} \mathfrak{S}_{q,k}[\mathbf{u}^{(p)}]. 
\]
With this notation, a sector will be said to vanish if
$\mathfrak{S}_q[\mathbf{u}^{(p)}]$. For convenience in the discussion,
we introduce the following terminology:
\begin{itemize}
\item A coefficient $u_{i,p;2q,k}$ will be said \emph{static} if it
depends only on the mass $m$ and on the terms in the expansions of
$\mathring{U}$ and $\mathring{W}$. We will write
$u_{i,p;2q,k}=\mathring{u}_{i,p;2q,k}$.

\item The coefficient $u_{i,p;2q,k}$ will
be said to be \emph{Schwarzschildean} if it only depends on $m$. In
our gauge, Schwarzschildean terms can only arise in the sectors
$\mathcal{S}_0$.

\item A coefficient $u_{i,p;2q,k}$ will be said to be
\emph{non-static} if it depends on the terms in the
expansion of $\breve{W}$. We will write
$u_{i,p;2q,k}=\breve{u}_{i,p;2q,k}$.

\item A non-static coefficient $\breve{u}_{i,p;2q,k}$ will be said to
be a \emph{deviation from Schwarzschild} if it only depends on $m$ and
on the
coefficients in the leading term of $\breve{W}$
---i.e. $\breve{w}_{\pb+1,2(\pb+1),k}$, $k=0,\ldots,2(\pb+1)$. 

\end{itemize}

\medskip
The structure of the transport equations for the elements of the
various sectors $\mathfrak{S}_{q,k}[{\bm\upsilon}^{(p)}]$ and
$\mathfrak{S}_{q,k}[{\bm\phi}^{(p)}]$ has been discussed in detail
elsewhere ---see \cite{Fri98a,Fri04,Val10}. In particular, the
degeneracy at the critical sets $\mathcal{I}^\pm$ implied by
expression \eqref{degeneracy} is only relevant for sectors with $p\geq
2$. Thus, in the sequel it will always be assumed that $p\geq 2$. If
we denote by $'$ differentiation with respect to $\tau$, the
relevant transport equations are of the form:
\begin{eqnarray*}
&& {\bm\upsilon}'_{p;2q,k} = \mathbf{K} \cdot {\bm\upsilon}_{p;2q,k} + 2\mathbf{Q}({\bm\upsilon}_{0;0,0},{\bm\upsilon}_{p;2q,k}) + \mathbf{h}_{p;2q,k}, \\
&& (\mathbf{E}+\mathbf{A})\cdot {\bm\phi}'_{p;2q,k} +
\mathbf{A}_{p;2q,k}\cdot {\bm\phi}_{p;2q,k}=\mathbf{f}_{p;2q,k}, \\
&& \mathbf{B} \cdot {\bm\phi}'_{p;2q,k} + \mathbf{B}_{p;2q,k} \cdot {\bm\phi}_{p;2q,k}=\mathbf{g}_{p;2q,k},
\end{eqnarray*}
where
\begin{subequations}
\begin{eqnarray}
&& \hspace{-1.5cm}\mathbf{A} \equiv
\left(
\begin{array}{ccccc}
 1+\tau & 0 & 0 & 0 & 0  \\
0 & 1 & 0 & 0 & 0 \\
0 & 0 & 1 & 0 & 0 \\
0 & 0 & 0 & 1 & 0 \\
0 & 0 & 0 & 0 & 1-\tau
\end{array}
\right),
\quad 
\mathbf{A}_{p;2q} \equiv
\left(
\begin{array}{ccccc}
-(p-2) & \displaystyle \frac{1}{4}\beta_1 & 0 & 0 & 0 \\
-2\beta_1 & 1 & \displaystyle \frac{1}{3} \beta_2 & 0 & 0 \\
0 & \displaystyle \frac{3}{4}\beta_2 & 0 & \displaystyle
\frac{3}{4}\beta_2 & 0 \\
0 & 0 & -\displaystyle\frac{1}{3}\beta_2 & -1 & 2\beta_1 \\
0 & 0 & 0 & -\displaystyle\frac{1}{4}\beta_1 & p-2
\end{array}
\right),
\label{matrixA}\\
&& 
\hspace{-1.5cm}\mathbf{B} \equiv
\left(
\begin{array}{ccccc}
0& \tau & 0 & 0 & 0 \\
0& 0 & \tau & 0  & 0 \\
0 & 0 & 0 & \tau & 0
\end{array}
\right),
\quad 
\mathbf{B}_{p;2q} \equiv 
\left(
\begin{array}{ccccc}
2\beta_1 & -p & \displaystyle \frac{1}{3} \beta_2 & 0 & 0 \\
0 & \displaystyle\frac{3}{4}\beta_2 & -p
&\displaystyle\frac{3}{4}\beta_2 & 0 \\
0 & 0 & \displaystyle \frac{1}{3} \beta_2 & -p & 2\beta_1
\end{array}
\right) \label{matrixB}
\end{eqnarray}
\end{subequations}
with
\[
\beta_1 \equiv \sqrt{(q-1)(q+2)},\quad \beta_2\equiv\sqrt{q(q+1)}.
\]
The terms $\mathbf{f}_{p;2q,k}$, $\mathbf{g}_{p;2q,k}$,
$\mathbf{h}_{p;2q,k}$ are calculated from the left hand sides of
equations \eqref{upsilon:transport}-\eqref{cbianchi:transport} using
formula \eqref{Clebsch-Gordan}. Detailed expressions for
certain values of the multiindex $(p;q,k)$ will be given in the
sequel.

\section{The cylinder at spatial infinity for static spacetimes}
\label{Section:StaticCylinder}

As mentioned in the introduction, static initial data sets are
expected to play a privileged role among the class of time symmetric
initial data sets with a development which is asymptotically
simple. This point of view is supported by the following result proved
in \cite{Fri04} showing that the construction of the cylinder at
spatial infinity described in the previous section is for static
spacetimes as smooth as it is to be expected.

\begin{theorem}
For static vacuum solutions which are asymptotically flat, the
construction of the cylinder at spatial infinity is analytic in the
sense that there exists an $a>0$ and a frame for which all the fields,
including the rescaled conformal Weyl tensor extend to analytic fields
on some neighbourhood $\mathcal{N}$ of $\mathcal{I}$ in
$\mathcal{M}_{\kappa,a}$. This statement depends neither on the
particular choice of the conformal gauge used to prescribe the
(analytic) free datum on $\mathcal{S}$ nor on the choice of $\kappa$.
\end{theorem}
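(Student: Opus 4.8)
The plan is to settle the claim at the level of the initial data, then to propagate analyticity off $\mathcal{C}_{a,\kappa}$, using throughout the special structure of static data. First, recall ---following \cite{Bei91b,Fri04,Fri07} and as summarised in Section \ref{Section:Data}--- that an asymptotically flat static vacuum solution admits near spatial infinity a conformal representation in which the conformal $3$-metric $\mathring{h}_{ij}$ and the functions $\mathring{U}$, $\mathring{W}$ solving the Yamabe equation \eqref{YamabeEquation} are analytic in suitable normal coordinates centred at $i$. Hence the lifts to $\mathcal{C}_a$ of all the initial-data fields feeding the evolution system are analytic: the frame and connection coefficients $c^\mu_{AB}$, $\Gamma_{ABCD}$ because their $1/\rho$ parts are universal and the remainders $\check{c}^\pm_{AB}$, $\check{\gamma}_{ABCD}$ are analytic and $\mathcal{O}(\rho)$, and the curvature data $\Theta_{ABCD}$, $\phi_{ABCD}$ through \eqref{Ricci:Data}--\eqref{Weyl:Data} together with the analyticity of $\Omega=\vartheta^{-2}$ and of $s_{ABCD}$. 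By Proposition \ref{Proposition:RegularityCondition} the Cotton--Bach tensor $\mathring{b}_{ij}$ moreover satisfies the regularity condition \eqref{RegularityCondition} to \emph{all} orders.

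Second, I would propagate. The most economical route is to exploit the \emph{existence} of an analytic static spacetime: the Beig--Simon conformal completion of $(\tilde{\mathcal{M}},\tilde{g}_{\mu\nu})$ across $i^0$ is analytic on a full neighbourhood of $i^0$, and the cylinder-at-spatial-infinity picture is obtained from it by three operations that each preserve analyticity ---(i) the blow-up of $i^0$ to $\mathcal{I}^0$, realised analytically through the bundle $\mathcal{C}_a$ and the coordinates $(\rho,\updn{t}{A}{B})$; (ii) the passage to the F-gauge of Section \ref{section:Fgauge}, whose timelike conformal geodesics solve an analytic ODE system with analytic coefficients (the Schouten tensor of the analytically extended Weyl structure) and analytic data, so that the canonical conformal factor \eqref{Theta} and the $1$-form \eqref{1-form} are analytic; and (iii) the frame rescaling by $\kappa^{1/2}$ with $\kappa=\kappa'\rho$, $\kappa'$ analytic. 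Composing these analytic maps, the complete set $\bm\upsilon$, $\bm\phi$ ---in particular the rescaled conformal Weyl spinor--- extends analytically to a neighbourhood $\mathcal{N}$ of $\mathcal{I}$ in $\mathcal{M}_{a,\kappa}$; since $\overline{\mathcal{I}}=\mathcal{I}\cup\mathcal{I}^+\cup\mathcal{I}^-$ is compact and sits at $\rho=0$, shrinking $a$ arranges $\mathcal{N}\supset\mathcal{I}^+\cup\mathcal{I}^-$.

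The main obstacle ---and the point where staticity is indispensable--- is controlling the behaviour at the critical sets $\mathcal{I}^\pm$, where the matrix $\mathbf{E}+\mathbf{A}^0(\mathbf{c})$ in \eqref{degeneracy} degenerates and where, for generic asymptotically flat data, solutions of the transport equations \eqref{upsilon:transport}--\eqref{cbianchi:transport} acquire logarithmic terms in $\tau$. Here the all-order regularity condition \eqref{RegularityCondition} forces, via the expansion-type lemma of Section \ref{Section:CylinderAtSpatialInfinity}, the $\rho$-expansions of $r$, $W$, $s_{ABCD}$, $\check{\gamma}_{ABCD}$, $\check{c}^\pm_{AB}$ and $U-1$ to be of the minimal type at every order; feeding this into the transport hierarchy and applying Lemma \ref{Lemma:LogsLeadToLogs} inductively shows that each s-jet $J^{(p)}_{\mathcal{I}}[\bm\upsilon]$, $J^{(p)}_{\mathcal{I}}[\bm\phi]$ depends polynomially on $\tau$ and is regular on $\overline{\mathcal{I}}$. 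Combined with the a priori analyticity of the static solution at $i^0$, this upgrades the regularity to analyticity on $\mathcal{N}$. I expect verifying that the conformal geodesic congruence and all fields derived from it actually reach $\mathcal{I}^\pm$ within the analyticity domain ---and meshing this with the transport-equation bookkeeping--- to be the technically delicate part.

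Finally, the gauge-independence is immediate: a rescaling $h_{ij}\mapsto\theta^4 h_{ij}$ with $\theta$ analytic and $\theta(i)=1$, or a change of the analytic factor $\kappa'$ with $\kappa'(i)=1$, acts on the frame, connection and curvature fields by analytic transformations (frame rescalings together with the transformation laws of a Weyl connection and of the rescaled Weyl spinor), hence carries an analytic solution of the construction to an analytic solution; the conclusion is therefore insensitive to both choices.
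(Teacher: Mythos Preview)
The paper does not give its own proof of this theorem: it is quoted as a result ``proved in \cite{Fri04}'' and then used as a black box (with the polynomial-in-$\tau$ consequence extracted in Corollary~\ref{StaticCorollary}). So there is no in-paper proof to compare against; your proposal is to be measured against Friedrich's argument in \cite{Fri04}.

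Your second paragraph is on the right track and is, in spirit, Friedrich's route: one starts from the analytic conformal extension of the static spacetime across $i^0$ (the Beig--Simon type completion, elaborated in \cite{Fri04,Fri07}) and then checks that the passage to the F-gauge and the blow-up/rescaling are analytic operations. The serious work in \cite{Fri04} lies exactly where you flag it as ``technically delicate'': showing that the conformal-geodesic congruence, the canonical factor $\Theta$, and the $1$-form $d_{AA'}$ extend analytically across the critical sets, and that the frame fields built from them do too. Your sketch correctly identifies the ingredients but does not supply this step.

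Your third paragraph, however, contains a genuine gap. You propose to deduce polynomial $\tau$-dependence of the s-jets by combining the all-order regularity condition \eqref{RegularityCondition} with Lemma~\ref{Lemma:LogsLeadToLogs}. This cannot close an induction: Lemma~\ref{Lemma:LogsLeadToLogs} only controls $J^{(p)}_{\mathcal{I}}[\bm\upsilon]$ given polynomial lower-order jets; it says nothing about $J^{(p)}_{\mathcal{I}}[\bm\phi]$, and it is precisely the Bianchi transport system for $\bm\phi$ that can generate $\ln(1\pm\tau)$ terms. The regularity condition \eqref{RegularityCondition} is necessary (\cite{Fri98a}) but \emph{not} sufficient to exclude these logarithms --- the entire thrust of \cite{Val04a,Val10} and of the present paper is that further obstructions appear at higher orders even when \eqref{RegularityCondition} holds. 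In the paper's logic the implication runs the other way: Corollary~\ref{StaticCorollary} obtains polynomiality \emph{from} the theorem, not as a means to prove it. So the transport-equation route cannot stand on its own; the analyticity has to come from the static-spacetime side, as in your second paragraph.
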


Essential for our analysis is the following corollary:


\begin{corollary} \label{StaticCorollary}
The solutions to the transport equations
\eqref{upsilon:transport}-\eqref{cbianchi:transport} for time
symmetric static data given in the cn-gauge extend analytically
through $\mathcal{I}^\pm$ for all orders $p$. Moreover, the solutions
to the transport equations are polynomial in $\tau$.
\end{corollary}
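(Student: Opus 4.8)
The plan is to derive Corollary \ref{StaticCorollary} as an immediate consequence of the preceding Theorem together with the structure of the transport equations \eqref{upsilon:transport}--\eqref{cbianchi:transport}, rather than by a direct integration. The Theorem asserts that for static vacuum data in any conformal gauge --- in particular the cn-gauge --- there is a frame and an $a>0$ such that the full solution of the conformal field equations, including $\phi_{ABCD}$, extends analytically to a neighbourhood $\mathcal{N}$ of $\mathcal{I}$ in $\mathcal{M}_{\kappa,a}$. First I would observe that the s-jets $J^{(p)}_{\mathcal{I}}[{\bm\upsilon}]$ and $J^{(p)}_{\mathcal{I}}[{\bm\phi}]$ are, by definition, the restrictions to $\mathcal{I}$ of the $\rho$-derivatives $\partial^{(p)}_\rho{\bm\upsilon}$, $\partial^{(p)}_\rho{\bm\phi}$ of this spacetime solution. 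Since the solution is analytic on $\mathcal{N}$, every such $\rho$-derivative is analytic there, and in particular extends analytically across the critical sets $\mathcal{I}^\pm$; moreover these restrictions solve precisely the transport equations \eqref{upsilon:transport}--\eqref{cbianchi:transport}, because those equations were obtained by applying $\partial^{(p)}_\rho$ to the evolution equations and evaluating on $\mathcal{I}$. Uniqueness of the solution to the (regular, interior) transport system on $\mathcal{I}$ with the Minkowskian data on $\mathcal{I}^0$ then identifies the jets computed from the transport hierarchy with the jets of the analytic spacetime solution, giving the analytic extension through $\mathcal{I}^\pm$ for all $p$.

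Next I would establish the polynomial-in-$\tau$ statement by induction on $p$ using Lemma \ref{Lemma:LogsLeadToLogs}. For the base of the induction, ${\bm\upsilon}^{(0)}$ and ${\bm\phi}^{(0)}$ on $\mathcal{I}$ are the Minkowski data propagated along $\mathcal{I}$; these are known explicitly and are polynomial (in fact, for static data with the cn-gauge, $\phi^{(0)}$ and the low-order jets are governed by the Schwarzschildean part and are manifestly polynomial in $\tau$). For the inductive step, suppose $J^{(p-1)}_{\mathcal{I}}[{\bm\upsilon}]$ and $J^{(p-1)}_{\mathcal{I}}[{\bm\phi}]$ are polynomial in $\tau$; Lemma \ref{Lemma:LogsLeadToLogs} then gives that $J^{(p)}_{\mathcal{I}}[{\bm\upsilon}]$ is polynomial in $\tau$. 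To upgrade this to ${\bm\phi}^{(p)}$, I would invoke the analyticity of the spacetime solution established in the first paragraph: ${\bm\phi}^{(p)}$ is an analytic function on a neighbourhood of $\overline{\mathcal{I}}$ that satisfies the linear ODE system \eqref{pbianchi:transport} in $\tau$ with polynomial (now known) right-hand side; the degenerate eigenvalues $1\pm\tau$ of $\mathbf{E}+\mathbf{A}^0$ at $\mathcal{I}^\pm$ are exactly where non-polynomial (logarithmic) behaviour could enter, but analyticity at $\tau=\pm1$ rules out logarithmic terms, so the solution must be polynomial. Alternatively one can cite directly from \cite{Val10} that the only possible non-smooth contributions to these jets for conformally-flat-type data are logarithmic and are detected by Lemma \ref{Lemma:LogsLeadToLogs}, and that in the absence of such terms the computer-algebra solution of the reduced ODEs is polynomial; since static data satisfies the regularity condition \eqref{RegularityCondition} to all orders (Proposition \ref{Proposition:RegularityCondition}), the obstruction never arises.

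**The main obstacle** I anticipate is making rigorous the passage from ``the spacetime solution is analytic'' (Theorem of \cite{Fri04}) to ``the jets computed recursively from the transport hierarchy are polynomial in $\tau$'': the Theorem guarantees smoothness/analyticity in $(\tau,\rho,t^A{}_B)$ near $\mathcal{I}$ but does not, on its face, say anything about polynomial $\tau$-dependence of individual $\rho$-coefficients. Closing this gap cleanly requires either (i) the explicit induction via Lemma \ref{Lemma:LogsLeadToLogs} combined with the observation that an analytic solution of \eqref{pbianchi:transport} with polynomial source and no room for logarithms at $\tau=\pm1$ is necessarily polynomial, or (ii) appealing to the detailed structure uncovered by the computer algebra of \cite{Val10}, where for static (indeed Schwarzschildean-plus-static-multipole) data the reduced ODEs are shown to have polynomial solutions order by order. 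I would present route (i) as the conceptual argument and note that route (ii) --- essentially already carried out in \cite{Val10} --- provides the concrete verification, remarking that this is exactly the sense in which ``the structures are as smooth as they can be'' for static spacetimes.
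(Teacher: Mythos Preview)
Your proposal is correct and takes essentially the same approach as the paper: analyticity through $\mathcal{I}^\pm$ is read off directly from the preceding Theorem, and polynomial $\tau$-dependence is then obtained by invoking the analysis of the reduced Bianchi transport ODEs in \cite{Val10} (sections 6.2--6.3), which shows that the only non-polynomial behaviour possible is logarithmic and hence excluded by analyticity. The inductive scaffolding via Lemma~\ref{Lemma:LogsLeadToLogs} that you add is a helpful elaboration, but the paper's own proof is content to cite \cite{Val10} for the dichotomy and leave it at that.
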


\begin{proof}
Once analyticity has been asserted, the polynomial
dependence of the solutions with respect to $\tau$ follows from an
analysis of the possible solutions to the reduced equations
---cfr. the discussion in section 6.2 and 6.3 of \cite{Val10}.
\end{proof}

\section{Properties of data sets which are static up to a given order}
\label{Section:DataStaticUpToAnOrder}

In what follows we will discuss some properties of initial data sets
which are static up to a certain order ---in the sense of definition
\ref{StaticUpToCertainOrder}. These properties have mainly to do with
the multipole structure of quantities associated to these initial data sets. 

\medskip
Given a quantity $u$, we will systematically write it as 
\[
u = \mathring{u} + \breve{u},
\]
where, in the terminology of the previous section, $\mathring{u}$ and
$\breve{u}$ denote, respectively, its static and non-static parts. For
quantities on the initial hypersurface $\mathcal{C}_{a,\kappa}$ we
similarly write $u_0= \mathring{u}_0 + \breve{u}_0$.

\medskip
Let as before, $\mathbf{c}$ and ${\bm\Gamma}$ denote, respectively,
the parts of the vectorial unknown ${\bm \upsilon}$ containing the
independent components of the frame and connection coefficients. As
mentioned in section \ref{Section:InitialDataFormulae}, the
expressions for the value of the frame and connection coefficients on
the initial hypersurface $\mathcal{C}_{a,\kappa}$ are independent of
$\Omega$. Thus, one has that:

\begin{lemma}
\label{LemmaStaticData}
For data which is static up to order $\pb\geq 0$ one has that
\[
\mathbf{c}_0^{(p)}=\mathring{\mathbf{c}}_0^{(p)}, \quad {\bm \Gamma}_0^{(p)}=\mathring{\bm \Gamma}_0^{(p)},
\]
for all $0 \leq p \leq \pb$.
\end{lemma}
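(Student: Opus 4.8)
The plan is to trace which part of the free datum the objects in question actually depend on, and then to observe that, for data which is static up to order $\pb$, the datum entering the frame and connection is \emph{identical} to that of the background static solution; the proof is then essentially immediate.

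First I would make precise what Definition~\ref{StaticUpToCertainOrder} fixes. Reading it together with the construction of data with a static massless part, an initial data set which is static up to order $\pb$ has, \emph{by construction}, the same conformal metric as the background static data, $h_{ij}=\mathring h_{ij}$; the only change is in the conformal factor $\vartheta$, through the extra massive term $\breve W$ in \eqref{breveW}. Moreover, since $\mathring h_{ij}$ is in the cn-gauge and the cn-gauge is a condition on the conformal class alone --- it is phrased in terms of the conformal geodesics and the Schouten tensor of $h_{ij}$, with vanishing initial $1$-form --- no further rescaling is needed: the data static up to order $\pb$ is automatically presented in the cn-gauge, with the same normal coordinates $x^i$, the same blow-up $\mathcal{C}_a$ and the same bundle coordinates $(\rho,\updn{t}{A}{B})$ as the static background, so that $\mathbf{c}_0$ and $\mathring{\mathbf{c}}_0$ (and likewise ${\bm\Gamma}_0$, $\mathring{\bm\Gamma}_0$) really live on the same manifold and admit a canonical comparison.

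Next I would invoke the structure of the initial data on $\mathcal{C}_{a,\kappa}$ recorded in Section~\ref{Section:InitialDataFormulae}: with the gauge choice $\kappa=\rho$, the components of $\mathbf{c}_0$ and ${\bm\Gamma}_0$ are built solely from the conformal metric $h_{ij}$ and do \emph{not} involve the conformal factor $\Omega$ (equivalently $\vartheta$) --- in contrast with the curvature data \eqref{Ricci:Data}--\eqref{Weyl:Data}, which do. Combining this with $h_{ij}=\mathring h_{ij}$ gives $\mathbf{c}_0=\mathring{\mathbf{c}}_0$ and ${\bm\Gamma}_0=\mathring{\bm\Gamma}_0$ as analytic fields on $\mathcal{C}_{a,\kappa}$. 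Since applying $\partial_\rho$ repeatedly and restricting to $\mathcal{I}^0$ acts identically on the two data sets, one obtains $\mathbf{c}_0^{(p)}=\mathring{\mathbf{c}}_0^{(p)}$ and ${\bm\Gamma}_0^{(p)}=\mathring{\bm\Gamma}_0^{(p)}$ --- in fact for every $p\ge 0$, and in particular for $0\le p\le\pb$.

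I do not expect a genuine obstacle here: the one point that needs checking is that imposing the cn-gauge on the data does not covertly reintroduce $\Omega$-dependence into $h_{ij}$, which it does not, for the reason given above. It is perhaps worth remarking in the proof that the bound $0\le p\le\pb$ in the statement is dictated not by this lemma --- which holds at all orders --- but by the parallel statements for the curvature quantities $\Theta_{ABCD}$, $\phi_{ABCD}$ and for the s-jets on $\mathcal{I}$ in the lemmas that follow, where the non-static term $\breve W$ does eventually enter.
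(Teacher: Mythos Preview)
Your proposal is correct and follows essentially the same reasoning as the paper: the key observation, made explicitly in Section~\ref{Section:InitialDataFormulae}, is that the initial data for the frame and connection coefficients on $\mathcal{C}_{a,\kappa}$ are determined by the conformal metric $h_{ij}$ alone and do not involve $\Omega$, so that the only difference between the data and its static background --- the term $\breve{W}$ in $\vartheta$ --- cannot enter $\mathbf{c}_0$ or ${\bm\Gamma}_0$. Your additional remarks that the cn-gauge is a condition on $h_{ij}$ only (so no hidden $\Omega$-dependence is introduced) and that the conclusion in fact holds for all $p\geq 0$, with the stated range $0\le p\le\pb$ reflecting the parallel curvature lemmas rather than any genuine restriction here, are both correct and worth making explicit.
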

Crucial for our analysis is the behaviour of the data for the curvature spinors
$\Theta_{ABCD}$ and $\phi_{ABCD}$. It is observed that
\begin{subequations}
\begin{eqnarray}
&& \Omega = \rho^2 -m\rho^3 +\frac{3}{4}m^2+ \mathcal{O}(\rho^5),
\label{Omega1}\\ 
&& D_{(AB}D_{CD)}\Omega = -6m \epsilon^2_{ABCD} +
12m^2 \epsilon^2_{ABCD} \rho^2 +\mathcal{O}(\rho^3), \label{Omega2}
\end{eqnarray}
\end{subequations}
from where it follows that the leading terms in these expressions are
Schwarzschildean. The first non-Schwarzschildean contributions arise
at orders $\mathcal{O}(\rho^5)$ and $\mathcal{O}(\rho^3)$, respectively ---see
e.g. \cite{FriKan00}. Using the expansions \eqref{Omega1}-\eqref{Omega2} together with formulae \eqref{Ricci:Data}-\eqref{Weyl:Data} one obtains after a long but straightforward computation the following:

\begin{lemma}
\label{LemmaQuasiStaticData}
For data which is static up to order $\pb\geq 2$ one has that the
curvature fields on $\mathcal{C}_{a,\kappa}$ satisfy
\begin{eqnarray*}
&& \breve{\Theta}_{ABCD}\equiv \Theta_{ABCD}-\mathring{\Theta}_{ABCD} = \mathcal{O}(\rho^{\pb+2}), \\
&& \breve{\phi}_{ABCD} \equiv \phi_{ABCD}-\mathring{\phi}_{ABCD}=\mathcal{O}(\rho^{\pb+1}).
\end{eqnarray*}
Moreover, the terms
\begin{eqnarray*}
&& \breve{\Theta}^{(\pb+1)}_{ABCD}, \quad \breve{\Theta}^{(\pb+2)}_{ABCD}, \quad \breve{\Theta}^{(\pb+3)}_{ABCD}, \quad\breve{\Theta}^{(\pb+4)}_{ABCD}, \\
&& \breve{\phi}^{(\pb+1)}_{ABCD}, \quad \breve{\phi}^{(\pb+2)}_{ABCD}, \quad \breve{\phi}^{(\pb+3)}_{ABCD},
\end{eqnarray*}
on $\mathcal{C}_{a,\kappa}$ are of the form of deviations from 
Schwarzschild data. On the other hand,
$\breve{\phi}^{(\pb+4)}_{ABCD}$ contains more general types of deviations.
\end{lemma}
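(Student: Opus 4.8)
The plan is to reduce everything to an explicit computation using the initial data formulae \eqref{Ricci:Data}--\eqref{Weyl:Data}, the expansions \eqref{Omega1}--\eqref{Omega2} for the \emph{static} piece $\mathring\Omega$, and the explicit lift \eqref{breveWLifted} of $\breve W$. Since $\Omega=\mathring\Omega+\breve\Omega$ and the conformal metric $h_{ij}=\mathring h_{ij}$ is unchanged (the data has a static massless part), the quantities $D_{(AB}D_{CD)}\Omega$, $r$ and $s_{ABCD}$ appearing in \eqref{Ricci:Data}--\eqref{Weyl:Data} split accordingly, with the deviation $\breve\Theta_{ABCD}$, $\breve\phi_{ABCD}$ controlled entirely by $\breve\Omega$ and its covariant derivatives. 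By the first Lemma of Section~2.3, $\breve\Omega=\mathcal O(|x|^{\pb+4})$ and $\breve\omega=\mathcal O(|x|^{\pb+3})$; lifted to $\mathcal C_{a,\kappa}$ via $|x|\mapsto\rho$ this gives $\breve\Omega=\mathcal O(\rho^{\pb+4})$, hence $D_{(AB}D_{CD)}\breve\Omega=\mathcal O(\rho^{\pb+2})$ (two $h$-derivatives, each of which can lower the order by at most one near $\mathcal I^0$, using that $h_{ij}$ is analytic and the frame $c_{AB}$ has the structure recalled in Section~3.1). Substituting into \eqref{Ricci:Data} with $\kappa=\rho$ yields $\breve\Theta_{ABCD}=\mathcal O(\rho^{\pb+2})$, and into \eqref{Weyl:Data} with $\kappa^3=\rho^3$ and the extra $\Omega^{-2}=\mathring\Omega^{-2}(1+\mathcal O(\rho^{\pb+2}))=\rho^{-4}(1+\cdots)$ one gets $\breve\phi_{ABCD}=\mathcal O(\rho^{\pb+1})$. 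This establishes the first two displayed estimates.

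For the second assertion — that the coefficients $\breve\Theta^{(\pb+1)}_{ABCD},\dots,\breve\Theta^{(\pb+4)}_{ABCD}$ and $\breve\phi^{(\pb+1)}_{ABCD},\dots,\breve\phi^{(\pb+3)}_{ABCD}$ are \emph{deviations from Schwarzschild} — I would track precisely which $\TT{2q}{k}{q}$-sectors can appear. The key input is the Remark after Definition~\ref{StaticUpToCertainOrder}: in the cn-gauge the polynomials $\breve w_{i_1\cdots i_{\pb+1}}x^{i_1}\cdots$, $\breve w_{i_1\cdots i_{\pb+2}}x^{i_1}\cdots$, $\breve w_{i_1\cdots i_{\pb+3}}x^{i_1}\cdots$ are harmonic, so in \eqref{breveWLifted} the $\rho^{\pb+1},\rho^{\pb+2},\rho^{\pb+3}$ terms carry only the top sector $q=p$, i.e.\ only $\TT{2p}{k}{p}$. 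Expanding $\breve\Theta$ and $\breve\phi$ to the relevant orders, every contribution at orders $\pb+1,\pb+2,\pb+3$ (and, for $\breve\Theta$, also $\pb+4$) is built from these top-sector coefficients together with the Schwarzschildean background pieces $m$, $\mathring\Omega$, $\mathring U=1$, $\mathring W=m/2$ that enter through \eqref{Omega1}--\eqref{Omega2} — in particular through the leading Schwarzschildean factors like $-6m\epsilon^2_{ABCD}$ in \eqref{Omega2}. Hence these coefficients depend only on $m$ and on $\breve w_{\pb+1;2(\pb+1),k}$, $k=0,\dots,2(\pb+1)$ (the higher top-sector data $\breve w_{\pb+2;\cdot}$, $\breve w_{\pb+3;\cdot}$ will be shown, as in the enumeration just before Section~3.4, not to contribute new structure at these orders because they enter only multiplied by quantities that vanish to sufficiently high order). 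That is exactly the definition of a deviation from Schwarzschild.

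Finally, for $\breve\phi^{(\pb+4)}_{ABCD}$ one has to note that $\rho^4$ in the prefactor $\kappa^3/\Omega^2$ combined with $D_{(AB}D_{CD)}\breve\Omega$ now reaches down to the $\rho^{\pb+4}$ term of $\breve W$, which by \eqref{breveWLifted} no longer lies only in the top sector $q=p$ but ranges over all $q=0,\dots,\pb+4$; moreover the $\Omega s_{ABCD}$ piece of \eqref{Weyl:Data} brings in $\mathring s_{ABCD}\breve\Omega$ and products of the subleading (non-Schwarzschildean) parts of $\mathring\Omega$, $\mathring W$ with the lower-order $\breve W$-data. Hence $\breve\phi^{(\pb+4)}_{ABCD}$ genuinely involves static multipole data beyond $m$ and more sectors than $q=\pb+1$; it is not, in general, a deviation from Schwarzschild. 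The main obstacle is the bookkeeping in this last step: one must verify that no fortuitous cancellation collapses $\breve\phi^{(\pb+4)}_{ABCD}$ back into the top sector or back to dependence on $m$ alone. I would handle this by exhibiting explicitly, from \eqref{Weyl:Data}, \eqref{Omega1}--\eqref{Omega2} and \eqref{breveWLifted}, the term proportional to $\breve w_{\pb+4;2q,k}$ with $q<\pb+1$ (for instance the $q=0$ sector fed by the massive correction to $\mathring W$), and checking its coefficient is nonzero — the same type of computation already carried out in \cite{Val10} for the conformally flat case, so essentially no new computer algebra is required.
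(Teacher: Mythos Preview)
Your overall strategy --- expand \eqref{Ricci:Data}--\eqref{Weyl:Data} using $\Omega=\mathring\Omega+\breve\Omega$, feed in Lemma~1 ($\breve\Omega=\mathcal O(\rho^{\pb+4})$) together with \eqref{Omega1}--\eqref{Omega2} and \eqref{breveWLifted}, and read off orders --- is exactly what the paper does (it describes the lemma as following ``after a long but straightforward computation'' from precisely these ingredients). Your derivation of the two order estimates $\breve\Theta_{ABCD}=\mathcal O(\rho^{\pb+2})$, $\breve\phi_{ABCD}=\mathcal O(\rho^{\pb+1})$ is fine.

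There is, however, a genuine misstep in your second paragraph. You claim that $\breve w_{\pb+2;\cdot}$ and $\breve w_{\pb+3;\cdot}$ ``enter only multiplied by quantities that vanish to sufficiently high order'' and hence do not contribute to $\breve\phi^{(\pb+2)}$, $\breve\phi^{(\pb+3)}$. This is false: for instance the term $\kappa^3\Omega^{-2}D_{(AB}D_{CD)}\breve\Omega$ picks up $\breve w_{\pb+2}$ already at order $\rho^{\pb+2}$, multiplied by $\mathcal O(1)$ background pieces. So $\breve\phi^{(\pb+2)}_{ABCD}$ genuinely depends on $\breve w_{\pb+2}$, and you cannot argue that it is literally a function of $m$ and $\breve w_{\pb+1}$ alone.

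The phrase ``of the form of deviations from Schwarzschild data'' is not meant in that strict sense. What the lemma asserts --- and what is actually used in Observations~1--3 of Section~\ref{Section:CoreAnalysis} --- is that $\breve\Theta^{(\pb+1)},\ldots,\breve\Theta^{(\pb+4)}$ and $\breve\phi^{(\pb+1)},\ldots,\breve\phi^{(\pb+3)}$ are given by \emph{the same formulae} as for data which is Schwarzschildean up to order $\pb$: they involve $m$ and the $\breve w$'s, but none of the higher static multipoles of $\mathring U$, $\mathring W$. The mechanism is the one you correctly identify: (i) by \eqref{breveWLifted} the $\rho^{\pb+1},\rho^{\pb+2},\rho^{\pb+3}$ parts of $\breve W$ sit only in the top sector, exactly as in the Schwarzschildean case; and (ii) by \eqref{Omega1}--\eqref{Omega2} the first non-Schwarzschildean pieces of $\mathring\Omega$ and $D_{(AB}D_{CD)}\mathring\Omega$ arise at $\mathcal O(\rho^5)$ and $\mathcal O(\rho^3)$, which is too late to contaminate the indicated orders. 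At order $\pb+4$ for $\breve\phi$ both effects fail --- $\breve W_{\pb+4}$ populates all sectors $q\le\pb+4$, and the static quadrupole $\mathring w_{2;4}$ now couples to $\breve w_{\pb+1}$ --- which is precisely your third paragraph. So keep your argument, but replace the incorrect parenthetical by the observation that the \emph{background} factors multiplying the $\breve w$'s at these orders are purely Schwarzschildean.
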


\medskip
An inspection of the explicit computations in \cite{FriKan00} one
obtains the following  result.

\begin{lemma}
\label{Lemma:MultipoleStructureStatic}
For a data which is static up to order $\pb\geq 3$ one has that 
\begin{eqnarray*}
&& {\bm\upsilon}^{(0)}=\mathring{\bm\upsilon}^{(0)}, \quad
{\bm \upsilon}^{(1)}=\mathring{\bm \upsilon}^{(1)}, \quad
{\bm \upsilon}^{(2)}=\mathring{\bm \upsilon}^{(2)}, \\
&& \mathbf{L}^{(1)} =\mathring{\mathbf{L}}^{(1)},
\quad \mathbf{L}^{(2)} = \mathring{\mathbf{L}}^{(2)},  \quad
\mathbf{L}^{(3)}=\mathring{\mathbf{L}}^{(3)},\\
&& {\bm\phi}^{(0)}=\mathring{\bm \phi}^{(0)}, \quad {\bm\phi}^{(1)}=\mathring{\bm \phi}^{(1)}
\end{eqnarray*}
have only contributions to the sector $\mathfrak{S}_0$. On the other
hand, the coefficients of 
\[
{\bm\upsilon}^{(3)}=\mathring{\bm\upsilon}^{(3)}, \quad {\bm\phi}^{(2)}=\mathring{\bm\phi}^{(2)}, 
\]
contain contributions to the sectors $\mathfrak{S}_0$ and
$\mathfrak{S}_2$. 
\end{lemma}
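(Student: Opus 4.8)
\textbf{Proof strategy for Lemma \ref{Lemma:MultipoleStructureStatic}.}
The plan is to establish the claimed sector structure by following the same chain of reasoning already used in \cite{FriKan00} and \cite{Val10} for static (and Schwarzschildean) data, and then invoke Lemma \ref{LemmaStaticData}, Lemma \ref{LemmaQuasiStaticData} and Lemma \ref{Lemma:Cotton} to transfer these statements to data which is static up to order $\pb\geq 3$. The first observation is that, by Lemma \ref{LemmaStaticData}, the frame and connection data on $\mathcal{C}_{a,\kappa}$ agree with the static data up to order $\pb\geq3$, so $\mathbf{c}_0^{(p)}=\mathring{\mathbf{c}}_0^{(p)}$ and ${\bm\Gamma}_0^{(p)}=\mathring{\bm\Gamma}_0^{(p)}$ for $p\leq3$; likewise, Lemma \ref{LemmaQuasiStaticData} gives $\breve{\Theta}_{ABCD}=\mathcal{O}(\rho^{\pb+2})$ and $\breve{\phi}_{ABCD}=\mathcal{O}(\rho^{\pb+1})$, so that the curvature d-jets on $\mathcal{I}^0$ also coincide with the static ones through order $\pb\geq3$. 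Hence ${\bm\upsilon}^{(p)}_0=\mathring{\bm\upsilon}^{(p)}_0$ and ${\bm\phi}^{(p)}_0=\mathring{\bm\phi}^{(p)}_0$ for the relevant low values of $p$, and by solving the transport equations \eqref{upsilon:transport}--\eqref{cbianchi:transport} (whose principal part is universal) the full s-jets inherit the equality ${\bm\upsilon}^{(p)}=\mathring{\bm\upsilon}^{(p)}$ ($p\leq3$), ${\bm\phi}^{(p)}=\mathring{\bm\phi}^{(p)}$ ($p\leq2$) and $\mathbf{L}^{(p)}=\mathring{\mathbf{L}}^{(p)}$ ($p\leq3$).

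Next I would verify the sector assignments for the static data themselves. For $\mathring{\bm\upsilon}^{(0)}$, $\mathring{\bm\upsilon}^{(1)}$, $\mathring{\bm\upsilon}^{(2)}$, $\mathring{\mathbf{L}}^{(1)}$, $\mathring{\mathbf{L}}^{(2)}$, $\mathring{\mathbf{L}}^{(3)}$, $\mathring{\bm\phi}^{(0)}$ and $\mathring{\bm\phi}^{(1)}$ the claim is that only the sector $\mathfrak{S}_0$ is populated. This follows from the expansion-type lemma of \cite{Fri98a} (quoted above) together with the expansions \eqref{Omega1}--\eqref{Omega2}: since $U=1+\mathcal{O}(\rho^4)$, $s_{ABCD}=\mathcal{O}(\rho^2)$, $\check c^\pm_{AB}=\mathcal{O}(\rho^3)$ and $\check\gamma_{ABCD}=\mathcal{O}(\rho^3)$ in the cn-gauge (which applies here by Lemma \ref{Lemma:Cotton}), the low-order $\rho$-derivatives of the frame, connection, $\Theta_{ABCD}$ and $\phi_{ABCD}$ on $\mathcal{I}^0$ are controlled entirely by the Schwarzschildean contributions, which live in $\mathfrak{S}_0$; formulae \eqref{Ricci:Data}--\eqref{Weyl:Data} with \eqref{Omega1}--\eqref{Omega2} make this explicit. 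For the borderline cases $\mathring{\bm\upsilon}^{(3)}$ and $\mathring{\bm\phi}^{(2)}$ one tracks the first genuinely non-Schwarzschildean curvature contribution: from \eqref{Omega2} the quadrupole term in $D_{(AB}D_{CD)}\Omega$ enters at order $\rho^3$, and from \eqref{Weyl:Data} the non-conformally-flat/non-Schwarzschild piece of the conformal metric contributes to $\phi_{ABCD}$ at the corresponding order, producing entries in $\mathfrak{S}_2$ but no higher sector — this is precisely the content of the explicit computations in \cite{FriKan00}, which I would cite rather than reproduce. Propagating this through the transport equations and using the Clebsch--Gordan formula \eqref{Clebsch-Gordan} to linearise products (noting that ${\bm\upsilon}_{0;0,0}$ sits in $\mathfrak{S}_0$, so multiplication by it cannot raise the sector index), one sees no sector beyond $\mathfrak{S}_2$ can be generated at this order.

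The main obstacle is the bookkeeping needed to show that \emph{no} sectors other than $\mathfrak{S}_0$ and $\mathfrak{S}_2$ appear in ${\bm\upsilon}^{(3)}$ and ${\bm\phi}^{(2)}$: in principle, the quadratic term $\mathbf{Q}({\bm\upsilon}^{(j)},{\bm\upsilon}^{(p-j)})$ in \eqref{upsilon:transport} could couple two $\mathfrak{S}_2$-valued factors and, via \eqref{Clebsch-Gordan}, feed sectors $\mathfrak{S}_0$ through $\mathfrak{S}_4$. The resolution is that at orders $p\leq3$ at most one factor in any such product can carry an $\mathfrak{S}_2$ contribution — the other is forced to be the Minkowskian/Schwarzschildean piece in $\mathfrak{S}_0$ by the expansion types listed above — so $\min\{i_1,i_2,\dots\}=0$ in \eqref{Clebsch-Gordan} and the sum collapses, leaving the sector index unchanged. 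Since all the computer-algebra verification of these expansions has already been carried out in \cite{Val10} and \cite{FriKan00}, the argument reduces to invoking those references together with the transfer lemmas above; the only nontrivial point to state carefully is that the static-up-to-order-$\pb$ assumption with $\pb\geq3$ guarantees that the non-static corrections $\breve{\Theta}$, $\breve{\phi}$ do not reach down to the orders $p\leq3$ relevant here, which is exactly Lemma \ref{LemmaQuasiStaticData}.
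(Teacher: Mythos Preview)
Your proposal is correct and follows essentially the same approach as the paper, which simply states that the result follows from ``an inspection of the explicit computations in \cite{FriKan00}'' without further elaboration. You have supplied the surrounding logic that the paper leaves implicit: the transfer from static-up-to-order-$\pb$ data to exactly static data at low orders via Lemmas \ref{LemmaStaticData} and \ref{LemmaQuasiStaticData}, and the Clebsch--Gordan bookkeeping showing why no sector beyond $\mathfrak{S}_2$ can appear at order $3$. One small imprecision: the equality $\mathbf{L}^{(p)}=\mathring{\mathbf{L}}^{(p)}$ for $p\leq 3$ does not come from solving the transport equations --- $\mathbf{L}$ is a given coefficient built from $\Theta$ and $d_{AB}$ via \eqref{Theta} and \eqref{1-form}, so the equality follows directly from $\breve{\Omega}=\mathcal{O}(\rho^{\pb+4})$ and $\breve{\omega}=\mathcal{O}(\rho^{\pb+3})$ (Lemma 1) rather than from the evolution; you should separate it out from the ${\bm\upsilon}^{(p)}$, ${\bm\phi}^{(p)}$ argument.
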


\noindent
\textbf{Remark.} In the previous lemma $\mathbf{L}$ denotes the
linear matrix-valued function appearing in equation
\eqref{upsilon:propagation}. It depends on the conformal factor
$\Theta$ as given by \eqref{Theta} and on the 1-form $d_{AB}$ as given
by \eqref{1-form}.

\medskip
A lenghty but straightforward computation using formulae \eqref{Theta}
and \eqref{1-form} renders the following lemma.

\begin{lemma}
\label{Lemma:HigherOrderL}
For data which is static up to order $\pb\geq 3$ one has
that 
\[
\breve{\mathbf{L}}^{(\pb+1)}=0, \quad \breve{\mathbf{L}}^{(\pb+2)}=0.
\]
\end{lemma}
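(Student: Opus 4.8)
The plan is to locate, by bookkeeping in powers of $\rho$, the order at which the non-static part of $\mathbf{L}$ first appears. For data with a static massless part the conformal metric, frame and connection coefficients on $\mathcal{C}_{a,\kappa}$ coincide with those of the background static datum (Lemma \ref{LemmaStaticData}); the only field that changes is the solution $\vartheta$ of the Yamabe equation \eqref{YamabeEquation}, through the extra term $\breve{W}$ in \eqref{theta:perturbed}. As noted in the Remark after Lemma \ref{Lemma:MultipoleStructureStatic}, the matrix $\mathbf{L}$ enters \eqref{upsilon:propagation} only through the conformal factor $\Theta$ of \eqref{Theta} and the $1$-form $d_\mu$ of \eqref{1-form}, whose $\vartheta$-dependence is in turn carried by $\breve{W}$ (hence by $\breve\Omega$, $\breve\omega$ and $D_{AB}\breve W$). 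Thus $\breve{\mathbf{L}} \equiv \mathbf{L}-\mathring{\mathbf{L}}$ is controlled by $\breve\Theta\equiv\Theta-\mathring\Theta$ and $\breve d_\mu\equiv d_\mu-\mathring d_\mu$, and it suffices to estimate these.

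First I would assemble the relevant lifted orders. From Definition \ref{StaticUpToCertainOrder} (and \eqref{breveWLifted}) one has $\breve W=\mathcal{O}(\rho^{\pb+1})$ and $D_{AB}\breve W=\mathcal{O}(\rho^{\pb})$, while the estimates recalled in Section \ref{Section:Data} give, after lifting $|x|\mapsto\rho$, $\breve\Omega=\mathcal{O}(\rho^{\pb+4})$ and $\breve\omega=\mathcal{O}(\rho^{\pb+3})$. Substituting into \eqref{Theta} with $\kappa=\rho$ and writing $\Theta-\mathring\Theta=\rho^{-1}\breve\Omega-\rho\tau^2\big(\Omega/\omega^2-\mathring\Omega/\mathring\omega^2\big)$, and using $\Omega/\omega^2=|D_k\Omega D^k\Omega|/(4\Omega)$ together with $\Omega=\mathcal{O}(\rho^2)$, $D_{AB}\Omega=\mathcal{O}(\rho)$, a direct estimate gives $\breve\Theta=\mathcal{O}(\rho^{\pb+3})$; since $\breve\Theta$ is polynomial in $\tau$, also $\partial_\tau\breve\Theta=\mathcal{O}(\rho^{\pb+3})$. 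For the $1$-form, feeding $U=\mathring U$ and $W=\mathring W+\breve W$ into \eqref{1-form} (respectively into the corresponding expression for $d_\mu$ in the non-conformally-flat case, which is again built from $\Theta$, $U$, $W$, $D_{AB}U$, $D_{AB}W$ and $\mathring h_{ij}$) and using $U=1+\mathcal{O}(\rho^4)$, so that the denominator $(U+\rho W)^3$ stays non-vanishing and its non-static variation is $\mathcal{O}(\rho^{\pb+2})$, one sees that the two sources of non-staticity in $d_{AB}$ — the term $-\rho^2 D_{AB}\breve W=\mathcal{O}(\rho^{\pb+2})$ in the numerator and the variation of the denominator — both enter multiplied by the overall factor $2\rho$. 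Hence $\breve d_{AB}=\mathcal{O}(\rho^{\pb+3})$, and then $\breve d_{AA'}=\mathcal{O}(\rho^{\pb+3})$ as well.

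Finally, since $\mathbf{L}$ depends polynomially on $\Theta$ and $d_\mu$ with coefficients built from the (common) frame and conformal Gaussian coordinates, and since $\mathbf{L}|_{\rho=0}=0$, it follows that $\breve{\mathbf{L}}=\mathbf{L}(\Theta,d_\mu,\cdot)-\mathbf{L}(\mathring\Theta,\mathring d_\mu,\cdot)=\mathcal{O}(\rho^{\pb+3})$. As $\mathcal{I}$ is the set $\rho=0$ and all fields are analytic in $\rho$, a quantity vanishing to order $\rho^{\pb+3}$ has vanishing $\rho$-derivatives of all orders $\leq\pb+2$ there; hence $\breve{\mathbf{L}}^{(\pb+1)}=\partial_\rho^{\pb+1}\breve{\mathbf{L}}|_{\mathcal{I}}=0$ and $\breve{\mathbf{L}}^{(\pb+2)}=\partial_\rho^{\pb+2}\breve{\mathbf{L}}|_{\mathcal{I}}=0$. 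The step requiring care is precisely the order bookkeeping in the nonlinear combinations in \eqref{Theta} and \eqref{1-form}: one must verify that no lower-order term survives without cancellation — in particular that the explicit $\rho^{-1}$ ($=\kappa^{-1}$) in \eqref{Theta} does not pull the order of $\breve\Omega=\mathcal{O}(\rho^{\pb+4})$ below $\rho^{\pb+3}$, and that the ratio $\Omega/\omega^2$ and the factor $(U+\rho W)^{-3}$ are expanded about their static values with the correct leading behaviour. That the conclusion stops at order $\pb+2$ is consistent with $\breve{\mathbf{L}}$ vanishing exactly to order $\rho^{\pb+3}$, which is what powers the inductive argument of Section \ref{Section:CoreAnalysis}.
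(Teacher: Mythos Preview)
Your proposal is correct and is precisely the ``lengthy but straightforward computation using formulae \eqref{Theta} and \eqref{1-form}'' that the paper invokes without spelling out: the paper gives no proof beyond that sentence, and your order-by-order bookkeeping in $\rho$ for $\breve\Theta$ and $\breve d_{AB}$ (yielding $\breve{\mathbf{L}}=\mathcal{O}(\rho^{\pb+3})$) is exactly what that computation amounts to. The only caveat worth flagging is that formula \eqref{1-form} is stated in the paper for conformally flat data, but as you note the dependence of $d_{AB}$ on the non-static part still enters only through $\breve W$ and $D_{AB}\breve W$ in the general case, so the order count goes through unchanged.
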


\section{Solutions to the transport equations for data which is static
 up to a certain order}
\label{Section:CoreAnalysis}

In this section we consider a time symmetric initial data set which is
static up to order $p=\pb+1$. The data can be parametrised in the form
\[
{\bm\upsilon}_0 = \mathring{\bm\upsilon}_0 + \breve{\bm\upsilon}_0, \quad {\bm\phi}_0 = \mathring{\bm\phi}_0 + \breve{\bm\phi}_0
\]
where $\mathring{\bm\upsilon}_0$, $\mathring{\bm\phi}_0$ corresponds to
exactly static initial data, while $\breve{\bm\upsilon}_0=0$, $\breve{\bm\phi}_0=0$ if
$W=m/2$.  A similar splitting can be applied to the quantities
${\bm\upsilon}^{(p)}$, ${\bm\phi}^{(p)}$. More precisely,
\[
{\bm\upsilon}^{(p)} = \mathring{\bm\upsilon}^{(p)} + \breve{\bm\upsilon}^{(p)}, \quad {\bm\phi}^{(p)} =
\mathring{\bm\phi}^{(p)}+ \breve{\bm\phi}^{(p)}, \quad p\geq 0.
\]

One has the following result.

\begin{lemma}
\label{Lemma:LowerOrderSolutions}
For initial data which is static up to order $p=\pb$, one has that
\[
\breve{\bm \upsilon}^{(p)} =0, \quad  0\leq p \leq \pb+1,
\]
and 
\[
\breve{\bm\phi}^{(p)} =0, \quad 0\leq p \leq \pb.
\]
\end{lemma}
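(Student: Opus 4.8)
The plan is to exploit the linearity of the transport hierarchy \eqref{upsilon:transport}--\eqref{cbianchi:transport} about the underlying static solution. Writing ${\bm\upsilon}^{(p)}=\mathring{\bm\upsilon}^{(p)}+\breve{\bm\upsilon}^{(p)}$, ${\bm\phi}^{(p)}=\mathring{\bm\phi}^{(p)}+\breve{\bm\phi}^{(p)}$ and subtracting the transport equations of the static datum --- whose solution exists and is analytic on $\overline{\mathcal I}$ by Corollary \ref{StaticCorollary} --- the deviations $(\breve{\bm\upsilon}^{(p)},\breve{\bm\phi}^{(p)})$ satisfy, for each $p$, an interior system on $\mathcal I$ of the form \eqref{upsilon:transport}--\eqref{cbianchi:transport} in which $\mathbf Q$ is replaced by its linearisation at $\mathring{\bm\upsilon}^{(0)}$ and the non-homogeneous terms are built out of the \emph{lower-order} deviations and out of the deviations of the coefficient functions $\mathbf L$, $\mathbf A^\mu$, $\mathbf F$, $\mathbf G$. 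I would then argue by induction on $p$ that, in the stated ranges, these systems are in fact source-free, so that the vanishing of the data on $\mathcal I^0$ together with uniqueness for the resulting linear equation in $\tau$ forces $\breve{\bm\upsilon}^{(p)}=0$ and $\breve{\bm\phi}^{(p)}=0$. (Throughout, $\pb$ is taken large enough for the auxiliary Lemmas \ref{LemmaQuasiStaticData}, \ref{Lemma:MultipoleStructureStatic} and \ref{Lemma:HigherOrderL} to be available.)

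First I would settle the initial data on $\mathcal I^0$. The frame and connection coefficients on $\mathcal C_{a,\kappa}$ do not involve $\Omega$ and the conformal metric of the data equals $\mathring h_{ij}$, so $\breve{\mathbf c}_0^{(p)}=0$ and $\breve{\bm\Gamma}_0^{(p)}=0$ for all the orders needed (Lemma \ref{LemmaStaticData}; the $\Omega$-dependent Weyl part $\breve d_{AB}$ is driven only by $\breve W=\mathcal O(\rho^{\pb+1})$, since $\breve U=0$, and so, with $\breve\Omega=\mathcal O(\rho^{\pb+4})$ and $\breve\omega=\mathcal O(\rho^{\pb+3})$, is of order $\mathcal O(\rho^{\pb+3})$). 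By Lemma \ref{LemmaQuasiStaticData} one has $\breve\Theta_{ABCD}=\mathcal O(\rho^{\pb+2})$ and $\breve\phi_{ABCD}=\mathcal O(\rho^{\pb+1})$ on $\mathcal C_{a,\kappa}$; since the values of the fields at $\tau=0$ on $\mathcal I$ coincide with their $\rho$-derivative d-jets on $\mathcal I^0$, this gives $\breve{\bm\upsilon}^{(p)}|_{\tau=0}=0$ for $0\le p\le\pb+1$ and $\breve{\bm\phi}^{(p)}|_{\tau=0}=0$ for $0\le p\le\pb$.

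For the induction I would, for $0\le p\le\pb$, establish $\breve{\bm\upsilon}^{(p)}=0$ and $\breve{\bm\phi}^{(p)}=0$ together, assuming both for all $p'<p$. Looking at \eqref{upsilon:transport}: in $\sum_j\binom pj(\mathbf Q-\mathring{\mathbf Q})$ every term with $0<j<p$ carries a factor $\breve{\bm\upsilon}^{(j)}$ or $\breve{\bm\upsilon}^{(p-j)}$ and hence vanishes, while $j=0$ and $j=p$ contribute only the linear operator acting on $\breve{\bm\upsilon}^{(p)}$; in $\sum_j\binom pj(\mathbf L^{(j)}\cdot{\bm\phi}^{(p-j)}-\mathring{\mathbf L}^{(j)}\cdot\mathring{\bm\phi}^{(p-j)})$ the $j=0$ term vanishes because $\mathbf L|_{\rho=0}=0$, the pieces $\mathring{\mathbf L}^{(j)}\cdot\breve{\bm\phi}^{(p-j)}$ with $j\ge1$ vanish by the induction hypothesis, and the pieces $\breve{\mathbf L}^{(j)}\cdot{\bm\phi}^{(p-j)}$ with $1\le j\le p\le\pb$ vanish because $\breve{\mathbf L}^{(j)}=0$ there --- by Lemma \ref{Lemma:MultipoleStructureStatic} for $j\le3$, by Lemma \ref{Lemma:HigherOrderL} for $j=\pb+1,\pb+2$, and by the order estimate $\breve{\mathbf L}=\mathcal O(\rho^{\pb+3})$ for the intermediate values. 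Hence $\breve{\bm\upsilon}^{(p)}$ solves a homogeneous linear ODE in $\tau$ with vanishing data, so $\breve{\bm\upsilon}^{(p)}=0$; in particular $\breve{\mathbf c}^{(j)}=\breve{\bm\Gamma}^{(j)}=0$ for all $j\le p$, and then all non-homogeneous terms on the right of \eqref{pbianchi:transport}--\eqref{cbianchi:transport} for $\breve{\bm\phi}^{(p)}$ collapse to the homogeneous part, so uniqueness for this interior linear system on $|\tau|<1$ --- where $\mathbf E+\mathbf A^0=\mathrm{diag}(1+\tau,1,1,1,1-\tau)$ is invertible by \eqref{degeneracy} --- yields $\breve{\bm\phi}^{(p)}=0$. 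The one remaining order, $\breve{\bm\upsilon}^{(\pb+1)}$, is obtained the same way: the only term of \eqref{upsilon:transport} that could involve the not-yet-controlled $\breve{\bm\phi}^{(\pb+1)}$ is $\mathbf L^{(0)}\cdot{\bm\phi}^{(\pb+1)}$, which vanishes because $\mathbf L|_{\rho=0}=0$; all other non-homogeneous contributions vanish by $\breve{\bm\upsilon}^{(j)}=0$, $\breve{\bm\phi}^{(j)}=0$ ($0\le j\le\pb$) and $\breve{\mathbf L}^{(j)}=0$ ($1\le j\le\pb+1$), and the data $\breve{\bm\upsilon}^{(\pb+1)}|_{\tau=0}=0$ was checked above, so $\breve{\bm\upsilon}^{(\pb+1)}=0$.

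The hard part is precisely the bookkeeping in the induction step: one must verify that the deviation systems are genuinely source-free, i.e. that every cross term capable of coupling $(\breve{\bm\upsilon}^{(p)},\breve{\bm\phi}^{(p)})$ to a non-vanishing lower-order object has been accounted for. Two observations do the work. First, the vanishing of $\breve{\mathbf L}^{(j)}$ over the whole range $1\le j\le\pb+1$: the lemmas of Section \ref{Section:DataStaticUpToAnOrder} cover only $j\in\{1,2,3\}\cup\{\pb+1,\pb+2\}$, so the intermediate orders must be supplied from the fact that $\mathbf L$ depends on the data solely through $\Theta$ and $d_{AB}$, whose non-static parts are controlled by $\breve W=\mathcal O(\rho^{\pb+1})$ alone (as $\breve U=0$) and are therefore of order $\mathcal O(\rho^{\pb+3})$. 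Second, the asymmetry of the ranges --- $\breve{\bm\upsilon}$ up to $\pb+1$ but $\breve{\bm\phi}$ only up to $\pb$ --- is possible exactly because $\mathbf L|_{\rho=0}=0$ severs the coupling of $\breve{\bm\upsilon}^{(\pb+1)}$ to $\breve{\bm\phi}^{(\pb+1)}$, while on the data side $\breve\Theta_{ABCD}$ (order $\pb+2$) vanishes to one order higher than $\breve\phi_{ABCD}$ (order $\pb+1$).
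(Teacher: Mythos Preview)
Your proof is correct and follows essentially the same route as the paper's: split off the static solution, argue by induction on $p$ that the deviation system is homogeneous with vanishing initial data, and use uniqueness. If anything you are more careful than the paper on the one genuinely delicate point---the vanishing of $\breve{\mathbf L}^{(j)}$ for $1\le j\le p$, which the paper's proof uses silently when it declares the reduced $\breve{\bm\upsilon}^{(p)}$-equation ``homogeneous'' despite the surviving $\breve{\mathbf L}^{(p)}\cdot\mathring{\bm\phi}^{(0)}$ term; your appeal to the order estimate $\breve{\mathbf L}=\mathcal O(\rho^{\pb+3})$ (coming from $\breve U=0$, $\breve\Omega=\mathcal O(\rho^{\pb+4})$, $\breve\omega=\mathcal O(\rho^{\pb+3})$) is exactly what is needed to close that gap.
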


In other words, the first deviations from static evolution appear in
${\bm \phi}^{(\pb+1)}$. 

\medskip
\noindent
\begin{proof}  One first notes that 
\[
\breve{\bm\upsilon}^{(p)}_0 =0, \quad p\leq \pb+1
\]
and 
\[
\breve{\bm\phi}^{(p)}_0 =0, \quad p\leq \pb. 
\]
One then argues by induction. It is clear that
\[
\breve{\bm\upsilon}^{(0)}=0, \quad \breve{\bm\phi}^{(0)}=0.
\]
Now, given $p$ such that $1\leq p \leq \pb$, assume that
$\breve{\bm\upsilon}^{(p')}=0$ and $\breve{\bm\phi}^{(p')}=0$ for $0\leq
p'<p$. Substitution of the Ans\"atze
${\bm \upsilon}^{(j)}=\mathring{\bm\upsilon}^{(j)}+\breve{\bm\upsilon}^{(j)}$,
${\bm\phi}^{(j)}=\mathring{\bm\phi}^{(j)}+\breve{\bm\phi}^{(j)}$, $j=0,\ldots,p$ into the
$\upsilon$-transport equation \eqref{upsilon:transport} renders the equation
\begin{eqnarray*}
&& \partial_\tau \mathring{\bm \upsilon}^{(p)} + \partial_\tau
\breve{\bm \upsilon}^{(p)} = \mathbf{K}\cdot\mathring{\bm\upsilon}^{(p )}+  \mathbf{K}\cdot\breve{\bm \upsilon}^{(p)} 
\\
&& \hspace{4cm}+\sum_{j=0}^{p}\binom{p}{j}\left(\mathbf{Q}(\mathring{\bm \upsilon}^{(j)},\mathring{\bm \upsilon}^{(p-j)})+
  \mathring{\mathbf{L}}^{(j)}\cdot\mathring{\bm \phi}^{(p-j)}\right)  \\
&&
\hspace{4cm}+\sum_{j=0}^{p}\binom{p}{j}\left(\mathbf{Q}(\mathring{\bm \upsilon}^{(j)},\breve{\bm\upsilon}^{(p-j)})+
  \mathring{\mathbf{L}}^{(j)}\cdot\breve{\bm\phi}^{(p-j)}\right)  \\
&& \hspace{4cm}+\sum_{j=0}^{p}\binom{p}{j}\left(\mathbf{Q}(\breve{\bm \upsilon}^{(j)},\mathring{\bm \upsilon}^{(p-j)})+
  \breve{\mathbf{L}}^{(j)}\cdot\mathring{\bm \phi}^{(p-j)}\right),  \\
&&  \hspace{4cm}+\sum_{j=0}^{p}\binom{p}{j}\left(\mathbf{Q}(\breve{\bm \upsilon}^{(j)},\breve{\bm \upsilon}^{(p-j)})+
  \breve{\mathbf{L}}^{(j)}\cdot\breve{\bm \phi}^{(p-j)}\right). 
\end{eqnarray*}
Using the induction hypothesis and
\[
 \partial_\tau \mathring{\bm \upsilon}^{(p)} = \mathbf{K}\cdot\mathring{\bm \upsilon}^{(p )}+\sum_{j=0}^{p}\binom{p}{j}\left(\mathbf{Q}(\mathring{\bm \upsilon}^{(j)},\mathring{\bm \upsilon}^{(p-j)})+
  \mathring{\mathbf{L}}^{(j)}\cdot\mathring{\bm \phi}^{(p-j)}\right),
\]
one finds that 
\[
\partial_\tau
\breve{\bm \upsilon}^{(p)} = \mathbf{K}\cdot\breve{\bm \upsilon}^{(p)} +
\mathbf{Q}(\mathring{\bm \upsilon}^{(0)},\breve{\bm \upsilon}^{(p)}) +
\mathbf{Q}(\breve{\bm\upsilon}^{(p)},\mathring{\bm \upsilon}^{(0)})
-\mathring{\mathbf{L}}^{(0)}\cdot\breve{\bm\phi}^{(p)} - \breve{\mathbf{L}}^{(p)}
\cdot\mathring{\phi}^{(0)}.  
\]
As this last equation is homogeneous in the unknown $\breve{\bm \upsilon}^{(p)}$,
the initial condition $\breve{\bm \upsilon}^{(p)}_0=0$ implies
$\breve{\bm \upsilon}^{(p)}=0$. A similar argument leads to the following
equation for $\breve{\bm \phi}^{(p)}$:
\[
\sqrt{2}\mathbf{E}\cdot \partial_\tau \breve{\bm \phi}^{(p)} + (\mathring{\mathbf{A}}^\mu)^{(0)} \cdot\partial_\mu
\breve{\bm \phi}^{(p)} = \mathbf{F}^{(0)} \cdot \breve{\bm\phi}^{(p)}. 
\]
Again, the homogeneity of this last equation, together with the
initial condition $\breve{\bm\phi}^{(p)}_0=0$ renders
$\breve{\bm \phi}^{(p)}=0$. Thus, one has that $\breve{\bm \upsilon}^{(p)}=0$,
$\breve{\bm\phi}^{(p)}=0$, $0\leq p\leq \pb$.  The argument can
be repeated for $\breve{\bm\upsilon}^{(\pb+1)}$ as
$\breve{\bm\upsilon}^{(\pb+1)}_0=0$.
\end{proof}

\bigskip
The ideas of the proof of Lemma
\ref{Lemma:LowerOrderSolutions}  will be used to study with some detail the
solutions to the sector $\mathfrak{S}_{\pb+1}$ transport equations for
the orders $\pb+1$, $\pb+2$, $\pb+3$ and $\pb+4$.

\subsection{The transport equations at order $p=\pb+1$}

From Lemma \ref{Lemma:LowerOrderSolutions} one knows that
${\bm\upsilon}^{(\pb+1)}=\mathring{\bm\upsilon}^{(\pb+1)}$. Thus, at this order
  one only needs to study the solutions to the Bianchi transport
  equations. 

\medskip
Substitution of the Ans\"atze 
\[
{\bm \upsilon}^{(j)}=\mathring{\bm \upsilon}^{(j)}+\breve{\bm
  \upsilon}^{(j)}, \quad {\bm \phi}^{(j)} = \mathring{\bm \phi}^{(j)}
+ \breve{\bm \phi}^{(j)}, \quad 0\leq j \leq \pb+1
\]
into equations \eqref{pbianchi:transport}-\eqref{cbianchi:transport}
with $p=\pb+1$ and using Lemma \ref{Lemma:LowerOrderSolutions} one
obtains the following equations for $\breve{\bm \phi}^{(\pb+1)}$:
 \begin{eqnarray*}
&& \sqrt{2}\mathbf{E} \cdot\partial_\tau \breve{\bm\phi}^{(p)} + (\mathring{\mathbf{A}}^\mu)^{(0)} \cdot\partial_\mu
\breve{\bm\phi}^{(p)} = \mathring{\mathbf{F}}^{(0)}\cdot\breve{\bm\phi}^{(p)}, \\
&& (\mathring{\mathbf{B}}^\mu)^{(0)} \cdot\partial_\mu \breve{\bm\phi}^{(p)} =
\mathring{\mathbf{G}}^{(0)} \cdot \breve{\bm \phi}^{(p)}. 
\end{eqnarray*}
We now focus on the sector $\mathfrak{S}_{\pb+1}$ of these
equations. Recalling that $(\mathring{\mathbf{A}}^\mu)^{(0)}$, $(\mathring{\mathbf{B}}^\mu)^{(0)}$, 
$\mathring{\mathbf{F}}^{(0)}$, $\mathring{\mathbf{G}}^{(0)}$ only contain the sector $\mathfrak{S}_0$, one obtains
the matricial equations
\begin{subequations}
\begin{eqnarray}
&& \mathbf{A}\cdot \breve{\bm \phi}'_{\pb+1;2(\pb+1)} +
\mathbf{A}_{\pb+1;2(\pb+1)}\cdot\breve{\bm \phi}_{\pb+1;2(\pb+1)}=0, \label{BianchiEvolution1}\\
&& \mathbf{B} \cdot\breve{\bm \phi}'_{\pb+1;2(\pb+1)} +
\mathbf{B}_{\pb+1;2(\pb+1)} \cdot\breve{\bm
  \phi}_{\pb+1;2(\pb+1)}=0, \label{BianchiConstraint1}
\end{eqnarray}
\end{subequations}
where $\mathbf{A}$, $\mathbf{A}_{\pb+1;2(\pb+1)}$, $\mathbf{B}$ and
$\mathbf{B}_{\pb+1;2(\pb+1)}$ are the matrices given by
\eqref{matrixA}-\eqref{matrixB}. For the sake of the simplicity of the
presentation, the subindex $k$ has been omitted from these and most of
the subsequent equations. A lengthy but straightforward computation shows that the initial data
for these equations is given by 
\begin{equation}
\breve{\bm{\phi}}_{\pb+1;2(\pb+1),k}(0) =(\breve{a}_{0,k},
\breve{a}_{1,k}, \breve{a}_{2,k}, \breve{a}_{1,k}, \breve{a}_{0,k}), \label{Data:p+1}
\end{equation}
with $k=0,\ldots,2(\pb+1)$ and
\begin{eqnarray*}
&& \breve{a}_{0,k} \equiv -\sqrt{\pb(\pb+1)(\pb+2)(\pb+3)}\breve{w}_{\pb+1;2(\pb+1),k}, \\
&& \breve{a}_{1,k} \equiv -4 (\pb+3)\sqrt{(\pb+1)(\pb+2)}\breve{w}_{\pb+1;2(\pb+1),k}, \\
&& \breve{a}_{2,k} \equiv -6 (\pb+2)(\pb+3) \breve{w}_{\pb+1;2(\pb+1),k}.
\end{eqnarray*}

The previous equations lead to the following crucial observation:

\begin{observation}
Equations \eqref{BianchiEvolution1}-\eqref{BianchiConstraint1} and
their corresponding initial data are formally identical to the sector
$\mathfrak{S}_{\pb+1}[{\bm\phi}^{(\pb+1)}]$ transport equations for data
which is Schwarzschildean up to order $\pb+1$; the
solutions are, therefore, also formally identical to those obtained in
\cite{Val10}.
\end{observation}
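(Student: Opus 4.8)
The plan is to prove the Observation by a term-by-term comparison, since both the reduced system \eqref{BianchiEvolution1}--\eqref{BianchiConstraint1} and its data \eqref{Data:p+1} have already been extracted: what remains is to check that every object appearing in them is either literally data-independent or has exactly the same structural dependence on the leading deviation coefficients $\breve{w}_{\pb+1;2(\pb+1),k}$ as in the Schwarzschildean analysis of \cite{Val10}. First I would record why the order-$(\pb+1)$ equation for $\breve{\bm\phi}^{(\pb+1)}$ is genuinely homogeneous: by Lemma \ref{Lemma:LowerOrderSolutions} all lower-order non-static s-jets vanish, $\breve{\bm\upsilon}^{(p')}=0$ for $p'\le\pb+1$ and $\breve{\bm\phi}^{(p')}=0$ for $p'\le\pb$, so in \eqref{pbianchi:transport}--\eqref{cbianchi:transport} with $p=\pb+1$ the only surviving non-static contributions are those multiplying $\breve{\bm\phi}^{(\pb+1)}$ itself, and these carry the coefficients $(\mathring{\mathbf{A}}^\mu)^{(0)}$, $(\mathring{\mathbf{B}}^\mu)^{(0)}$, $\mathring{\mathbf{F}}^{(0)}$, $\mathring{\mathbf{G}}^{(0)}$, which by Lemma \ref{Lemma:MultipoleStructureStatic} lie entirely in the sector $\mathfrak{S}_0$; projecting onto $\mathfrak{S}_{\pb+1}$ with the Clebsch--Gordan formula \eqref{Clebsch-Gordan} then produces exactly \eqref{BianchiEvolution1}--\eqref{BianchiConstraint1}.

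Next I would observe that the coefficient matrices $\mathbf{A}$, $\mathbf{A}_{\pb+1;2(\pb+1)}$, $\mathbf{B}$, $\mathbf{B}_{\pb+1;2(\pb+1)}$ entering \eqref{BianchiEvolution1}--\eqref{BianchiConstraint1} are the universal matrices \eqref{matrixA}--\eqref{matrixB}: they depend on the multi-index only through $p=\pb+1$ and $q=\pb+1$ (via $\beta_1\equiv\sqrt{(q-1)(q+2)}$ and $\beta_2\equiv\sqrt{q(q+1)}$), and are otherwise constant — in particular they contain no information about the conformal metric beyond the cn-gauge normalisation, which is common to the static and the Schwarzschildean setting. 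Hence they are identical, entry for entry, to the matrices governing the sector $\mathfrak{S}_{\pb+1}[{\bm\phi}^{(\pb+1)}]$ for data that is Schwarzschildean up to order $\pb+1$. For the initial data I would invoke Lemma \ref{LemmaQuasiStaticData}, which already asserts that $\breve{\phi}^{(\pb+1)}_{ABCD}$ on $\mathcal{C}_{a,\kappa}$ is of the form of a deviation from Schwarzschild data; concretely, feeding $\breve{W}$ from \eqref{breveWLifted} into \eqref{Weyl:Data} and using the expansions \eqref{Omega1}--\eqref{Omega2} shows that every factor multiplying $\breve{w}_{\pb+1;2(\pb+1),k}$ at this order is Schwarzschildean, so the computation yielding \eqref{Data:p+1} is verbatim the one performed in \cite{Val10}: the same five-component pattern $(\breve{a}_{0,k},\breve{a}_{1,k},\breve{a}_{2,k},\breve{a}_{1,k},\breve{a}_{0,k})$ with the $\breve{a}_{i,k}$ given by the same linear functions of $\breve{w}_{\pb+1;2(\pb+1),k}$ and $\pb$.

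With the system and the data identified, the conclusion follows from uniqueness of solutions of a linear ordinary differential system: the solution of \eqref{BianchiEvolution1}--\eqref{BianchiConstraint1} with data \eqref{Data:p+1} coincides, after the obvious relabelling of the amplitude, with the solution computed in \cite{Val10} for the Schwarzschildean case, so its explicit closed form and, in particular, its behaviour at the critical sets $\mathcal{I}^\pm$ and its polynomial dependence on $\tau$ can be imported directly from the computer-algebra output there (this is also consistent with Lemma \ref{Lemma:LogsLeadToLogs}). The step I expect to be the real obstacle — and the one a careful write-up cannot merely assert — is the reduction in the first paragraph together with the passage from \eqref{Weyl:Data} to \eqref{Data:p+1}: one has to be sure that no component of $\breve{\phi}^{(\pb+1)}_{ABCD}$ living outside $\mathfrak{S}_{\pb+1}$ couples back into the sector and that no static--deviation cross term at order $\rho^{\pb+1}$ fails to be Schwarzschildean, and this is exactly where the standing hypothesis $\pb\ge 2$ enters through $\breve{\Omega}=\mathcal{O}(|x|^{\pb+4})$, $\breve{\omega}=\mathcal{O}(|x|^{\pb+3})$ and the fact that the non-Schwarzschildean corrections to $\Omega$ and $D_{(AB}D_{CD)}\Omega$ only appear at $\mathcal{O}(\rho^{5})$ and $\mathcal{O}(\rho^{3})$; everything downstream of that is the same bookkeeping as in \cite{Val10}.
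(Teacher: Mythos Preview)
Your proposal is correct and follows essentially the same approach as the paper. In the paper the Observation is not given a separate proof but is stated as an immediate consequence of the explicit derivation preceding it: the homogeneity of the $\breve{\bm\phi}^{(\pb+1)}$ system via Lemma~\ref{Lemma:LowerOrderSolutions}, the universality of the matrices \eqref{matrixA}--\eqref{matrixB} after projecting onto $\mathfrak{S}_{\pb+1}$ using that the zeroth-order coefficients lie in $\mathfrak{S}_0$, and the explicit form \eqref{Data:p+1} of the data obtained from Lemma~\ref{LemmaQuasiStaticData} and the expansion \eqref{breveWLifted}; you have correctly identified and articulated each of these ingredients.
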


As a consequence of the latter observation one obtains from the analysis in
\cite{Val10} that: 
\[
\breve{\bm \phi}_{\pb+1;2(\pb+1)} = \tilde{\bm \varphi}_{\pb+1}(\tau) (1-\tau)^{\pb-1} (1+\tau)^{\pb-1},
\]
with $\tilde{\varphi}_{\pb+1}(\tau)$ having entries which are
polynomials of degree 4 in $\tau$. In particular, 
\[
\tilde{\bm \varphi}_{\pb+1}(0)= \breve{\bm{\phi}}_{\pb+1;2(\pb+1)}(0),
\]
as given by \eqref{Data:p+1}.


\medskip
Combining this analysis with Corollary
\ref{StaticCorollary} one finds the following result:

\begin{proposition}
\label{Lemma:pplus1}
The solution to the transport equations at spatial infinity at order
$\pb+1$ for data which is static up to order $\pb$ are polynomial in
$\tau$. Hence, they extend analytically through $\tau=\pm 1$. 
\end{proposition}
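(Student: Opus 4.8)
The plan is to combine the two ingredients that have just been assembled: the structural analysis of the sector $\mathfrak{S}_{\pb+1}$ Bianchi transport equations at order $\pb+1$ and the smoothness result for the static part. First I would recall from Lemma \ref{Lemma:LowerOrderSolutions} that ${\bm\upsilon}^{(\pb+1)}=\mathring{\bm\upsilon}^{(\pb+1)}$, so that the entire deviation at this order is carried by ${\bm\phi}^{(\pb+1)}$, and write ${\bm\phi}^{(\pb+1)}=\mathring{\bm\phi}^{(\pb+1)}+\breve{\bm\phi}^{(\pb+1)}$. By Corollary \ref{StaticCorollary} the static piece $\mathring{\bm\upsilon}^{(\pb+1)}$ and $\mathring{\bm\phi}^{(\pb+1)}$ are polynomial in $\tau$ and extend analytically through $\tau=\pm1$; hence the whole issue reduces to the behaviour of the non-static piece $\breve{\bm\phi}^{(\pb+1)}$, and in fact, by the sector decomposition and Lemma \ref{Lemma:MultipoleStructureStatic}, only the sector $\mathfrak{S}_{\pb+1}$ of $\breve{\bm\phi}^{(\pb+1)}$ can be nonzero (the lower sectors are static, the higher ones vanish because $\breve{W}$ starts at order $\pb+1$ with a single harmonic piece).

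Next I would invoke the Observation immediately preceding the statement: equations \eqref{BianchiEvolution1}--\eqref{BianchiConstraint1} together with the initial data \eqref{Data:p+1} are \emph{formally identical} to the corresponding Schwarzschildean-up-to-order-$\pb+1$ transport equations already solved in \cite{Val10}. Therefore the explicit solution quoted there applies verbatim:
\[
\breve{\bm\phi}_{\pb+1;2(\pb+1)} = \tilde{\bm\varphi}_{\pb+1}(\tau)\,(1-\tau)^{\pb-1}(1+\tau)^{\pb-1},
\]
with $\tilde{\bm\varphi}_{\pb+1}(\tau)$ having polynomial entries of degree $4$ in $\tau$ and $\tilde{\bm\varphi}_{\pb+1}(0)=\breve{\bm\phi}_{\pb+1;2(\pb+1)}(0)$. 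Since $\pb\geq 3$ in the regime where the lemmas of Section \ref{Section:DataStaticUpToAnOrder} are being applied (and in any case $\pb\geq 2$ throughout), the exponents $\pb-1$ are non-negative integers, so the right-hand side is a genuine polynomial in $\tau$ with no logarithmic or negative-power terms; it is therefore manifestly smooth, indeed analytic, at $\tau=\pm1$.

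Finally I would assemble the pieces: ${\bm\upsilon}^{(\pb+1)}=\mathring{\bm\upsilon}^{(\pb+1)}$ is polynomial in $\tau$ (Corollary \ref{StaticCorollary}), and ${\bm\phi}^{(\pb+1)}=\mathring{\bm\phi}^{(\pb+1)}+\breve{\bm\phi}^{(\pb+1)}$ is a sum of a polynomial-in-$\tau$ static part and the polynomial-in-$\tau$ non-static part just exhibited; hence the full order-$(\pb+1)$ solution is polynomial in $\tau$ and extends analytically through the critical sets $\tau=\pm1$. The main obstacle—or rather the point requiring the most care—is justifying that the deviation at this order is confined to the single sector $\mathfrak{S}_{\pb+1}$ and that the reduced equations for it are literally the Schwarzschildean ones of \cite{Val10}; this rests on Lemmas \ref{LemmaStaticData}--\ref{Lemma:HigherOrderL} (so that all lower-order data, the matrices $(\mathring{\mathbf{A}}^\mu)^{(0)}$, $(\mathring{\mathbf{B}}^\mu)^{(0)}$, $\mathring{\mathbf{F}}^{(0)}$, $\mathring{\mathbf{G}}^{(0)}$ and the relevant $\mathbf{L}^{(j)}$ carry only $\mathfrak{S}_0$) and on the harmonicity of the leading coefficient $\breve{w}_{\pb+1;2(\pb+1),k}$ noted after Definition \ref{StaticUpToCertainOrder}, which forces the lift \eqref{breveWLifted} to produce only the $\mathfrak{S}_{\pb+1}$ contribution at this order. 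Once that reduction is in place, the conclusion is immediate from the explicit solution and Corollary \ref{StaticCorollary}.
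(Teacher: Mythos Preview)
Your proposal is correct and follows essentially the same approach as the paper: you split ${\bm\phi}^{(\pb+1)}$ into its static and non-static parts, invoke Corollary \ref{StaticCorollary} for the former, reduce the latter to the sector $\mathfrak{S}_{\pb+1}$, appeal to the Observation that the resulting equations and data coincide with the Schwarzschildean case of \cite{Val10}, and read off the explicit polynomial solution. If anything, you are slightly more explicit than the paper in justifying why the non-static piece lives only in the sector $\mathfrak{S}_{\pb+1}$ (via the harmonicity remark after Definition \ref{StaticUpToCertainOrder} and Lemma \ref{LemmaQuasiStaticData}), which the paper leaves largely implicit.
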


\subsection{The transport equations at order $p=\pb+2$}

Using Lemma \ref{Lemma:LogsLeadToLogs} one finds that the solutions
${\bm \upsilon}^{(\pb+2)}$ to the order $\pb+2$ ${\bm
  \upsilon}$-transport equations are polynomial in $\tau$ given that ${\bm \upsilon}^{(p)}$
and ${\bm \phi}^{(p)}$ for $0\leq p\leq \pb+1$ are polynomial in
$\tau$. However, we require more precise information. 

\medskip
Again, we consider the transport equations \eqref{upsilon:transport}
for the order $\pb+2$. The substitution of the Ans\"atze
\[
 {\bm\upsilon}^{(p)} = \mathring{\bm\upsilon}^{(p)} +
\breve{\bm\upsilon}^{(p)}, \quad
 {\bm\phi}^{(p)} = \mathring{\bm\phi}^{(p)} + \breve{\bm\phi}^{(p)},
\quad 0\leq p \leq \pb+2,
\]
and considerations similar to the ones used for the order $\pb+1$
lead to the following equations for $\breve{\bm\upsilon}^{(\pb+2)}$ and $\breve{\bm\phi}^{(\pb+2)}$:
\begin{eqnarray*}
&& \partial_\tau \breve{\bm\upsilon}^{(\pb+2)} = \mathbf{K}\cdot \breve{\bm\upsilon}^{(\pb+2)}
+ \mathbf{Q}(\mathring{\bm\upsilon}^{(0)}, \breve{\bm\upsilon}^{(\pb+2)}) +
\mathbf{Q}(\breve{\bm\upsilon}^{(\pb+2)},\mathring{\bm\upsilon}^{(0)}) +
\mathring{\mathbf{L}}^{(1)} \cdot\breve{\bm\phi}^{(\pb+1)}, \\
&& \sqrt{2}\mathbf{E}\cdot \partial_\tau \breve{\bm\phi}^{(\pb+2)} +
(\mathring{\mathbf{A}}^\mu)^{(0)}\cdot \partial_\mu \breve{\bm\phi}^{(\pb+2)} = \mathring{\mathbf{F}}^{(0)}
\cdot\breve{\bm\phi}_{\pb+2} + (\pb+2) \mathring{\mathbf{F}}^{(1)}\cdot
\breve{\bm\phi}^{(\pb+1)}  \\
&& \hspace{6cm}+ \breve{\mathbf{F}}^{(\pb+2)} \cdot
\mathring{\bm\phi}^{(0)} - (\pb+2) (\mathring{\mathbf{A}}^\mu)^{(1)}\cdot \partial_\mu
\breve{\bm\phi}^{(\pb+1)}, \\
&& (\mathring{\mathbf{B}}^\mu)^{(0)} \partial_\mu \breve{\bm\phi}^{(\pb+2)} = \mathring{\mathbf{G}}^{(0)}
\cdot\breve{\bm\phi}_{\pb+2} + (\pb+2) \mathring{\mathbf{G}}^{(1)}
\cdot\breve{\bm\phi}^{(\pb+1)}  \\
&& \hspace{6cm}+ \breve{\mathbf{G}}^{(\pb+2)} 
\cdot\mathring{\bm\phi}^{(0)} - (\pb+2) (\mathring{\mathbf{B}}^\mu)^{(1)}\cdot \partial_\mu
\breve{\bm\phi}^{(\pb+1)}. \\
\end{eqnarray*}
Again, as a consequence of Lemma \ref{Lemma:MultipoleStructureStatic}
one has that
\[
\mathring{\bm\upsilon}^{(0)}, \quad \mathring{\bm\phi}^{(0)}, \quad \mathring{\mathbf{L}}^{(1)}
\]
contain only contributions to the sector
$\mathfrak{S}_{0}$ so that one obtains directly the following equations for the
components of $\mathfrak{S}_{\pb+1}[\breve{\bm\upsilon}^{(\pb+2)},\breve{\bm\phi}^{(\pb+2)}]$:
\begin{subequations}
\begin{eqnarray}
&& \breve{\bm{\upsilon}}'_{\pb+2;2(\pb+1)} = \mathbf{K} \breve{\bm{\upsilon}}_{\pb+2;2(\pb+1)} +
\mathbf{Q}(\mathring{\bm{\upsilon}}_{0;0},
\breve{\bm{\upsilon}}_{\pb+2;2(\pb+1)}) \nonumber \\
&& \hspace{7mm} +
\mathbf{Q}(\breve{\bm{\upsilon}}_{\pb+2;2(\pb+1)},\mathring{\bm{\upsilon}}_{0;0}) +
\mathring{\mathbf{L}}_{1;0} \cdot\breve{\bm{\phi}}_{\pb+1;2(\pb+1)}, \label{Order2a}\\
&& (\mathbf{E}+\mathbf{A})\cdot \breve{\bm \phi}'_{\pb+2;2(\pb+1)} +
\mathbf{A}_{\pb+2;2(\pb+1)}\cdot \breve{\bm \phi}_{\pb+2;2(\pb+1)} =
\mathring{\mathbf{F}}_{0;0}\cdot 
\breve{\bm \phi}_{\pb+2;2(\pb+1)} \nonumber \\ 
&& \hspace{7mm}+ (\pb+2) \mathring{\mathbf{F}}_{1;0}
\cdot\breve{\bm \phi}_{\pb+1;2(\pb+1)}  + \breve{\mathbf{F}}_{\pb+2;2(\pb+1)} \cdot
\mathring{\bm \phi}_{0;0} - (\pb+2) \mathring{\mathbf{A}}_{1;0} \cdot
\breve{\bm \phi}_{\pb+1;2(\pb+1)},  \label{Order2b}\\
&&\mathbf{B} \cdot\breve{\bm \phi}'_{\pb+2;2(\pb+1)} +
\mathbf{B}_{\pb+2;2(\pb+1)}\cdot 
\breve{\bm \phi}_{\pb+2;2(\pb+1)} =  \mathring{\mathbf{G}}_{0;0}\cdot
\breve{\bm\phi}_{\pb+2} + (\pb+2) \mathring{\mathbf{G}}_{1;0}\cdot
\breve{\bm\phi}_{\pb+1;2(\pb+1)}  \nonumber \\
&& \hspace{7mm} + \breve{\mathbf{G}}_{\pb+2;2(\pb+1)} \cdot 
\mathring{\bm \phi}_{0;0} - (\pb+2) \mathring{\mathbf{B}}_{1;0} \cdot
\breve{\bm \phi}_{\pb+1;2(\pb+1)}. \label{Order2c}
\end{eqnarray}
\end{subequations}
Using Lemma \ref{Lemma:LowerOrderSolutions}, one sees that the components in
\[
\mathring{\bm\upsilon}_{0;0}, \; \mathring{\mathbf{L}}_{1;0}, \; 
\mathring{\mathbf{F}}_{0;0}, \; \mathring{\mathbf{G}}_{0;0},
\]
are exactly Minkowskian  while the components in
\[
\mathring{\bm\phi}_{0;0}, \; \mathring{\mathbf{A}}_{1;0}, \;
\mathring{\mathbf{B}}_{1;0}, \; \mathring{\mathbf{F}}_{1;0}, \; \mathring{\mathbf{G}}_{1;0} ,  
\]
are exactly Schwarzschildean ---that is, they depend only on the mass
$m$. On the other hand, as already seen, the components in
$\breve{\bm\phi}_{\pb+1;2(\pb+1)}$ depend only on
$\breve{w}_{\pb+1;2(\pb+1),k}$. This leads to the following crucial
observation at this order:

\begin{observation}
 The equations \eqref{Order2a}-\eqref{Order2c} are formally identical
to the order $\pb+2$ transport equations for initial data sets which
are Schwarzschildean up to order $\pb$; similarly, due to Lemma
\ref{LemmaQuasiStaticData} the initial data set is formally also of
the form of a perturbation of Schwarzschild.
\end{observation}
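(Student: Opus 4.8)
The plan is to establish the claim by reading off the structure of the closed linear system \eqref{Order2a}--\eqref{Order2c} and matching each of its ingredients, term by term, with the corresponding object in the problem where the background static data is exactly Schwarzschild and a single order-$(\pb+1)$ harmonic is switched on. First I would note that \eqref{Order2a}--\eqref{Order2c} is indeed a \emph{closed} linear system for the sector-$\mathfrak{S}_{\pb+1}$ components of $(\breve{\bm\upsilon}^{(\pb+2)},\breve{\bm\phi}^{(\pb+2)})$: every coefficient occurring in it --- $\mathring{\bm\upsilon}_{0;0}$, $\mathring{\mathbf{L}}_{1;0}$, $\mathring{\mathbf{F}}_{0;0}$, $\mathring{\mathbf{G}}_{0;0}$, $\mathring{\bm\phi}_{0;0}$, $\mathring{\mathbf{A}}_{1;0}$, $\mathring{\mathbf{B}}_{1;0}$, $\mathring{\mathbf{F}}_{1;0}$, $\mathring{\mathbf{G}}_{1;0}$ --- lies in the sector $\mathfrak{S}_0$ by Lemma \ref{Lemma:MultipoleStructureStatic}, so by the Clebsch--Gordan formula \eqref{Clebsch-Gordan} the products $\mathfrak{S}_0\times\mathfrak{S}_{\pb+1}$ stay in $\mathfrak{S}_{\pb+1}$, while the only external forcing, $\breve{\bm\phi}_{\pb+1;2(\pb+1)}$, is itself pure $\mathfrak{S}_{\pb+1}$. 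Its principal part --- the matrices $\mathbf{A}$, $\mathbf{A}_{\pb+2;2(\pb+1)}$, $\mathbf{B}$, $\mathbf{B}_{\pb+2;2(\pb+1)}$ of \eqref{matrixA}--\eqref{matrixB} --- depends only on $p=\pb+2$ and $q=\pb+1$, hence is identical to the one in the Schwarzschild-background problem.

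Next I would invoke the identifications recorded just before the Observation. By the discussion there, which uses Lemma \ref{Lemma:LowerOrderSolutions}, the quantities $\mathring{\bm\upsilon}_{0;0}$, $\mathring{\mathbf{L}}_{1;0}$, $\mathring{\mathbf{F}}_{0;0}$, $\mathring{\mathbf{G}}_{0;0}$ are exactly Minkowskian and $\mathring{\bm\phi}_{0;0}$, $\mathring{\mathbf{A}}_{1;0}$, $\mathring{\mathbf{B}}_{1;0}$, $\mathring{\mathbf{F}}_{1;0}$, $\mathring{\mathbf{G}}_{1;0}$ are exactly Schwarzschildean --- this is where $\mathring{U}=1+\mathcal{O}(\rho^4)$, $\mathring{W}=m/2+\mathcal{O}(\rho^2)$ and the expansions \eqref{Omega1}--\eqref{Omega2} enter, since through the $\rho$-orders that contribute at jet order $\leq 1$ the background static data cannot be distinguished from Schwarzschild data. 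Moreover, by the order-$(\pb+1)$ analysis above, culminating in Proposition \ref{Lemma:pplus1}, together with the data \eqref{Data:p+1}, the forcing $\breve{\bm\phi}_{\pb+1;2(\pb+1)}$ depends only on $m$ and on $\breve{w}_{\pb+1;2(\pb+1),k}$, i.e.\ it is a deviation from Schwarzschild; and, by Lemma \ref{Lemma:HigherOrderL}, no $\breve{\mathbf{L}}^{(\pb+1)}$ or $\breve{\mathbf{L}}^{(\pb+2)}$ term contributes to \eqref{Order2a}. Hence the coefficient functions and the forcing of \eqref{Order2a}--\eqref{Order2c} agree, term by term, with those of the order-$(\pb+2)$, sector-$\mathfrak{S}_{\pb+1}$ transport equations for a Schwarzschild background carrying the single harmonic $\breve{w}_{\pb+1;2(\pb+1),k}$.

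Finally, for the initial data: since the frame and connection data on $\mathcal{C}_{a,\kappa}$ are independent of $\Omega$ (cf.\ Section \ref{Section:InitialDataFormulae}), $\breve{\bm\upsilon}^{(\pb+2)}_0$ reduces to the curvature piece built from $\breve{\Theta}^{(\pb+2)}_{ABCD}$ and $\breve{\bm\phi}^{(\pb+2)}_0$ to the piece built from $\breve{\phi}^{(\pb+2)}_{ABCD}$; by Lemma \ref{LemmaQuasiStaticData} both are deviations from Schwarzschild data, obtained from \eqref{Ricci:Data}--\eqref{Weyl:Data} via the expansions \eqref{Omega1}--\eqref{Omega2} and the linearisation \eqref{Clebsch-Gordan} exactly as in the perturbed-Schwarzschild situation. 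Assembling the three steps gives the Observation, and it then remains only to quote from \cite{Val10} for the explicit solutions. I expect the real work --- and the only place the argument could break --- to be bookkeeping: one must be certain that no sector beyond $\mathfrak{S}_0$ sneaks into the static coefficients at jet orders $0$ and $1$ (which is exactly what Lemmas \ref{Lemma:MultipoleStructureStatic} and \ref{Lemma:HigherOrderL} are designed to rule out) and that the Clebsch--Gordan reductions of all the relevant products leave the sector $\mathfrak{S}_{\pb+1}$ closed, so that the comparison with the Schwarzschild-background calculation of \cite{Val10} can be made sector by sector.
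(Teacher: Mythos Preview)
Your proposal is correct and follows essentially the same approach as the paper: the Observation is not given a separate proof there, but is stated as the direct consequence of the preceding paragraph, which identifies the Minkowskian and Schwarzschildean coefficients exactly as you do and notes that $\breve{\bm\phi}_{\pb+1;2(\pb+1)}$ depends only on $\breve{w}_{\pb+1;2(\pb+1),k}$. Your write-up is simply a more explicit and carefully organised version of that same argument, with the Clebsch--Gordan closure and the role of Lemma~\ref{LemmaQuasiStaticData} for the initial data spelled out in detail.
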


\smallskip
As a consequence of the previous discussion one can directly use the analysis and results of \cite{Val10} to
directly conclude that the solutions
$\breve{\bm\upsilon}_{\pb+2;2(\pb+1)},$ and
$\breve{\bm\phi}_{\pb+;2(\pb+1)},$  to equations
\eqref{Order2a}-\eqref{Order2c} are polynomial in $\tau$. Combining this observation with Corollary
\ref{StaticCorollary} one has that:

\begin{proposition}
The solutions to the order $\pb+2$ transport equations
\eqref{upsilon:transport}-\eqref{cbianchi:transport} are
polynomial in $\tau$ and, thus, extend analytically through
$\tau=\pm1$. 
\end{proposition}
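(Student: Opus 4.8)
The plan is to treat the static and non-static parts of $\bm\upsilon^{(\pb+2)}=\mathring{\bm\upsilon}^{(\pb+2)}+\breve{\bm\upsilon}^{(\pb+2)}$ and $\bm\phi^{(\pb+2)}=\mathring{\bm\phi}^{(\pb+2)}+\breve{\bm\phi}^{(\pb+2)}$ separately. For the static parts, Corollary \ref{StaticCorollary} already asserts that $\mathring{\bm\upsilon}^{(\pb+2)}$ and $\mathring{\bm\phi}^{(\pb+2)}$ are polynomial in $\tau$ and extend analytically through $\mathcal{I}^\pm$, so the whole problem reduces to controlling the non-static parts $\breve{\bm\upsilon}^{(\pb+2)}$ and $\breve{\bm\phi}^{(\pb+2)}$. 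The first thing I would record is that, by Lemma \ref{Lemma:LowerOrderSolutions} together with Proposition \ref{Lemma:pplus1}, the only lower-order non-static quantity feeding the order $\pb+2$ transport equations is $\breve{\bm\phi}^{(\pb+1)}$, which is already known to be polynomial in $\tau$ and, from the analysis of the order $\pb+1$ equations, supported entirely in the sector $\mathfrak{S}_{\pb+1}$.

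Next I would dispose of $\breve{\bm\upsilon}^{(\pb+2)}$. Since all the s-jets $J^{(p)}_{\mathcal{I}}[\bm\upsilon]$ and $J^{(p)}_{\mathcal{I}}[\bm\phi]$ are polynomial in $\tau$ for $p\leq\pb+1$ (Corollary \ref{StaticCorollary} for the static content, Proposition \ref{Lemma:pplus1} and Lemma \ref{Lemma:LowerOrderSolutions} for the non-static content), Lemma \ref{Lemma:LogsLeadToLogs} applies directly and yields that $J^{(\pb+2)}_{\mathcal{I}}[\bm\upsilon]$, hence $\bm\upsilon^{(\pb+2)}$, is polynomial in $\tau$. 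Alternatively, one can integrate the transport equation \eqref{Order2a}, which is linear and homogeneous in $\breve{\bm\upsilon}_{\pb+2;2(\pb+1)}$ with the only inhomogeneity the already-polynomial term $\mathring{\mathbf{L}}_{1;0}\cdot\breve{\bm\phi}_{\pb+1;2(\pb+1)}$ and with a $\tau$-independent (sector-$\mathfrak{S}_{\pb+1}$) initial datum, so that the outcome is polynomial; all remaining sectors of $\breve{\bm\upsilon}^{(\pb+2)}$ vanish.

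The genuinely delicate step is $\breve{\bm\phi}^{(\pb+2)}$, because the Bianchi system carries the matrix $\mathbf{E}+\mathbf{A}^0(\mathbf{c})$ that degenerates at $\tau=\pm1$, so polynomiality is not automatically inherited from polynomial sources. Here I would argue as follows. By Lemma \ref{Lemma:MultipoleStructureStatic} the quantities $\mathring{\bm\upsilon}^{(0)}$, $\mathring{\bm\phi}^{(0)}$, $\mathring{\mathbf{L}}^{(1)}$ and the coefficients $(\mathring{\mathbf{A}}^\mu)^{(0)},(\mathring{\mathbf{A}}^\mu)^{(1)},(\mathring{\mathbf{B}}^\mu)^{(0)},(\mathring{\mathbf{B}}^\mu)^{(1)},\mathring{\mathbf{F}}^{(0)},\mathring{\mathbf{F}}^{(1)},\mathring{\mathbf{G}}^{(0)},\mathring{\mathbf{G}}^{(1)}$ all lie in the sector $\mathfrak{S}_0$; combined with the Clebsch--Gordan rule \eqref{Clebsch-Gordan}, which sends the product of a $\mathfrak{S}_0$ datum with a $\mathfrak{S}_{\pb+1}$ datum back into $\mathfrak{S}_{\pb+1}$, and with Lemma \ref{Lemma:LowerOrderSolutions}, this forces $\breve{\bm\phi}^{(\pb+2)}$ to be supported only in $\mathfrak{S}_{\pb+1}$. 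Its transport equations are then exactly \eqref{Order2b}--\eqref{Order2c}, and by the observation preceding the proposition these equations ---together with their initial data on $\mathcal{I}^0$, which is of deviation-from-Schwarzschild type by Lemma \ref{LemmaQuasiStaticData}--- are formally identical to the order $\pb+2$ sector-$\mathfrak{S}_{\pb+1}$ Bianchi equations for data that is Schwarzschildean up to order $\pb$ with a single harmonic perturbation added. The solution of precisely that system has been computed explicitly, via the reduced equations of Sections 6.2 and 6.3 of \cite{Val10}, and is polynomial in $\tau$; transcribing it coefficient by coefficient yields polynomiality of $\breve{\bm\phi}_{\pb+2;2(\pb+1)}$. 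Adding back the polynomial static part, and recalling that a function polynomial in $\tau$ is entire and hence extends analytically (in particular smoothly) through $\mathcal{I}^\pm$, completes the argument. I expect the main obstacle to be this last step: one must verify that the non-static sector-$\mathfrak{S}_{\pb+1}$ Bianchi system at order $\pb+2$ really reproduces the Schwarzschild-perturbation system solved in \cite{Val10} term for term ---in particular that the inhomogeneities built from $\breve{\bm\phi}^{(\pb+1)}$, $\mathring{\mathbf{F}}^{(1)}$, $(\mathring{\mathbf{A}}^\mu)^{(1)}$ and the datum $\breve{\mathbf{F}}_{\pb+2;2(\pb+1)}\cdot\mathring{\bm\phi}_{0;0}$ coincide with their counterparts there--- so that no genuinely new source, which could in principle generate logarithmic terms against the degeneracy of $\mathbf{E}+\mathbf{A}^0$ at $\tau=\pm1$, enters the computation.
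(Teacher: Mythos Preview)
Your proposal is correct and follows essentially the same approach as the paper: split into static and non-static parts, handle the static part via Corollary~\ref{StaticCorollary}, obtain polynomiality of $\bm\upsilon^{(\pb+2)}$ from Lemma~\ref{Lemma:LogsLeadToLogs}, and then reduce the non-static Bianchi system to the sector $\mathfrak{S}_{\pb+1}$ where, by the Observation preceding the proposition and Lemma~\ref{LemmaQuasiStaticData}, the equations and data are formally identical to the Schwarzschildean-up-to-order-$\pb$ case already solved in \cite{Val10}. Your anticipation of the main obstacle --- checking that the sector-$\mathfrak{S}_{\pb+1}$ system really coincides term for term with its Schwarzschild counterpart --- is exactly what the paper addresses via that Observation.
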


\subsection{The transport equations at order $p=\pb+3$}
We now adapt the procedure discussed in the previous section to the
analysis of the order $\pb+3$ transport equations. As in the previous
order the polynomial dependence in $\tau$ follows directly from Lemma
\ref{Lemma:LogsLeadToLogs} once one knows that the entries in
${\bm\phi}^{(\pb+2)}$ are polynomial. Further detailed information will
follow from the analysis of the sector $\mathfrak{S}_{\pb+1}$.

\medskip
The analysis for this order is similar to that for orders $\pb+1$ and
$\pb+2$.  Substitution of the Ans\"atze 
\[
{\bm \upsilon}^{(p)} = \mathring{\bm \upsilon}^{(p)} + \breve{\bm \upsilon}^{(p)},
\quad {\bm \phi}^{(p)} = \mathring{\bm \phi}^{(p)} + \breve{\bm \phi}^{(p)},
\quad 0\leq p \leq \pb+3
\]
into the transport equations
\eqref{upsilon:transport}-\eqref{cbianchi:transport} leads to the
following equations for $\breve{\bm \upsilon}^{(\pb+3)}$ and
$\breve{\bm \phi}^{(\pb+3)}$:
\begin{eqnarray*}
&& \partial_\tau \breve{\bm \upsilon}^{(\pb+3)} = \mathbf{K}\cdot 
\breve{\bm \upsilon}^{(\pb+3)}
+ \mathbf{Q}(\mathring{\bm \upsilon}^{(0)}, \breve{\bm \upsilon}^{(\pb+3)}) +
\mathbf{Q}(\breve{\bm \upsilon}^{(\pb+3)},\mathring{\bm \upsilon}^{(0)})  \\
&& \hspace{3cm} +(\pb+3)  \mathbf{Q}(\mathring{\bm\upsilon}^{(1)}, \breve{\bm\upsilon}^{(\pb+2)}) + (\pb+3) \mathbf{Q}(\breve{\bm\upsilon}^{(\pb+2)},\mathring{\bm\upsilon}^{(1)})
+(\pb+3)\mathring{\mathbf{L}}^{(1)}\cdot \breve{\bm\phi}^{(\pb+2)}\\
&& \hspace{3cm}
+\tfrac{1}{2}(\pb+3)(\pb+2)\mathring{\mathbf{L}}^{(2)}\cdot\breve{\bm\phi}^{(\pb+1)}
+ \breve{\mathbf{L}}^{(\pb+3)}\cdot\mathring{\phi}^{(0)}, 
\end{eqnarray*}
\begin{eqnarray*}
&& \sqrt{2}\mathbf{E} \partial_\tau \breve{\bm\phi}^{(\pb+3)} +
(\mathring{\mathbf{A}}^\mu)^{(0)} \cdot\partial_\mu \breve{\bm\phi}^{(\pb+3)} = \mathring{\mathbf{F}}^{(0)}\cdot
\breve{\bm\phi}_{\pb+3} + (\pb+3) \mathring{\mathbf{F}}^{(1)}\cdot
\breve{\bm\phi}^{(\pb+2)}  \\
&& \hspace{3cm}+\frac{1}{2}(\pb+3)(\pb+2) \mathring{\mathbf{F}}^{(2)}\cdot
\breve{\bm\phi}^{(\pb+1)}
+(\pb+3) \breve{\mathbf{F}}^{(\pb+2)} \cdot\mathring{\bm\phi}^{(1)}+
\breve{\mathbf{F}}^{(\pb+3)} \cdot\mathring{\bm\phi}^{(0)} \\
&& \hspace{3cm}- (\pb+3) (\mathring{\mathbf{A}}^\mu)^{(1)} \cdot\partial_\mu
\breve{\bm\phi}^{(\pb+2)}-\frac{1}{2}(\pb+3)(\pb+2) (\mathring{\mathbf{A}}^\mu)^{(2)}\cdot \partial_\mu
\breve{\bm\phi}^{(\pb+1)}\\
&& \hspace{3cm}-(\pb+3) (\breve{\mathbf{A}}^\mu)^{(\pb+2)}\cdot \partial_\mu
\mathring{\bm\phi}^{(1)}-(\breve{\mathbf{A}}^\mu)^{(\pb+3)} \cdot\partial_\mu
\mathring{\bm\phi}^{(0)}, 
\end{eqnarray*}
\begin{eqnarray*}
&& (\mathbf{B}^\mu)^{(0)} \partial_\mu \breve{\bm\phi}^{(\pb+3)} = \mathring{\mathbf{G}}^{(0)}\cdot
\breve{\bm\phi}_{\pb+3} + (\pb+3) \mathring{\mathbf{G}}^{(1)}\cdot
\breve{\bm\phi}^{(\pb+2)} +\frac{1}{2}(\pb+3)(\pb+2)\mathring{\mathbf{G}}^{(2)}\cdot \breve{\bm\phi}^{(\pb+1)}\\
&& \hspace{3cm}+ \breve{\mathbf{G}}^{(\pb+3)}\cdot 
\mathring{\bm\phi}^{(0)} +(\pb+3) \breve{\mathbf{G}}^{(\pb+2)}\cdot 
\mathring{\bm\phi}^{(1)}\\
 && \hspace{3cm}- (\pb+3) (\mathring{\mathbf{B}}^\mu)^{(1)} \cdot\partial_\mu
\breve{\bm\phi}^{(\pb+2)} -\frac{1}{2}(\pb+3)(\pb+2) (\mathring{\mathbf{B}}^\mu)^{(2)}\cdot \partial_\mu
\breve{\bm\phi}^{(\pb+1)} \\
&& \hspace{3cm} -(\pb+3) (\breve{\mathbf{B}}^\mu)^{(\pb+2)} \cdot\partial_\mu
\mathring{\bm\phi}^{(1)}-(\breve{\mathbf{B}}^\mu)^{(\pb+3)}\cdot \partial_\mu
\mathring{\bm\phi}^{(0)},
\end{eqnarray*}
Using Lemma
\ref{Lemma:MultipoleStructureStatic} one obtains the following
equations for the components of
$\mathfrak{S}_{\pb+1}[\breve{\bm\upsilon}^{(\pb+3)},\breve{\bm \phi}^{(\pb+3)}]$:
\begin{subequations}
\begin{eqnarray}
&& \breve{\bm \upsilon}'_{\pb+3;2(\pb+1)} = \mathbf{K}\cdot 
\breve{\bm \upsilon}_{\pb+3;2(\pb+1)}
+ \mathbf{Q}(\mathring{\bm \upsilon}_{0;0}, \breve{\bm \upsilon}_{\pb+3;2(\pb+1)}) +
\mathbf{Q}(\breve{\bm \upsilon}_{\pb+3;2(\pb+1)},\mathring{\bm
  \upsilon}_{0;0}) \nonumber \\
&& \hspace{3cm} +(\pb+3)  \mathbf{Q}(\mathring{\bm \upsilon}_{1;0},
\breve{\bm\upsilon}_{\pb+2;2(\pb+1)}) + (\pb+3) \mathbf{Q}(\breve{\bm
  \upsilon}_{\pb+2;2(\pb+1)},\mathring{\bm \upsilon}_{1;0}) \nonumber\\
&& \hspace{3cm}+(\pb+3)\mathring{\mathbf{L}}_{1;0}\cdot \breve{\bm \phi}_{\pb+2;2(\pb+1)}
+\tfrac{1}{2}(\pb+3)(\pb+2)\mathring{\mathbf{L}}_{2;0}\cdot\breve{\bm\phi}_{\pb+1;2(\pb+1)}\nonumber
\\
&& \hspace{3cm}+
\mathring{\mathbf{L}}_{\pb+3;2(\pb+1)}\cdot\mathring{\bm\phi}_{0;0}, \label{Order3a}
\end{eqnarray}
\begin{eqnarray}
&& (\mathbf{E}+\mathbf{A})\cdot\breve{\bm\phi}'_{\pb+3;2(\pb+1)} +
\mathbf{A}_{\pb+3;2(\pb+1)} \cdot \breve{\bm \phi}_{\pb+3;2(\pb+1)} = \mathring{\mathbf{F}}_{0;0}
\cdot \breve{\bm\phi}_{\pb+3;2(\pb+1)} + (\pb+3) \mathring{\mathbf{F}}_{1;0}\cdot
\breve{\bm\phi}_{\pb+2;2(\pb+1)} \nonumber \\
&& \hspace{3cm}+\tfrac{1}{2}(\pb+3)(\pb+2) \mathring{\mathbf{F}}_{2;0}
\cdot\breve{\bm\phi}_{\pb+1;2(\pb+1)}
+(\pb+3) \breve{\mathbf{F}}_{\pb+2;2(\pb+1)} \cdot
\mathring{\bm\phi}_{1;0} \nonumber\\
&& \hspace{3cm}+  \breve{\mathbf{F}}_{\pb+3;2(\pb+1)}
\cdot\mathring{\bm \phi}_{0;0}  \nonumber\\
&& \hspace{3cm}- (\pb+3) \mathring{\mathbf{A}}_{1;0} \cdot
\breve{\bm \phi}'_{\pb+2;2(\pb+1)}
- (\pb+3) \mathring{\mathbf{A}}^+_{1;0} \cdot
\breve{\bm \phi}_{\pb+2;2(\pb+1)}
- (\pb+3) \mathring{\mathbf{A}}^-_{1;0} \cdot
\breve{\bm \phi}_{\pb+2;2(\pb+1)} \nonumber\\
&& \hspace{3cm}
-\tfrac{1}{2}(\pb+3)(\pb+2) \mathring{\mathbf{A}}_{2;0} \cdot
\breve{\bm \phi}'_{\pb+1;2(\pb+1)}
-\tfrac{1}{2}(\pb+3)(\pb+2) \mathring{\mathbf{A}}^+_{2;0} \cdot
\breve{\bm \phi}_{\pb+1;2(\pb+1)} \nonumber\\
&& \hspace{3cm}
-\tfrac{1}{2}(\pb+3)(\pb+2) \mathring{\mathbf{A}}^-_{2;0} \cdot
\breve{\bm \phi}_{\pb+1;2(\pb+1)} -(\pb+3) \breve{\mathbf{A}}_{\pb+2;2(\pb+1)} 
\cdot \mathring{\bm \phi}'_{1;0}, \label{Order3b}
\end{eqnarray}
\begin{eqnarray}
&& \mathbf{B}\cdot \breve{\bm
  \phi}'_{\pb+3;2(\pb+1)}+\mathbf{B}_{\pb+3;2(\pb+1)}\cdot \breve{\bm\phi}_{\pb+3;2(\pb+1)} = \mathring{\mathbf{G}}_{0;0}
\cdot \breve{\bm\phi}_{\pb+3;2(\pb+1)} + (\pb+3) \mathring{\mathbf{G}}_{1;0}\cdot
\breve{\bm\phi}_{\pb+2;2(\pb+1)} \nonumber\\
&& \hspace{3cm}+\tfrac{1}{2}(\pb+3)(\pb+2)\mathring{\mathbf{G}}_{2;0} \breve{\bm\phi}_{\pb+1;2(\pb+1)}+ \breve{\mathbf{G}}_{\pb+3;2(\pb+1)}\cdot 
\mathring{\bm \phi}_{0;0} +(\pb+3) \breve{\mathbf{G}}_{\pb+2;2(\pb+1)} \cdot
\mathring{\bm\phi}_{1;0}\nonumber\\
 && \hspace{3cm}
- (\pb+3) \mathring{\mathbf{B}}_{1;0}\cdot \breve{\bm\phi}'_{\pb+2;2(\pb+1)}
- (\pb+3) \mathring{\mathbf{B}}_{1;0}^+\cdot \breve{\bm\phi}_{\pb+2;2(\pb+1)}
- (\pb+3) \mathring{\mathbf{B}}_{1;0}^-\cdot \breve{\bm\phi}_{\pb+2;2(\pb+1)}
\nonumber\\
&& \hspace{3cm}
-\tfrac{1}{2}(\pb+3)(\pb+2) \mathring{\mathbf{B}}_{2;0} \cdot\breve{\bm
  \phi}'_{\pb+1;2(\pb+1)}
-\tfrac{1}{2}(\pb+3)(\pb+2) \mathring{\mathbf{B}}^+_{2;0} \cdot\breve{\bm
  \phi}_{\pb+1;2(\pb+1)} \nonumber\\
&&\hspace{3cm}
-\tfrac{1}{2}(\pb+3)(\pb+2) \mathring{\mathbf{B}}^-_{2;0} \cdot\breve{\bm
  \phi}_{\pb+1;2(\pb+1)}  -(\pb+3)\breve{\mathbf{B}}_{\pb+2;2(\pb+1)} \cdot
\mathring{\bm \phi}'_{1;0}. \label{Order3c}
\end{eqnarray}
\end{subequations}
As in the analysis of lower order transport equations, we begin
by noticing that the entries of
\[
\mathring{\bm \upsilon}_{0;0}, \quad \mathring{\mathbf{L}}_{1;0}, \quad
\mathring{\mathbf{F}}_{0;0}, \quad \mathring{\mathbf{G}}_{0;0}
\]
are Minkowskian ---i.e. independent of $m$ and $W$. On the other hand,
the entries of
\begin{eqnarray*}
&& \mathring{\bm \upsilon}_{1;0}, \quad \mathring{\mathbf{L}}_{2;0}, \quad
\mathring{\bm\phi}_{0;0}, \quad \mathring{\bm\phi}_{1;0}, \\
&& \mathring{\mathbf{F}}_{1;0}, \quad \mathring{\mathbf{F}}_{2;0}, \quad
\mathring{\mathbf{A}}_{1;0}, \quad \mathring{\mathbf{A}}^\pm_{1;0}, \quad
\mathring{\mathbf{A}}_{2;0}, \quad \mathring{\mathbf{A}}^\pm_{2;0}, \\
&& \mathring{\mathbf{G}}_{1;0}, \quad \mathring{\mathbf{G}}_{2;0}, \quad
\mathring{\mathbf{B}}_{1;0}, \quad \mathring{\mathbf{B}}^\pm_{1;0}, \quad
\mathring{\mathbf{B}}_{2;0}, \quad \mathring{\mathbf{B}}^\pm_{2;0},
\end{eqnarray*}
are Schwarzschildean in our gauge---i.e. they depend only on $m$.

\medskip
As a consequence of the analysis of the transport equations
for the orders $\pb+1$ and $\pb+2$ one has that the entries in
\[
\breve{\bm \upsilon}_{\pb+1;2(\pb+1)}, \quad \breve{\bm \phi}_{\pb+1;2(\pb+1)}, \quad \breve{\bm \phi}_{\pb+2;2(\pb+1)},
\]
depend only on $m$ and the coefficients
$\breve{w}_{\pb+1;2(\pb+1);k}$. As in the analysis of the orders
$\pb+1$ and $\pb+2$ one has the crucial observation:

\begin{observation}
The equations \eqref{Order3a}-\eqref{Order3c} and their initial
  conditions are formally
  identical to the order $\pb+3$ transport equations for initial data
  sets which are Schwarzschildean up to order $\pb$.
\end{observation}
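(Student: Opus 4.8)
\noindent
The plan is to prove this observation by the same term-by-term bookkeeping that was used at orders $\pb+1$ and $\pb+2$: one classifies every coefficient function and every lower-order solution fragment appearing on the right-hand sides of \eqref{Order3a}--\eqref{Order3c}, together with the initial data obtained on $\mathcal{I}^0$, as either \emph{Minkowskian}, \emph{Schwarzschildean}, or a \emph{deviation from Schwarzschild} (that is, depending only on $m$ and on the coefficients $\breve{w}_{\pb+1;2(\pb+1),k}$), and then one checks that none of these classifications changes if the static background is replaced by the Schwarzschild solution. Because the principal part of the transport equations is universal and the matrices $\mathbf{A}$, $\mathbf{A}_{p;2q}$, $\mathbf{B}$, $\mathbf{B}_{p;2q}$ depend only on $p$ and $q$, a term-by-term coincidence of the inhomogeneities suffices.

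First I would handle the static (circle) ingredients. By Lemma~\ref{Lemma:LowerOrderSolutions} one has $\breve{\bm\upsilon}^{(p)}=0$ for $p\le\pb+1$ and $\breve{\bm\phi}^{(p)}=0$ for $p\le\pb$; hence, tracing through \eqref{upsilon:transport}--\eqref{cbianchi:transport} with $p=\pb+3$, a non-trivial contribution to the sector $\mathfrak{S}_{\pb+1}$ can only arise from a static coefficient of $\rho$-order at most $2$ multiplied by a non-static fragment of order $\pb+1$, $\pb+2$ or $\pb+3$. Thus the only static quantities surviving in \eqref{Order3a}--\eqref{Order3c} carry multiindices $0;0$, $1;0$, $2;0$, and for $\pb\ge 3$ Lemma~\ref{Lemma:MultipoleStructureStatic} places all of $\mathring{\bm\upsilon}^{(0)},\mathring{\bm\upsilon}^{(1)},\mathring{\bm\upsilon}^{(2)}$, $\mathring{\mathbf{L}}^{(1)},\mathring{\mathbf{L}}^{(2)}$, $\mathring{\bm\phi}^{(0)},\mathring{\bm\phi}^{(1)}$ --- and therefore the coefficients $\mathring{\mathbf{F}}$, $\mathring{\mathbf{G}}$, $\mathring{\mathbf{A}}^{\pm}$, $\mathring{\mathbf{B}}^{\pm}$ built linearly from $\mathring{\bm\Gamma}^{(j)}$ and $\mathring{\mathbf{c}}^{(j)}$, $j\le 2$ (Lemma~\ref{LemmaStaticData}) --- inside the sector $\mathfrak{S}_0$, where in our gauge they are Minkowskian or, by the expansion \eqref{Omega1} whose first non-Schwarzschildean term sits at order $\rho^{5}$, Schwarzschildean. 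All of this holds verbatim for the Schwarzschild background.

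Next I would handle the non-static (breve) ingredients. The fragments $\breve{\bm\phi}_{\pb+1;2(\pb+1)}$, $\breve{\bm\phi}_{\pb+2;2(\pb+1)}$ and $\breve{\bm\upsilon}_{\pb+2;2(\pb+1)}$ occurring in \eqref{Order3a}--\eqref{Order3c} were already shown, in the analyses of orders $\pb+1$ and $\pb+2$, to be polynomial in $\tau$ and to depend only on $m$ and $\breve{w}_{\pb+1;2(\pb+1),k}$, hence to be deviations from Schwarzschild. The data contributions $\breve{\mathbf{F}}^{(\pb+2)},\breve{\mathbf{F}}^{(\pb+3)}$, $\breve{\mathbf{G}}^{(\pb+2)},\breve{\mathbf{G}}^{(\pb+3)}$, $\breve{\mathbf{A}}^{(\pb+2)},\breve{\mathbf{A}}^{(\pb+3)}$, $\breve{\mathbf{B}}^{(\pb+2)},\breve{\mathbf{B}}^{(\pb+3)}$ are linear in the non-static connection and frame coefficients at these orders, which via \eqref{Ricci:Data}--\eqref{Weyl:Data} and \eqref{Omega1}--\eqref{Omega2} are governed by $\breve{\Theta}^{(\pb+1)},\ldots,\breve{\Theta}^{(\pb+4)}$ and $\breve{\phi}^{(\pb+1)},\breve{\phi}^{(\pb+2)},\breve{\phi}^{(\pb+3)}$, which are deviations from Schwarzschild by Lemma~\ref{LemmaQuasiStaticData}. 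Finally $\breve{\mathbf{L}}^{(\pb+1)}=\breve{\mathbf{L}}^{(\pb+2)}=0$ by Lemma~\ref{Lemma:HigherOrderL}, while $\breve{\mathbf{L}}^{(\pb+3)}$ in the sector $\mathfrak{S}_{\pb+1}$ can only come from pairing the leading term $\breve{w}_{\pb+1;2(\pb+1),k}\,\TT{2(\pb+1)}{k}{\pb+1}\rho^{\pb+1}$ of $\breve{W}$ in \eqref{breveWLifted} with the $\rho^{2}$-part of the static conformal factor \eqref{Theta} and one-form \eqref{1-form}, which by \eqref{Omega1} is Schwarzschildean, so $\breve{\mathbf{L}}^{(\pb+3)}$ is again a deviation from Schwarzschild. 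A lengthy but routine computation, parallel to the one behind \eqref{Data:p+1}, shows that the initial data for \eqref{Order3a}--\eqref{Order3c} is a homogeneous expression in $m$ and the $\breve{w}_{\pb+1;2(\pb+1),k}$. None of these facts uses more than staticity up to order $\pb$, so they hold for the Schwarzschild background as well.

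Combining the two classifications, every inhomogeneity in \eqref{Order3a}--\eqref{Order3c} is, once written out in terms of $m$ and the $\breve{w}_{\pb+1;2(\pb+1),k}$, exactly the object that arises for data which is Schwarzschildean up to order $\pb$; moreover the sector $\mathfrak{S}_{\pb+1}$ closes, since the sources are obtained by the Clebsch--Gordan rule \eqref{Clebsch-Gordan} from $\mathfrak{S}_0$-coefficients acting on $\mathfrak{S}_{\pb+1}$-unknowns and $\mathfrak{S}_0\otimes\mathfrak{S}_{\pb+1}\subseteq\mathfrak{S}_{\pb+1}$. The explicit solutions of \cite{Val10} then apply unchanged, and, together with Lemma~\ref{Lemma:LogsLeadToLogs} and Corollary~\ref{StaticCorollary}, this yields polynomial dependence on $\tau$ at order $\pb+3$, hence smoothness through $\tau=\pm1$. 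I expect the one genuine difficulty to be the verification that no higher multipole coefficients $\breve{w}_{\pb+2;\bullet}$, $\breve{w}_{\pb+3;\bullet}$ --- nor the higher multipole moments of the static background --- leak into the sector $\mathfrak{S}_{\pb+1}$ at order $\pb+3$ via $\breve{\Theta}^{(\pb+3)},\breve{\Theta}^{(\pb+4)},\breve{\phi}^{(\pb+3)}$ or $\breve{\mathbf{L}}^{(\pb+3)}$; this is precisely where the cn-gauge is exploited (trace-freeness of the $\breve{w}$-tensors in \eqref{breveWLifted} together with the Schwarzschildean low-order structure of $\Omega$ in \eqref{Omega1}), and it is already encapsulated in Lemmas~\ref{LemmaQuasiStaticData} and \ref{Lemma:HigherOrderL}, so the remaining work is bookkeeping rather than fresh computation.
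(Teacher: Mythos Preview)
Your approach is correct and essentially the same as the paper's: both argue by classifying every term in \eqref{Order3a}--\eqref{Order3c} as Minkowskian, Schwarzschildean, or a deviation from Schwarzschild, using Lemmas~\ref{LemmaStaticData}, \ref{LemmaQuasiStaticData}, \ref{Lemma:MultipoleStructureStatic}, \ref{Lemma:HigherOrderL} and the results of the two preceding orders. The paper in fact states no more than the classification paragraphs that precede the Observation; your write-up simply fills in the bookkeeping more explicitly, in particular the argument that $\breve{\mathbf{L}}^{(\pb+3)}$ in the sector $\mathfrak{S}_{\pb+1}$ is a deviation from Schwarzschild and that $\mathfrak{S}_0\otimes\mathfrak{S}_{\pb+1}\subseteq\mathfrak{S}_{\pb+1}$ closes the sector. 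One cosmetic remark: the reference to $\breve{\Theta}^{(\pb+4)}$ is not needed at this order (it only enters at $p=\pb+4$), so you can drop it from the list without loss.
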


As in the analysis of the orders $\pb+1$, $\pb+2$ and $\pb+3$, one 
can use directly the analysis in \cite{Val10} to
conclude that the solutions $\mathring{\bm\upsilon}_{\pb+3;2(\pb+1)}$ and
$\mathring{\bm\phi}_{\pb+3;2(\pb+1)}$ to equations
\eqref{Order3a}-\eqref{Order3c} are polynomial in $\tau$. More
generally, one has that:

\begin{proposition}
The solutions to the order $\pb+3$ transport equations
\eqref{upsilon:transport}-\eqref{cbianchi:transport} are polynomial in
$\tau$ and, thus, extend analytically through $\tau=\pm 1$. 
\end{proposition}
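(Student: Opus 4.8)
The plan is to repeat, nearly verbatim, the argument already carried out for orders $\pb+1$ and $\pb+2$. Write $\bm\upsilon^{(\pb+3)}=\mathring{\bm\upsilon}^{(\pb+3)}+\breve{\bm\upsilon}^{(\pb+3)}$ and $\bm\phi^{(\pb+3)}=\mathring{\bm\phi}^{(\pb+3)}+\breve{\bm\phi}^{(\pb+3)}$. The static parts are polynomial in $\tau$ by Corollary~\ref{StaticCorollary}, and since $\bm\upsilon^{(p)}$, $\bm\phi^{(p)}$ are polynomial for $0\leq p\leq\pb+2$ (by the analysis of orders $\pb+1$ and $\pb+2$), Lemma~\ref{Lemma:LogsLeadToLogs} already makes $\bm\upsilon^{(\pb+3)}$ polynomial; so it remains to control $\breve{\bm\phi}^{(\pb+3)}$ (together with $\breve{\bm\upsilon}^{(\pb+3)}$, for the finer multipole information). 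First I would confine the non-static content to the single sector $\mathfrak{S}_{\pb+1}$: by Lemma~\ref{LemmaQuasiStaticData} the curvature data $\breve{\phi}^{(\pb+1)}_{ABCD},\breve{\phi}^{(\pb+2)}_{ABCD},\breve{\phi}^{(\pb+3)}_{ABCD}$ on $\mathcal{C}_{a,\kappa}$ are deviations from Schwarzschild, hence linear in the coefficients $\breve{w}_{\pb+1;2(\pb+1),k}$ and so supported in $\mathfrak{S}_{\pb+1}$; by Lemma~\ref{Lemma:MultipoleStructureStatic} the static ingredients entering the order-$\pb+3$ source terms (the jets $\mathring{\bm\upsilon}^{(j)}$, $j=0,1,2$, the jets $\mathring{\bm\phi}^{(j)}$, $j=0,1$, the matrices $\mathring{\mathbf{L}}^{(1)},\mathring{\mathbf{L}}^{(2)}$ and the coefficient matrices $\mathring{\mathbf{F}}^{(j)}$, $\mathring{\mathbf{G}}^{(j)}$, $(\mathring{\mathbf{A}}^\mu)^{(j)}$, $(\mathring{\mathbf{B}}^\mu)^{(j)}$ for $j=0,1,2$) are supported in $\mathfrak{S}_0$; and by Lemma~\ref{Lemma:LowerOrderSolutions} and the order-$\pb+2$ analysis the lower non-static quantities $\breve{\bm\upsilon}^{(\pb+2)},\breve{\bm\phi}^{(\pb+1)},\breve{\bm\phi}^{(\pb+2)}$ lie in $\mathfrak{S}_{\pb+1}$. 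Since the Clebsch--Gordan rule \eqref{Clebsch-Gordan} gives $\mathfrak{S}_0\times\mathfrak{S}_{\pb+1}\subseteq\mathfrak{S}_{\pb+1}$ and the data $\breve{\bm\upsilon}^{(\pb+3)}_0$, $\breve{\bm\phi}^{(\pb+3)}_0$ on $\mathcal{I}^0$ are themselves in $\mathfrak{S}_{\pb+1}$, a jet-order induction of the kind used in the proof of Lemma~\ref{Lemma:LowerOrderSolutions} (observing that $\bm\upsilon^{(\pb+3)}$ is determined by \eqref{upsilon:transport} before $\bm\phi^{(\pb+3)}$, as $\mathbf{L}|_{\rho=0}=0$) forces every sector of $\breve{\bm\upsilon}^{(\pb+3)}$ and $\breve{\bm\phi}^{(\pb+3)}$ other than $\mathfrak{S}_{\pb+1}$ to vanish.

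What remains is exactly the system \eqref{Order3a}--\eqref{Order3c} for $\breve{\bm\upsilon}_{\pb+3;2(\pb+1)}$, $\breve{\bm\phi}_{\pb+3;2(\pb+1)}$, and the decisive step is the Observation: using Lemmas~\ref{LemmaStaticData}, \ref{LemmaQuasiStaticData}, \ref{Lemma:MultipoleStructureStatic} and \ref{Lemma:HigherOrderL} one checks that every static coefficient occurring in \eqref{Order3a}--\eqref{Order3c} is Minkowskian or Schwarzschildean, that every lower-order non-static contribution depends only on $m$ and $\breve{w}_{\pb+1;2(\pb+1),k}$, and that the genuinely new non-static coefficient matrices $\breve{\mathbf{F}}^{(\pb+3)}$, $\breve{\mathbf{G}}^{(\pb+3)}$, $(\breve{\mathbf{A}}^\mu)^{(\pb+3)}$, $(\breve{\mathbf{B}}^\mu)^{(\pb+3)}$ and $\breve{\mathbf{L}}^{(\pb+3)}$ are also of deviation-from-Schwarzschild type. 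Consequently \eqref{Order3a}--\eqref{Order3c}, together with their initial data (obtained as in the derivation of \eqref{Data:p+1}), coincide formally with the order-$\pb+3$ transport system for an initial data set that is Schwarzschildean up to order $\pb$. That system was integrated explicitly in \cite{Val10} via the reduced ordinary differential equations of its sections~6.2--6.3, and its solutions are polynomial in $\tau$; hence $\breve{\bm\upsilon}_{\pb+3;2(\pb+1)}$ and $\breve{\bm\phi}_{\pb+3;2(\pb+1)}$ are polynomial in $\tau$. Combining this with the polynomiality of the static parts (Corollary~\ref{StaticCorollary}), the full solution $\bm\upsilon^{(\pb+3)},\bm\phi^{(\pb+3)}$ is polynomial in $\tau$, and a polynomial in $\tau$ extends analytically through $\tau=\pm1$.

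The hard part will be the bookkeeping behind that Observation, and above all the statement about $\breve{\mathbf{L}}^{(\pb+3)}$: by Lemma~\ref{Lemma:HigherOrderL} this is the first member of the $\breve{\mathbf{L}}$-hierarchy that need not vanish, so for the first time a non-static matrix enters the $\bm\upsilon$-transport equation, and one must verify that on $\mathfrak{S}_{\pb+1}$ it carries exactly the dependence on $m$ and $\breve{w}_{\pb+1;2(\pb+1),k}$ that the corresponding term in the Schwarzschild-perturbation problem carries. This amounts to expanding the conformal factor $\Theta$ of \eqref{Theta} and the $1$-form $d_{AB}$ of \eqref{1-form} to order $\pb+3$ using the expansions \eqref{Omega1}--\eqref{Omega2} for $\Omega$ and $D_{(AB}D_{CD)}\Omega$ and tracking them through formulae \eqref{Ricci:Data}--\eqref{Weyl:Data} --- a long but entirely mechanical computation, essentially all of whose symbolic content is already present in \cite{Val10}. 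Once it is in place, no new analysis of the transport equations themselves is required.
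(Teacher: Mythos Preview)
Your proposal is correct and follows essentially the same approach as the paper: split into static and non-static parts, invoke Corollary~\ref{StaticCorollary} and Lemma~\ref{Lemma:LogsLeadToLogs}, confine the non-static content to the sector $\mathfrak{S}_{\pb+1}$ via Lemmas~\ref{LemmaQuasiStaticData}--\ref{Lemma:HigherOrderL}, observe that the resulting system \eqref{Order3a}--\eqref{Order3c} with its data is formally identical to the Schwarzschildean-up-to-order-$\pb$ case, and quote \cite{Val10}. You are somewhat more explicit than the paper about the Clebsch--Gordan confinement to $\mathfrak{S}_{\pb+1}$ and about the role of $\breve{\mathbf{L}}^{(\pb+3)}$, but these are elaborations of the same argument rather than a different route.
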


\subsection{The transport equations at order $p=\pb+4$}

Finally, using similar methods, we discuss the
order $\pb+4$ transport equations. As in the previous
order the polynomial dependence in $\tau$ follows directly from Lemma
\ref{Lemma:LogsLeadToLogs} once one knows that the entries in
${\bm\phi}^{(\pb+3)}$ are polynomial. The rest of the analysis is more
subtle than at lower orders. The reason for this is twofold:

\begin{itemize}

\item[(i)] As a consequence of Lemma \ref{LemmaQuasiStaticData} one
  finds that $\breve{\phi}_{ABCD}^{(\pb+3)}$ has a multipolar
  structure which is more complicated than that of data which just
  deviates from Schwarzschild data.

\item[(ii)] The discussion of the transport equations for the sectors
  in $\mathfrak{S}_{\pb+1}$ involves non-trivial multiplications of
  the functions $\TT{i}{j}{k}$. 
\end{itemize}
Evidence from explicit calculations suggests that the solutions to the
Bianchi transport equations at this order are, generically, not
smooth at the critical sets. 

\medskip
Proceeding as in the case of lower order one finds that the substitution of the Ans\"atze 
\[
{\bm \upsilon}^{(p)} = \mathring{\bm \upsilon}^{(p)} + \breve{\bm \upsilon}^{(p)},
\quad {\bm \phi}^{(p)} = \mathring{\bm \phi}^{(p)} + \breve{\bm \phi}^{(p)},
\quad 0\leq p \leq \pb+4
\]
into the transport equations
\eqref{upsilon:transport}-\eqref{cbianchi:transport} and taking into
account Lemma \ref{Lemma:HigherOrderL}  leads to the
following equations for $\breve{\bm \upsilon}^{(\pb+4)}$ and
$\breve{\bm \phi}^{(\pb+4)}$:
\begin{eqnarray*}
&& \partial_\tau \breve{\bm \upsilon}^{(\pb+4)} = \mathbf{K}\cdot 
\breve{\bm \upsilon}^{(\pb+4)}
+ \mathbf{Q}(\mathring{\bm \upsilon}^{(0)}, \breve{\bm \upsilon}^{(\pb+4)}) +
\mathbf{Q}(\breve{\bm \upsilon}^{(\pb+4)},\mathring{\bm \upsilon}^{(0)})  \\
&& \hspace{3cm} +(\pb+4) \left(  \mathbf{Q}(\mathring{\bm\upsilon}^{(1)},
\breve{\bm\upsilon}^{(\pb+3)}) +
\mathbf{Q}(\breve{\bm\upsilon}^{(\pb+3)},\mathring{\bm\upsilon}^{(1)}) \right)
\\
&& \hspace{3cm} +\tfrac{1}{2}(\pb+4)(\pb+3) \left(\mathbf{Q}(\mathring{\bm\upsilon}^{(2)},
\breve{\bm\upsilon}^{(\pb+2)}) + \mathbf{Q}(\breve{\bm\upsilon}^{(\pb+2)},\mathring{\bm\upsilon}^{(2)}) \right)\\
&& \hspace{3cm} + \breve{\mathbf{L}}^{(\pb+4)}\cdot\mathring{\bm\phi}^{(0)} + (\pb+4)
\breve{\mathbf{L}}^{(\pb+3)}\cdot\mathring{\bm\phi}^{(1)}\\
&& \hspace{3cm}+(\pb+4)\mathring{\mathbf{L}}^{(1)}\cdot \breve{\bm
  \phi}^{(\pb+3)}
+\tfrac{1}{2}(\pb+4)(\pb+3)\mathring{\mathbf{L}}^{(2)}\cdot\breve{\bm\phi}^{(\pb+2)}
\\
&& \hspace{3cm}+
\tfrac{1}{6}(\pb+4)(\pb+3)(\pb+2)\mathring{\mathbf{L}}^{(3)}\cdot\mathring{\phi}^{(\pb+1)},
\end{eqnarray*}
\begin{eqnarray*}
&& \sqrt{2}\mathbf{E} \partial_\tau \breve{\phi}^{(\pb+4)} +
(\mathring{\mathbf{A}}^\mu)^{(0)} \partial_\mu \breve{\bm\phi}^{(\pb+4)} = \mathring{\mathbf{F}}^{(0)}\cdot
\breve{\bm\phi}_{\pb+4} + (\pb+4) \mathring{\mathbf{F}}^{(1)}\cdot
\breve{\bm\phi}^{(\pb+3)}  \\
&& \hspace{3cm}+\tfrac{1}{2}(\pb+4)(\pb+4) \mathring{\mathbf{F}}^{(2)}\cdot
\breve{\bm\phi}^{(\pb+2)} + \tfrac{1}{6}(\pb+4)(\pb+3)(\pb+2) \mathring{\mathbf{F}}^{(3)}\cdot\breve{\bm\phi}^{(\pb+1)}\\
&& \hspace{3cm}
+\tfrac{1}{2}(\pb+4)(\pb+3)\breve{\mathbf{F}}^{(\pb+2)}\cdot\mathring{\bm\phi}^{(2)}+(\pb+4) \breve{\mathbf{F}}^{(\pb+3)} \cdot\mathring{\bm\phi}^{(1)}+
\breve{\mathbf{F}}^{(\pb+4)}\cdot \mathring{\bm\phi}^{(0)} \\
&& \hspace{3cm}- (\pb+4) (\mathring{\mathbf{A}}^\mu)^{(1)} \partial_\mu
\breve{\bm\phi}^{(\pb+3)}-\tfrac{1}{2}(\pb+4)(\pb+3) (\mathring{\mathbf{A}}^\mu)^{(2)} \partial_\mu
\breve{\bm\phi}^{(\pb+2)}\\
&& \hspace{3cm}-\tfrac{1}{6}(\pb+4)(\pb+3)(\pb+2)  (\mathring{\mathbf{A}}^\mu)^{(3)} \partial_\mu
\breve{\bm\phi}^{(\pb+1)}\\
&& \hspace{3cm}-\tfrac{1}{2}(\pb+4)(\pb+3) (\breve{\mathbf{A}}^\mu)^{(\pb+2)} \partial_\mu
\mathring{\bm\phi}^{(2)}-(\pb+4) (\breve{\mathbf{A}}^\mu)^{(\pb+3)} \partial_\mu
\mathring{\bm\phi}^{(1)}-(\breve{\mathbf{A}}^\mu)^{(\pb+4)} \partial_\mu
\mathring{\bm\phi}^{(0)}, 
\end{eqnarray*}
\begin{eqnarray*}
&& (\mathring{\mathbf{B}}^\mu)^{(0)} \partial_\mu \breve{\bm\phi}^{(\pb+4)} = \mathring{\mathbf{G}}^{(0)}\cdot
\breve{\bm\phi}_{\pb+4} + (\pb+4) \mathring{\mathbf{G}}^{(1)}\cdot
\breve{\bm\phi}^{(\pb+3)}  \\
&& \hspace{3cm}+\tfrac{1}{2}(\pb+4)(\pb+3) \mathring{\mathbf{G}}^{(2)}\cdot
\breve{\bm\phi}^{(\pb+2)} + \tfrac{1}{6}(\pb+4)(\pb+3)(\pb+2) \mathring{\mathbf{G}}^{(3)}\cdot\breve{\bm\phi}^{(\pb+1)}\\
&& \hspace{3cm}
+\tfrac{1}{2}(\pb+4)(\pb+3)\breve{\mathbf{G}}^{(\pb+2)}\cdot\mathring{\bm\phi}^{(2)}+(\pb+4) \breve{\mathbf{G}}^{(\pb+3)} \cdot\mathring{\bm\phi}^{(1)}+
\breve{\mathbf{G}}^{(\pb+4)}\cdot \mathring{\bm\phi}^{(0)} \\
&& \hspace{3cm}- (\pb+4) (\mathring{\mathbf{B}}^\mu)^{(1)} \partial_\mu
\breve{\bm\phi}^{(\pb+3)}-\tfrac{1}{2}(\pb+4)(\pb+3) (\mathring{\mathbf{B}}^\mu)^{(2)} \partial_\mu
\breve{\bm\phi}^{(\pb+2)}\\
&& \hspace{3cm}-\tfrac{1}{6}(\pb+4)(\pb+3)(\pb+2)  (\mathring{\mathbf{B}}^\mu)^{(3)} \partial_\mu
\breve{\bm\phi}^{(\pb+1)}\\
&& \hspace{3cm}-\tfrac{1}{2}(\pb+4)(\pb+3) (\breve{\mathbf{B}}^\mu)^{(\pb+2)} \partial_\mu
\mathring{\bm\phi}^{(2)}-(\pb+4) (\breve{\mathbf{B}}^\mu)^{(\pb+3)} \partial_\mu
\mathring{\bm\phi}^{(1)}-(\breve{\mathbf{B}}^\mu)^{(\pb+4)} \partial_\mu
\mathring{\bm\phi}^{(0)}.
\end{eqnarray*}

As in the previous orders, we extract 
equations for the components of
$\mathfrak{S}_{\pb+1}[{\bm\upsilon}^{(\pb+4)},{\bm\phi}^{(\pb+4)}]$.
In this case the analysis is more involved as there are terms involving
non-trivial products of spherical harmonics.  An inspection shows that
the terms containing this type of non-trivial products are
\begin{eqnarray*}
&& \mathring{\mathbf{F}}^{(3)}\cdot \breve{\bm \phi}^{(\pb+1)}, \quad
\breve{\mathbf{F}}^{(\pb+2)}\cdot \mathring{\bm \phi}^{(2)}, \quad
(\mathring{\mathbf{A}}^\mu)^{(3)}\cdot\partial_\mu \breve{\bm \phi}^{(\pb+1)} , \quad
(\breve{\mathbf{A}}^\mu)^{(\pb+2)}\cdot\partial_\mu\breve{\bm
  \phi}^{(2)}, \\
&& \mathring{\mathbf{G}}^{(3)}\cdot \breve{\bm \phi}^{(\pb+1)}, \quad
\breve{\mathbf{G}}^{(\pb+2)}\cdot \mathring{\bm \phi}^{(2)}, \quad
(\mathring{\mathbf{B}}^\mu)^{(3)}\cdot\partial_\mu \breve{\bm \phi}^{(\pb+1)} , \quad
(\breve{\mathbf{B}}^\mu)^{(\pb+2)}\cdot\partial_\mu\breve{\bm
  \phi}^{(2)}.
\end{eqnarray*}
All these terms contain products of the form
$\TT{4}{j_1}{k_1}\times \TT{2(\pb+1)}{i_2}{j_2}$, which using formula
\eqref{Clebsch-Gordan} can be linearised to render
\begin{eqnarray*}
&&\TT{4}{j_1}{k_1}\times \TT{2(\pb+1)}{i_2}{j_2}  \\
&&\hspace{1cm}=c_{2(\pb+3),j_1,j_2,k_1,k_2}\TT{2(\pb+3)}{j_1+j_2}{k_1+k_2}
+ c_{2(\pb+1),j_1,j_2,k_1,k_2}\TT{2(\pb+1)}{j_1+j_2-1}{k_1+k_2-1} \\
&&\hspace{1.5cm}+  c_{2(\pb-1),j_1,j_2,k_1,k_2}\TT{2(\pb-1)}{j_1+j_2-2}{k_1+k_2-2},
\end{eqnarray*}
with $c_{2(\pb+3),j_1,j_2,k_1,k_2}$, $c_{2(\pb+1),j_1,j_2,k_1,k_2}$,
$c_{2(\pb-1),j_1,j_2,k_1,k_2}$ some numerical coefficients. Their
explicit form will not be essential for the subsequent analysis. Thus,
the sector $\mathfrak{S}_{\pb+1}$ of, say,
$\mathring{\mathbf{F}}^{(3)}\cdot \breve{\bm \phi}^{(\pb+1)}$ is of
the form
\[
\mathring{\mathbf{F}}_{3;0}\cdot \breve{\bm \phi}_{\pb+1;2(\pb+1)} + \mathbf{C}_{4,2(\pb+1);\pb+1}\cdot\mathring{\mathbf{F}}_{3;4}\cdot \breve{\bm \phi}_{\pb+1;2(\pb+1)},
\]
where $\mathbf{C}_{4,2(\pb+1);\pb+1}$ denotes a matrix with numerical
entries. The term $\mathring{\mathbf{F}}_{3;0}\cdot \breve{\bm
  \phi}_{\pb+1;2(\pb+1)}$ is formally identical to the one one would
obtain from considering asymptotically Schwarzschildean data. A
similar analysis can be carried out with the other terms containing 
non-trivial products. 

\medskip
The equations for the components of
$\mathfrak{S}_{\pb+1}[\breve{\bm\upsilon}^{(\pb+4)},\breve{\bm\phi}^{(\pb+4)}]$
are:
\begin{subequations}
\begin{eqnarray}
&& \breve{\bm \upsilon}'_{\pb+4;2(\pb+1)} = \mathbf{K}\cdot 
\breve{\bm \upsilon}_{\pb+4;2(\pb+1)}
+ \mathbf{Q}(\mathring{\bm \upsilon}_{0;0}, \breve{\bm \upsilon}_{\pb+4;2(\pb+1)}) +
\mathbf{Q}(\breve{\bm \upsilon}_{\pb+4;2(\pb+1)},\mathring{\bm \upsilon}_{0;0})  \nonumber\\
&& \hspace{3cm} +(\pb+4) \left( \mathbf{Q}(\mathring{\bm \upsilon}_{1;0},
\breve{\bm\upsilon}_{\pb+3;2(\pb+1)}) +  \mathbf{Q}(\breve{\bm
  \upsilon}_{\pb+3;2(\pb+1)},\mathring{\bm \upsilon}_{1;0})\right) \nonumber\\
&& \hspace{3cm} +\tfrac{1}{2}(\pb+4)(\pb+3) \left( \mathbf{Q}(\mathring{\bm \upsilon}_{2;0},
\breve{\bm\upsilon}_{\pb+2;2(\pb+1)}) +  \mathbf{Q}(\breve{\bm
  \upsilon}_{\pb+2;2(\pb+1)},\mathring{\bm \upsilon}_{2;0})\right) \nonumber\\
&& \hspace{3cm}+(\pb+4)\mathring{\mathbf{L}}_{1;0}\cdot \breve{\bm \phi}_{\pb+3;2(\pb+1)}
+\tfrac{1}{2}(\pb+4)(\pb+3)\mathring{\mathbf{L}}_{2;0}\cdot\breve{\bm\phi}_{\pb+2;2(\pb+1)}\nonumber\\
&& \hspace{3cm} +\tfrac{1}{6}(\pb+4)(\pb+3)(\pb+2)\mathring{\mathbf{L}}_{3;0}\cdot\breve{\bm\phi}_{\pb+1;2(\pb+1)}
\nonumber\\
&& \hspace{3cm}+
\mathring{\mathbf{L}}_{\pb+4;2(\pb+1)}\cdot\mathring{\bm\phi}_{0;0}+
(\pb+4)
\mathring{\mathbf{L}}_{\pb+3;2(\pb+1)}\cdot\mathring{\bm\phi}_{1;0}, \label{Order4a}
\end{eqnarray}
\begin{eqnarray}
&& (\mathbf{E}+\mathbf{A})\cdot\breve{\bm\phi}'_{\pb+4;2(\pb+1)} +
\mathbf{A}_{\pb+4;2(\pb+1)} \cdot \breve{\bm \phi}_{\pb+4;2(\pb+1)} = \mathring{\mathbf{F}}_{0;0}
\cdot \breve{\bm\phi}_{\pb+4;2(\pb+1)} + (\pb+4) \mathring{\mathbf{F}}_{1;0}\cdot
\breve{\bm\phi}_{\pb+3;2(\pb+1)} \nonumber \\
&& \hspace{2cm}+\tfrac{1}{2}(\pb+4)(\pb+3) \mathring{\mathbf{F}}_{2;0}
\cdot\breve{\bm\phi}_{\pb+2;2(\pb+1)} +
\tfrac{1}{6}(\pb+4)(\pb+3)(\pb+2) \mathring{\mathbf{F}}_{3;0}\cdot \breve{\phi}_{\pb+1;2(\pb+1)} \nonumber\\
&& \hspace{2cm}+  \breve{\mathbf{F}}_{\pb+4;2(\pb+1)}
\cdot\mathring{\bm \phi}_{0;0} +(\pb+4) \breve{\mathbf{F}}_{\pb+3;2(\pb+1)} \cdot
\mathring{\bm\phi}_{1;0} +
\tfrac{1}{2}(\pb+4)(\pb+3)\breve{\mathbf{F}}_{\pb+2;2(\pb+1)}\cdot
\mathring{\bm \phi}_{2;0} \nonumber\\
&& \hspace{2cm}- (\pb+4) \left(\mathring{\mathbf{A}}_{1;0} \cdot
\breve{\bm \phi}'_{\pb+3;2(\pb+1)}
+\mathring{\mathbf{A}}^+_{1;0} \cdot
\breve{\bm \phi}_{\pb+3;2(\pb+1)}
+ \mathring{\mathbf{A}}^-_{1;0} \cdot
\breve{\bm \phi}_{\pb+3;2(\pb+1)} \right)\nonumber\\
&& \hspace{2cm}
-\tfrac{1}{2}(\pb+4)(\pb+3) \left(\mathring{\mathbf{A}}_{2;0} \cdot
\breve{\bm \phi}'_{\pb+2;2(\pb+1)}
+\mathring{\mathbf{A}}^+_{2;0} \cdot
\breve{\bm \phi}_{\pb+2;2(\pb+1)} 
+\mathring{\mathbf{A}}^-_{2;0} \cdot
\breve{\bm \phi}_{\pb+2;2(\pb+1)}\right) \nonumber \\
&& \hspace{2cm} -\tfrac{1}{6}(\pb+4)(\pb+3)(\pb+2) \left(  \mathring{\mathbf{A}}_{3;0} \cdot
\breve{\bm \phi}'_{\pb+1;2(\pb+1)}
+\mathring{\mathbf{A}}^+_{3;0} \cdot
\breve{\bm \phi}_{\pb+1;2(\pb+1)} 
+\mathring{\mathbf{A}}^-_{3;0} \cdot
\breve{\bm \phi}_{\pb+1;2(\pb+1)}\right) \nonumber \\
&& \hspace{2cm} -(\pb+4) \breve{\mathbf{A}}_{\pb+3;2(\pb+1)} 
\cdot \mathring{\bm \phi}'_{1;0}-\tfrac{1}{2}(\pb+4)(\pb+3) \breve{\mathbf{A}}_{\pb+2;2(\pb+1)} 
\cdot \mathring{\bm \phi}'_{2;0} \nonumber\\
&& \hspace{2cm} + \mathbf{R}_{2(\pb+1);\pb+1}, \label{Order4b}
\end{eqnarray}
\begin{eqnarray}
&& \mathbf{B}\cdot\breve{\bm\phi}'_{\pb+4;2(\pb+1)} +
\mathbf{B}_{\pb+4;2(\pb+1)} \cdot \breve{\bm \phi}_{\pb+4;2(\pb+1)} = \mathring{\mathbf{G}}_{0;0}
\cdot \breve{\bm\phi}_{\pb+4;2(\pb+1)} + (\pb+4) \mathring{\mathbf{G}}_{1;0}\cdot
\breve{\bm\phi}_{\pb+3;2(\pb+1)} \nonumber \\
&& \hspace{2cm}+\tfrac{1}{2}(\pb+4)(\pb+3) \mathring{\mathbf{G}}_{2;0}
\cdot\breve{\bm\phi}_{\pb+2;2(\pb+1)} +
\tfrac{1}{6}(\pb+4)(\pb+3)(\pb+2) \mathring{\mathbf{G}}_{3;0}\cdot \breve{\phi}_{\pb+1;2(\pb+1)} \nonumber\\
&& \hspace{2cm}+  \breve{\mathbf{G}}_{\pb+4;2(\pb+1)}
\cdot\mathring{\bm \phi}_{0;0} +(\pb+4) \breve{\mathbf{G}}_{\pb+3;2(\pb+1)} \cdot
\mathring{\bm\phi}_{1;0} +
\tfrac{1}{2}(\pb+4)(\pb+3)\breve{\mathbf{G}}_{\pb+2;2(\pb+1)}\cdot
\mathring{\bm \phi}_{2;0} \nonumber\\
&& \hspace{2cm}- (\pb+4) \left(\mathring{\mathbf{B}}_{1;0} \cdot
\breve{\bm \phi}'_{\pb+3;2(\pb+1)}
+\mathring{\mathbf{B}}^+_{1;0} \cdot
\breve{\bm \phi}_{\pb+3;2(\pb+1)}
+ \mathring{\mathbf{B}}^-_{1;0} \cdot
\breve{\bm \phi}_{\pb+3;2(\pb+1)} \right)\nonumber\\
&& \hspace{2cm}
-\tfrac{1}{2}(\pb+4)(\pb+3) \left(\mathring{\mathbf{B}}_{2;0} \cdot
\breve{\bm \phi}'_{\pb+2;2(\pb+1)}
+\mathring{\mathbf{B}}^+_{2;0} \cdot
\breve{\bm \phi}_{\pb+2;2(\pb+1)} 
+\mathring{\mathbf{B}}^-_{2;0} \cdot
\breve{\bm \phi}_{\pb+2;2(\pb+1)}\right) \nonumber \\
&& \hspace{2cm} -\tfrac{1}{6}(\pb+4)(\pb+3)(\pb+2) \left(  \mathring{\mathbf{B}}_{3;0} \cdot
\breve{\bm \phi}'_{\pb+1;2(\pb+1)}
+\mathring{\mathbf{B}}^+_{3;0} \cdot
\breve{\bm \phi}_{\pb+1;2(\pb+1)} 
+\mathring{\mathbf{B}}^-_{3;0} \cdot
\breve{\bm \phi}_{\pb+1;2(\pb+1)}\right) \nonumber \\
&& \hspace{2cm} -(\pb+4) \breve{\mathbf{B}}_{\pb+3;2(\pb+1)} 
\cdot \mathring{\bm \phi}'_{1;0}-\tfrac{1}{2}(\pb+4)(\pb+3) \breve{\mathbf{B}}_{\pb+2;2(\pb+1)} 
\cdot \mathring{\bm \phi}'_{2;0} \nonumber\\
&& \hspace{2cm} + \mathbf{S}_{2(\pb+1);\pb+1} \label{Order4c}
\end{eqnarray}
\end{subequations}
where
\begin{subequations}
\begin{eqnarray}
&& \mathbf{R}_{2(\pb+1);\pb+1}=(\pb+4)(\pb+3)\mathbf{C}_{4,2(\pb+1);\pb+1}\cdot \bigg[
\tfrac{1}{6}(\pb+2)\mathring{\mathbf{F}}_{3;4}\cdot
\breve{\phi}_{\pb+1;2(\pb+1)}  + \tfrac{1}{2}
\breve{\mathbf{F}}_{\pb+2;2(\pb+1)}\cdot \mathring{\bm\phi}_{2;4} \nonumber\\
&& \hspace{2cm} - \tfrac{1}{6}(\pb+2) \left(  \mathring{\mathbf{A}}_{3;4} \cdot
\breve{\bm \phi}'_{\pb+1;2(\pb+1)}
+\mathring{\mathbf{A}}^+_{3;4} \cdot
\breve{\bm \phi}_{\pb+1;2(\pb+1)} 
+\mathring{\mathbf{A}}^-_{3;4} \cdot
\breve{\bm \phi}_{\pb+1;2(\pb+1)}\right) \nonumber \\
&& \hspace{2cm} -\tfrac{1}{6} (\pb+2)\left(
\breve{\mathbf{A}}_{\pb+2;2(\pb+1)}\cdot \mathring{\bm \phi}'_{2;4} +
\breve{\mathbf{A}}^+_{\pb+2;2(\pb+1)}\cdot \mathring{\bm \phi}_{2;4}+
\breve{\mathbf{A}}^-_{\pb+2;2(\pb+1)}\cdot \mathring{\bm
  \phi}_{2;4}\right) \bigg], \label{FormulaR} 
\end{eqnarray} 
\begin{eqnarray}
&& \mathbf{S}_{2(\pb+1);\pb+1}=(\pb+4)(\pb+3)\mathbf{C}_{4,2(\pb+1);\pb+1} \cdot\bigg[
\tfrac{1}{6}(\pb+2)\mathring{\mathbf{G}}_{3;4}\cdot
\breve{\phi}_{\pb+1;2(\pb+1)}  + \tfrac{1}{2}
\breve{\mathbf{G}}_{\pb+2;2(\pb+1)}\cdot \mathring{\bm\phi}_{2;4} \nonumber \\
&& \hspace{2cm} - \tfrac{1}{6}(\pb+2) \left(  \mathring{\mathbf{B}}_{3;4} \cdot
\breve{\bm \phi}'_{\pb+1;2(\pb+1)}
+\mathring{\mathbf{B}}^+_{3;4} \cdot
\breve{\bm \phi}_{\pb+1;2(\pb+1)} 
+\mathring{\mathbf{B}}^-_{3;4} \cdot
\breve{\bm \phi}_{\pb+1;2(\pb+1)}\right) \nonumber \\
&& \hspace{2cm} -\tfrac{1}{6} (\pb+2)\left(
\breve{\mathbf{B}}_{\pb+2;2(\pb+1)}\cdot \mathring{\bm \phi}'_{2;4} +
\breve{\mathbf{B}}^+_{\pb+2;2(\pb+1)}\cdot \mathring{\bm \phi}_{2;4}+
\breve{\mathbf{B}}^-_{\pb+2;2(\pb+1)}\cdot \mathring{\bm
  \phi}_{2;4}\right) \bigg].
\label{FormulaS} 
\end{eqnarray} 
\end{subequations}

The initial data for the transport equations
\eqref{Order4b}-\eqref{Order4c} can be written as
\begin{equation}
\label{Data4}
\breve{\bm\phi}_{\pb+4,2(\pb+1)}(0)= \tilde{\bm\phi}_{\pb+4,2(\pb+1)}(0)+ \hat{\bm\phi}_{\pb+4,2(\pb+1)}(0),
\end{equation}
where $\tilde{\bm\phi}_{\pb+4,2(\pb+1)}$ depends solely on $m$ and
$\breve{w}_{\pb+1;2(\pb+1),k}$ (deviation from Schwarzschild) while
$\hat{\bm\phi}_{\pb+4,2(\pb+1)}$ contains contributions from the
 multipolar structure of the reference static data.  

\medskip
In order to discuss equations \eqref{Order4a}-\eqref{Order4c} we
introduce a further Ansatz. We write
\begin{subequations}
\begin{eqnarray}
&& \breve{\bm \upsilon}_{\pb+4,2(\pb+1)} = \tilde{\bm \upsilon}_{\pb+4,2(\pb+1)} + \hat{\bm \upsilon}_{\pb+4,2(\pb+1)}, \label{FurtherAnsatz1}\\ 
&& \breve{\bm \phi}_{\pb+4,2(\pb+1)} = \tilde{\bm \phi}_{\pb+4,2(\pb+1)} + \hat{\bm \phi}_{\pb+4,2(\pb+1)}, \label{FurtherAnsatz2}
\end{eqnarray}
\end{subequations}
where $\tilde{\bm \upsilon}_{\pb+4,2(\pb+1)}$ and $\tilde{\bm
\upsilon}_{\pb+4,2(\pb+1)}$ are the solutions to equations
\eqref{Order4a}-\eqref{Order4c} with
\begin{equation}
\mathbf{R}_{2(\pb+1);\pb+1}=0, \quad \mathbf{S}_{2(\pb+1);\pb+1}=0,
\quad \hat{\bm\phi}_{\pb+4;2(\pb+1)}(0)=0. \label{OldCase}
\end{equation}
The crucial observation is the following:

\begin{observation}
Equations \eqref{Order4a}-\eqref{Order4c} and the data \eqref{Data4}
satisfying \eqref{OldCase} are formally identical to the order $\pb+4$
transport equations analysed in \cite{Val10} for data which are 
Schwarzschildean up to order $\pb$. These equations have no solution
which is smooth at the critical sets.
\end{observation}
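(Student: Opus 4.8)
The plan is to prove the observation in two stages: first to verify the asserted formal identity between the reduced system \eqref{Order4a}--\eqref{Order4c} subject to \eqref{OldCase} and the order $\pb+4$, sector $\mathfrak{S}_{\pb+1}$ transport equations analysed in \cite{Val10} for conformally flat data that is Schwarzschildean up to order $\pb$; and then to read the second assertion off directly from that reference. For the first stage I would classify every coefficient occurring on the right-hand sides of \eqref{Order4a}--\eqref{Order4c}. On the static side, Lemmas \ref{Lemma:MultipoleStructureStatic} and \ref{Lemma:HigherOrderL} show that $\mathring{\bm\upsilon}^{(0)}$, $\mathring{\bm\upsilon}^{(1)}$, $\mathring{\bm\upsilon}^{(2)}$, $\mathring{\mathbf{L}}^{(1)}$, $\mathring{\mathbf{L}}^{(2)}$, $\mathring{\mathbf{L}}^{(3)}$, $\mathring{\bm\phi}^{(0)}$ and $\mathring{\bm\phi}^{(1)}$ contribute only to $\mathfrak{S}_0$, hence are Minkowskian or Schwarzschildean, while $\mathring{\bm\upsilon}^{(3)}$ and $\mathring{\bm\phi}^{(2)}$ split into an $\mathfrak{S}_0$ part, again Schwarzschildean, and an $\mathfrak{S}_2$ part. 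On the non-static side, Lemma \ref{Lemma:LowerOrderSolutions} gives $\breve{\bm\upsilon}^{(p)}=0$ for $p\leq\pb+1$ and $\breve{\bm\phi}^{(p)}=0$ for $p\leq\pb$, eliminating all non-static contributions below the relevant orders, and Lemma \ref{LemmaQuasiStaticData} together with the analysis already completed for the orders $\pb+1$, $\pb+2$ and $\pb+3$ shows that the surviving ones --- $\breve{\bm\phi}^{(\pb+1)}$, $\breve{\bm\phi}^{(\pb+2)}$, $\breve{\bm\phi}^{(\pb+3)}$, the deviation part of $\breve{\bm\phi}^{(\pb+4)}$, and $\breve{\mathbf{F}}^{(\pb+2)}$, $\breve{\mathbf{G}}^{(\pb+2)}$, $\breve{\mathbf{A}}^{(\pb+2)}$, $\breve{\mathbf{B}}^{(\pb+2)}$, $\breve{\mathbf{F}}^{(\pb+3)}$, $\breve{\mathbf{G}}^{(\pb+3)}$ --- are deviations from Schwarzschild, i.e. depend only on $m$ and on $\breve{w}_{\pb+1;2(\pb+1),k}$.

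Extracting the sector $\mathfrak{S}_{\pb+1}$, the $\mathfrak{S}_0$ parts of the static quantities multiply the deviation quantities trivially ($\TT{0}{0}{0}=1$) and reproduce exactly the terms of \eqref{Order4a}--\eqref{Order4c} that carry no $k$-label, while the $\mathfrak{S}_2$ parts of $\mathring{\bm\upsilon}^{(3)}$ and $\mathring{\bm\phi}^{(2)}$, linearised against $\breve{\bm\phi}^{(\pb+1)}$ and $\breve{\mathbf{F}}^{(\pb+2)}$, $\breve{\mathbf{G}}^{(\pb+2)}$ via \eqref{Clebsch-Gordan}, are precisely what \eqref{FormulaR}--\eqref{FormulaS} package into $\mathbf{R}_{2(\pb+1);\pb+1}$ and $\mathbf{S}_{2(\pb+1);\pb+1}$; and the part of the datum \eqref{Data4} carrying the genuine multipole structure of the reference static metric is, by construction, $\hat{\bm\phi}_{\pb+4;2(\pb+1)}(0)$. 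Thus imposing \eqref{OldCase} removes exactly these three --- and only these three --- departures from the Schwarzschildean-up-to-$\pb$ situation, leaving the principal part $\mathbf{E}+\mathbf{A}$, the matrices $\mathbf{A}_{\pb+4;2(\pb+1)}$, $\mathbf{B}_{\pb+4;2(\pb+1)}$, Minkowskian frame and connection coefficients, Schwarzschildean curvature and connection coefficients, and the deviation datum $\tilde{\bm\phi}_{\pb+4;2(\pb+1)}(0)$ built from $m$ and $\breve{w}_{\pb+1;2(\pb+1),k}$ alone; since Schwarzschild data is conformally flat, this is term by term the system of \cite{Val10}.

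For the second stage I would invoke \cite{Val10}: there the reduced ordinary differential equations obtained from \eqref{Order4b}--\eqref{Order4c} in this sector --- whose principal part $\mathbf{E}+\mathbf{A}=\mathrm{diag}(1+\tau,1,1,1,1-\tau)$ degenerates at $\tau=\pm1$ --- were integrated explicitly, with the outcome that whenever the data is genuinely static only up to order $\pb$, i.e. $\breve{w}_{\pb+1;2(\pb+1),k}\neq 0$ for some $k$, the solution $\tilde{\bm\phi}_{\pb+4;2(\pb+1)}$ of \eqref{FurtherAnsatz2} unavoidably develops logarithmic terms at $\tau=\pm1$ with non-vanishing coefficients proportional to $\breve{w}_{\pb+1;2(\pb+1),k}$. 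By Lemma \ref{Lemma:LogsLeadToLogs} these logarithms then feed every higher-order s-jet, so no solution of \eqref{Order4a}--\eqref{Order4c} subject to \eqref{OldCase} is smooth on $\overline{\mathcal{I}}$.

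I expect the main obstacle to be the bookkeeping of the first stage rather than any genuinely new analysis. The hard point --- that the obstruction coefficient does not vanish and is not killed by a cancellation among the many source terms on the right-hand sides of \eqref{Order4b}--\eqref{Order4c} --- is exactly the explicit, computer-algebra-assisted computation already performed in \cite{Val10}, and the first stage is precisely the statement that that computation carries over unchanged. What requires care is to be sure that no term of \eqref{Order4a}--\eqref{Order4c} failing to match \cite{Val10} is overlooked, i.e. that every such term has been absorbed into $\mathbf{R}_{2(\pb+1);\pb+1}$, $\mathbf{S}_{2(\pb+1);\pb+1}$, or $\hat{\bm\phi}_{\pb+4;2(\pb+1)}(0)$.
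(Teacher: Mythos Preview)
Your proposal is correct and follows essentially the same approach as the paper: the observation is established by the term-by-term bookkeeping you outline (which the paper carries out implicitly in deriving \eqref{Order4a}--\eqref{Order4c} and isolating $\mathbf{R}$, $\mathbf{S}$, $\hat{\bm\phi}(0)$), together with the direct appeal to \cite{Val10} for the non-smoothness. One small correction: your final sentence invoking Lemma~\ref{Lemma:LogsLeadToLogs} is both unnecessary and misstated --- the observation concerns only the order $\pb+4$ system itself, not propagation to higher orders, and that lemma asserts that \emph{polynomial} dependence propagates to ${\bm\upsilon}^{(p)}$, not that logarithms do; simply drop that sentence.
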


From the analysis in \cite{Val10} it follows that:
\begin{eqnarray}
&&\partial_\tau^{11} \tilde{\bm\phi}_{\pb+4;2(\pb+1)}=
m^3\breve{w}_{\pb+1;2(\pb+1)}(1-\tau)^{\pb-15}(1+\tau)^{\pb-15}
\nonumber \\
&& \hspace{4cm}\times\bigg( \tilde{\bm \varphi}_{\pb+4}(\tau)+ \tilde{\bm
    \varphi}_{\pb+4}^+(\tau) \ln(1+\tau) + \tilde{\bm
    \varphi}_{\pb+4}^-(\tau) \ln(1-\tau)\bigg),  \label{phitilde}
\end{eqnarray}
where the entries of $\tilde{\bm \varphi}$, $\tilde{\bm \varphi}^\pm$ are polynomials of degree $33$.  
\medskip
Substitution of the Ansatz
\eqref{FurtherAnsatz1}-\eqref{FurtherAnsatz2} into equations
\eqref{Order4b}-\eqref{Order4c} renders:
\begin{subequations}
\begin{eqnarray}
&& (\mathbf{E}+\mathbf{A})\cdot\hat{\bm\phi}'_{\pb+4;2(\pb+1)} +
\mathbf{A}_{\pb+4;2(\pb+1)} \cdot \hat{\bm \phi}_{\pb+4;2(\pb+1)} = \mathbf{R}_{2(\pb+1);\pb+1},\label{newOrder4b}\\
&& \mathbf{B}\cdot\hat{\bm\phi}'_{\pb+4;2(\pb+1)} +
\mathbf{B}_{\pb+4;2(\pb+1)} \cdot \hat{\bm \phi}_{\pb+4;2(\pb+1)} = \mathbf{S}_{2(\pb+1);\pb+1}, \label{newOrder4c}
\end{eqnarray}
\end{subequations}
with $\mathbf{R}_{2(\pb+1);\pb+1}$ and $\mathbf{S}_{2(\pb+1);\pb+1}$
as given in \eqref{FormulaR}-\eqref{FormulaS}. To conclude our
argument, one needs to analyse
the non-polynomial solutions to these equations. The detailed
structure of the entries in the non-homogeneous terms
$\mathbf{R}_{2(\pb+1);\pb+1}$ and $\mathbf{S}_{2(\pb+1);\pb+1}$ can be
obtained from the explicit calculations of \cite{FriKan00}. In
particular, one has that these terms are polynomials of order
$2\pb+10$ in $\tau$. More importantly, it can be explicitly verified
that
\begin{eqnarray*}
&& \partial_\tau^8 \,\mathbf{R}_{2(\pb+1);\pb+1} =
\tilde{\mathbf{R}}_{2(\pb+1);\pb+1} (\tau) (1-\tau)^{\pb-10} (1+\tau)^{\pb-10}, \\
&& \partial_\tau^8 \,\mathbf{S}_{2(\pb+1);\pb+1} = \tilde{\mathbf{S}}_{2(\pb+1);\pb+1} (\tau) (1-\tau)^{\pb-10} (1+\tau)^{\pb-10},
\end{eqnarray*}
where $\tilde{\mathbf{R}}_{2(\pb+1);\pb+1} (\tau)$ and $\tilde{\mathbf{S}}_{2(\pb+1);\pb+1} (\tau)$ have
entries which are polynomials of degree 18 in $\tau$. Thus, the
structure of the solutions to equations 
\eqref{newOrder4b}-\eqref{newOrder4c} is best analysed if one takes 8
$\tau$-derivatives of the equations. From here arguments similar to
those in \cite{Val10} allow to show that  
\begin{eqnarray}
&&\partial^8_\tau \hat{\phi}_{\pb+4;2(\pb+1)} =
\mathring{w}_{2;4}\;\breve{w}_{\pb+1;2(\pb+1)}(1-\tau)^{\pb-11}(1+\tau)^{\pb-11}
\nonumber \\
&& \hspace{4cm}\times\bigg( \hat{\bm \varphi}_{\pb+4}(\tau)+ \hat{\bm
    \varphi}_{\pb+4}^+(\tau) \ln(1+\tau) + \hat{\bm
    \varphi}_{\pb+4}^-(\tau) \ln(1-\tau)\bigg), \label{phihat}
\end{eqnarray}
with $\hat{\bm \varphi}_{\pb+4}(\tau)$, $\hat{\bm \varphi}^\pm_{\pb+4}(\tau)$  having entries which are
polynomials of degree 24 in $\tau$. 

\medskip
Integrating the expressions \eqref{phitilde} and \eqref{phihat} one
finds that the polynomials multiplying the $\ln(1\pm\tau)$ in
$\tilde{\phi}_{\pb+4;2(\pb+1)}$ are of degree $2\pb+14$ and have an
overall factor of $m^3 \breve{w}_{\pb+1;2(\pb+1)}$., whereas those 
in $\hat{\phi}_{\pb+4;2(\pb+1)}$ are of degree $2\pb+10$ and have an
overall factor $\mathring{w}_{2;4}\breve{w}_{\pb+1;2(\pb+1)}$. It
follows that the polyhomogeneous terms in
$\tilde{\phi}_{\pb+4;2(\pb+1)}$ and $\hat{\phi}_{\pb+4;2(\pb+1)}$
cannot cancel each other to produce a $\breve{\phi}_{\pb+4;2(\pb+1)}$
which is entirely polynomial.  Thus, one has the following:

\begin{proposition}
The solutions to the order $\pb+4$ transport equations
\eqref{Order4b}-\eqref{Order4c} with data given by equation \eqref{Data4} have no polynomial solutions unless
$\breve{w}_{\pb+1;2(\pb+1),k}=0$. If this condition is not satisfied
the solutions develop logarithmic singularities at $\tau=\pm 1$ and the
solutions are of class $C^\omega(-1,1)\cap C^{\pb+3}[-1,1]$. 
\end{proposition}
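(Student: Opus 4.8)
The plan is to combine the two polyhomogeneous representations \eqref{phitilde} and \eqref{phihat} obtained above for the two halves of the Ansatz \eqref{FurtherAnsatz1}--\eqref{FurtherAnsatz2} and to show that their $\ln(1\pm\tau)$-contributions cannot cancel one another. I would begin by observing that, since $\mathbf{E}+\mathbf{A}=\mathrm{diag}(1+\tau,1,1,1,1-\tau)$ is invertible on $|\tau|<1$ and --- by the results at orders $\pb+1,\pb+2,\pb+3$ together with Lemma \ref{Lemma:LogsLeadToLogs} applied to ${\bm\upsilon}^{(\pb+4)}$ --- all inhomogeneities in \eqref{Order4b}--\eqref{Order4c} are polynomial in $\tau$, the transport system at order $\pb+4$ is a linear ODE in $\tau$ with analytic coefficients and analytic right-hand side on $(-1,1)$. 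Hence every solution automatically belongs to $C^\omega(-1,1)$, and the only issue is its behaviour as $\tau\to\pm1$, where $\mathbf{E}+\mathbf{A}$ degenerates.

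Next I would extract the leading asymptotics of each half. For $\tilde{\bm\phi}_{\pb+4;2(\pb+1)}$, the Observation preceding \eqref{phitilde} identifies its governing system with the order-$\pb+4$ Schwarzschild-deviation system of \cite{Val10}; integrating \eqref{phitilde} eleven times then gives, as there, that the coefficients of $\ln(1\pm\tau)$ in $\tilde{\bm\phi}_{\pb+4;2(\pb+1)}$ are polynomials in $\tau$ of degree $2\pb+14$ with non-vanishing leading coefficient and overall factor $m^3\,\breve w_{\pb+1;2(\pb+1),k}$. For $\hat{\bm\phi}_{\pb+4;2(\pb+1)}$ one solves \eqref{newOrder4b}--\eqref{newOrder4c} with vanishing initial data; differentiating those equations eight times in $\tau$ casts the sources \eqref{FormulaR}--\eqref{FormulaS} into the form $(1-\tau)^{\pb-10}(1+\tau)^{\pb-10}$ times a degree-$18$ polynomial, after which the $5\times5$ principal part decouples into scalar equations whose integrating factors are read off from \eqref{matrixA}--\eqref{matrixB}, yielding \eqref{phihat}; undoing the eight integrations shows that the corresponding $\ln(1\pm\tau)$-coefficients are polynomials of degree $2\pb+10$ with overall factor $\mathring{w}_{2;4}\,\breve w_{\pb+1;2(\pb+1),k}$, where $\mathring{w}_{2;4}$ is a static quadrupole-type coefficient of the reference solution (independent of $\breve W$ and of $m$).

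With these in hand the conclusion is immediate. The coefficient of $\ln(1\mp\tau)$ at $\tau=\pm1$ in $\breve{\bm\phi}_{\pb+4;2(\pb+1)}=\tilde{\bm\phi}_{\pb+4;2(\pb+1)}+\hat{\bm\phi}_{\pb+4;2(\pb+1)}$ is the sum of a degree-$(2\pb+14)$ polynomial proportional to $m^3\,\breve w_{\pb+1;2(\pb+1),k}$ and a degree-$(2\pb+10)$ polynomial proportional to $\mathring{w}_{2;4}\,\breve w_{\pb+1;2(\pb+1),k}$; the two degrees being different, the $\tau^{2\pb+14}$-coefficient of the sum is proportional to $m^3\,\breve w_{\pb+1;2(\pb+1),k}$ alone, and hence vanishes only if $\breve w_{\pb+1;2(\pb+1),k}=0$ for all $k=0,\ldots,2(\pb+1)$ (a vanishing ADM mass would force flat data by the positive mass theorem, for which the assertion is trivial, so one may assume $m>0$). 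Conversely, if all the $\breve w_{\pb+1;2(\pb+1),k}$ vanish, then \eqref{Data:p+1} gives $\breve{\bm\phi}_{\pb+1;2(\pb+1)}=0$, whence $\mathbf{R}_{2(\pb+1);\pb+1}=\mathbf{S}_{2(\pb+1);\pb+1}=0$ by \eqref{FormulaR}--\eqref{FormulaS}, all inhomogeneities at order $\pb+4$ are polynomial, and Lemma \ref{Lemma:LogsLeadToLogs} together with Corollary \ref{StaticCorollary} yields a solution polynomial in $\tau$ and therefore smooth through $\tau=\pm1$. When some $\breve w_{\pb+1;2(\pb+1),k}\neq0$ the $\ln(1\pm\tau)$-terms genuinely survive in $\breve{\bm\phi}_{\pb+4;2(\pb+1)}$, so no polynomial solution can exist; the precise regularity class then follows, exactly as for the corresponding statement in \cite{Val10}, by inspecting the explicit solution near $\tau=\pm1$: the surviving logarithm, weighted by the appropriate power of $(1\mp\tau)$, gives $\breve{\bm\phi}_{\pb+4;2(\pb+1)}\in C^\omega(-1,1)\cap C^{\pb+3}[-1,1]$ but $\breve{\bm\phi}_{\pb+4;2(\pb+1)}\notin C^{\pb+4}[-1,1]$, which is the claim.

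The delicate step --- and the one that makes this proposition more than a corollary of \cite{Val10} --- is the derivation of \eqref{phihat} used in the second paragraph: one must express $\mathbf{R}_{2(\pb+1);\pb+1}$ and $\mathbf{S}_{2(\pb+1);\pb+1}$ from \eqref{FormulaR}--\eqref{FormulaS} in terms of the explicit static data of \cite{FriKan00} (for $\mathring{\mathbf{F}}_{3;4}$, $\mathring{\bm\phi}_{2;4}$, $\mathring{\mathbf{A}}_{3;4}$, $\mathring{\mathbf{A}}^\pm_{3;4}$ and their $\mathbf{B},\mathbf{G}$ analogues), the deviation quantities supplied by Lemma \ref{LemmaQuasiStaticData} and Lemma \ref{Lemma:HigherOrderL}, and the order-$\pb+1$ solution $\breve{\bm\phi}_{\pb+1;2(\pb+1)}$ of Proposition \ref{Lemma:pplus1}; and one must linearise every product $\TT{4}{j_1}{k_1}\times\TT{2(\pb+1)}{i_2}{j_2}$ via \eqref{Clebsch-Gordan}, keeping exact track of polynomial degrees and of the non-vanishing of leading coefficients so that the degree mismatch $2\pb+14\neq2\pb+10$ in the third paragraph is genuine. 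This is a long but purely mechanical computer-algebra task of precisely the kind already carried out in \cite{Val10,FriKan00}, and no new idea is required.
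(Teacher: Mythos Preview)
Your proposal is correct and follows essentially the same route as the paper: split $\breve{\bm\phi}_{\pb+4;2(\pb+1)}$ via the Ansatz \eqref{FurtherAnsatz1}--\eqref{FurtherAnsatz2}, import the polyhomogeneous form \eqref{phitilde} for $\tilde{\bm\phi}$ from \cite{Val10}, derive \eqref{phihat} for $\hat{\bm\phi}$ by differentiating eight times and using the explicit static data of \cite{FriKan00}, and then exploit the degree mismatch ($2\pb+14$ versus $2\pb+10$) of the polynomials multiplying $\ln(1\pm\tau)$ to rule out cancellation unless $\breve w_{\pb+1;2(\pb+1),k}=0$. One small correction: in your converse paragraph, Lemma~\ref{Lemma:LogsLeadToLogs} only controls ${\bm\upsilon}^{(p)}$, not ${\bm\phi}^{(p)}$; the clean way to phrase that step is to observe that $\breve w_{\pb+1;2(\pb+1),k}=0$ makes the data static up to order $\pb+1$, so the analysis of the orders $\pb+2,\pb+3,\pb+4$ in Sections~6.1--6.3 applies verbatim with $\pb$ replaced by $\pb+1$, giving polynomial solutions at order $\pb+4$ directly.
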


\section{The main result}
\label{Section:Conclusions}

The discussion in the previous sections is summarised in the following result.

\begin{proposition}
Given a time symmetric initial data set which in a neighbourhood
$\mathcal{B}_a$ of infinity is static up to order $p=\pb$ (in the
sense of Definition \ref{StaticUpToCertainOrder}), the solutions to the
transport equations for the orders $p=\pb+1,\; \pb+2,\;\pb+3$ are
polynomial in $\tau$, and hence, extend smoothly through the critical
sets $\mathcal{I}^\pm$. On the other hand, the solutions at order
$p=\pb+4$ contain logarithmic singularities which can be avoided if an
only if the initial data is, in fact, static up to order $p=\pb+1$.
\end{proposition}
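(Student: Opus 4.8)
The plan is to assemble the four order-by-order statements established in Section~\ref{Section:CoreAnalysis} into the asserted dichotomy. The starting point is Lemma~\ref{Lemma:LowerOrderSolutions}: for data static up to order $\pb$ the non-static parts satisfy $\breve{\bm\upsilon}^{(p)}=0$ for $0\le p\le\pb+1$ and $\breve{\bm\phi}^{(p)}=0$ for $0\le p\le\pb$, so that the only deviation from static evolution which can influence the orders $\pb+1,\dots,\pb+4$ is carried, at leading order, by $\breve{\bm\phi}^{(\pb+1)}$, whose source is governed entirely by the single constant tensor $\breve{w}_{\pb+1;2(\pb+1),k}$. Combined with Lemma~\ref{Lemma:MultipoleStructureStatic}, this confines all relevant new content to the sector $\mathfrak{S}_{\pb+1}$; every lower sector reproduces static (indeed Minkowskian or Schwarzschildean) data whose smoothness through $\mathcal{I}^\pm$ is already guaranteed by Corollary~\ref{StaticCorollary}.

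For $p=\pb+1,\pb+2,\pb+3$ I would invoke the three propositions of the respective subsections. Their common engine is the ``formal-identity'' observation at each order: substituting the split Ans\"atze ${\bm\upsilon}^{(p)}=\mathring{\bm\upsilon}^{(p)}+\breve{\bm\upsilon}^{(p)}$, ${\bm\phi}^{(p)}=\mathring{\bm\phi}^{(p)}+\breve{\bm\phi}^{(p)}$ and using Lemmas~\ref{LemmaStaticData}, \ref{LemmaQuasiStaticData}, \ref{Lemma:MultipoleStructureStatic} and \ref{Lemma:HigherOrderL}, the $\mathfrak{S}_{\pb+1}$-equations for $\breve{\bm\upsilon}^{(p)}$, $\breve{\bm\phi}^{(p)}$ become term-by-term identical to the corresponding transport equations for data which is Schwarzschildean up to order $\pb$, which are integrated explicitly in \cite{Val10} with solutions polynomial in $\tau$. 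Polynomiality of the remaining ($\mathfrak{S}_0$) part propagates by Lemma~\ref{Lemma:LogsLeadToLogs}, while the static remainder is polynomial by Corollary~\ref{StaticCorollary}. Since a polynomial in $\tau$ on $\mathcal{I}$ extends analytically through $\tau=\pm1$, this settles the three ``good'' orders.

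The substantive step is $p=\pb+4$, which I would handle via the refined Ansatz \eqref{FurtherAnsatz1}--\eqref{FurtherAnsatz2}, $\breve{\bm\phi}_{\pb+4,2(\pb+1)}=\tilde{\bm\phi}_{\pb+4,2(\pb+1)}+\hat{\bm\phi}_{\pb+4,2(\pb+1)}$. The tilde piece solves exactly the Schwarzschildean-up-to-$\pb$ system of \cite{Val10}, for which \eqref{phitilde} shows that $\partial^{11}_\tau\tilde{\bm\phi}_{\pb+4;2(\pb+1)}$ contains genuine $\ln(1\pm\tau)$ terms; integrating, the logarithmic coefficients of $\tilde{\bm\phi}_{\pb+4;2(\pb+1)}$ are polynomials of degree $2\pb+14$ with overall factor $m^3\breve{w}_{\pb+1;2(\pb+1)}$. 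The hat piece solves the reduced system \eqref{newOrder4b}--\eqref{newOrder4c}, whose source terms $\mathbf{R}_{2(\pb+1);\pb+1}$, $\mathbf{S}_{2(\pb+1);\pb+1}$ of \eqref{FormulaR}--\eqref{FormulaS} encode the genuinely non-Schwarzschildean multipole structure of the reference static data; taking eight $\tau$-derivatives and feeding in the explicit expansions of \cite{FriKan00}, the same reduced-equation analysis of \cite{Val10} produces, via \eqref{phihat}, logarithmic coefficients for $\hat{\bm\phi}_{\pb+4;2(\pb+1)}$ which are polynomials of degree $2\pb+10$ with overall factor $\mathring{w}_{2;4}\,\breve{w}_{\pb+1;2(\pb+1)}$. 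The main obstacle, and the crux of the whole argument, is the cancellation question: the two families of logarithmic coefficients carry different structural prefactors ($m^3$ versus $\mathring{w}_{2;4}$) and different polynomial degrees ($2\pb+14$ versus $2\pb+10$), hence they cannot cancel; so $\breve{\bm\phi}_{\pb+4;2(\pb+1)}$ is polynomial in $\tau$ --- equivalently, free of logarithmic singularities at $\mathcal{I}^\pm$ --- if and only if $\breve{w}_{\pb+1;2(\pb+1),k}=0$ for all $k$.

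To finish, I would translate the last condition: by the remark following \eqref{breveWLifted}, $\breve{w}_{\pb+1;2(\pb+1),k}=0$ for all $k$ is equivalent to the vanishing of the constant tensor $\breve{w}_{i_1\cdots i_{\pb+1}}$, i.e. to $\breve{W}$ starting at order $\pb+2$, which by Definition~\ref{StaticUpToCertainOrder} is precisely the statement that the data is static up to order $\pb+1$. Thus the logarithmic singularities at the critical sets produced at order $\pb+4$ are absent if and only if the data is static up to order $\pb+1$, and when present the solution is only of class $C^{\pb+3}$ there, as recorded in the preceding proposition. The only genuinely laborious ingredient hidden behind this plan is the degree-and-prefactor bookkeeping of the previous paragraph, for which Lemmas~\ref{LemmaQuasiStaticData}, \ref{Lemma:MultipoleStructureStatic}, \ref{Lemma:HigherOrderL} and the explicit static expansions of \cite{FriKan00} supply exactly the information required.
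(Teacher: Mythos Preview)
Your proposal is correct and follows essentially the same approach as the paper: the proposition is stated there as a summary of Section~\ref{Section:CoreAnalysis}, and your write-up assembles precisely those ingredients --- Lemma~\ref{Lemma:LowerOrderSolutions} for the baseline, the three formal-identity observations plus Corollary~\ref{StaticCorollary} and \cite{Val10} for orders $\pb+1,\pb+2,\pb+3$, and the tilde/hat decomposition \eqref{FurtherAnsatz1}--\eqref{FurtherAnsatz2} with the degree-and-prefactor non-cancellation argument at order $\pb+4$. One small wording issue: Lemma~\ref{Lemma:LogsLeadToLogs} governs polynomiality of the ${\bm\upsilon}$-jet given polynomiality of the lower-order ${\bm\upsilon}$- and ${\bm\phi}$-jets, not specifically the $\mathfrak{S}_0$ sector, but this does not affect the logic.
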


As in the case of the analysis given in \cite{Val10}, one can use the
previous result to implement an induction argument in which the
explicit computer algebra calculations of \cite{Val04a} play the role
of the base step to obtain the main result on this article.

\begin{theorem}
The solution to the regular finite initial value problem at spatial
infinity for time symmetric initial data of the form given by
Definition  \ref{StaticUpToCertainOrder}  is
smooth through $\mathcal{I}^\pm$ if and only if the restriction of the
data to $\mathcal{I}^0$ coincides with the restriction of static data
at every order ---that is, if their jets of order $p$ coincide for all
$p$. Furthermore, the analyticity of the setting implies that the data
is exactly static in a neighbourhood of infinity.
\end{theorem}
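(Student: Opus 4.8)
The plan is to promote the preceding Proposition to an induction on the order $\pb$ up to which the data is static, with the explicit low-order computer-algebra calculations underlying \cite{Val04a,Val10} furnishing the base step and the analyticity of the conformal construction furnishing the closing passage to ``exactly static''.

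First I would settle the equivalences that do not involve smoothness. By Lemmas \ref{LemmaStaticData} and \ref{LemmaQuasiStaticData}, if the data is static up to order $\pb$ for every $\pb$ then all the d-jets $J^{(p)}_{\mathcal{I}^0}[\mathbf{u}]$ coincide with those of static data; conversely, if these d-jets all agree with the static ones then, by the initial data formulae \eqref{Ricci:Data}--\eqref{Weyl:Data} together with the lift \eqref{breveWLifted}, the Taylor expansion of $\breve{W}$ at $i$ vanishes identically, and analyticity of $\breve{W}$ on $\mathcal{B}_a(i)$ upgrades this to $\breve{W}\equiv 0$ there, i.e. $\vartheta=\mathring{\vartheta}$ and the data is exactly static near infinity. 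Thus ``the d-jets coincide with the static ones at every order'', ``the data is static up to order $\pb$ for all $\pb$'' and ``the data is exactly static in a neighbourhood of infinity'' will all be equivalent; and by the Theorem of Section \ref{Section:StaticCylinder} (equivalently Corollary \ref{StaticCorollary}) any one of them implies that the solution to the regular finite initial value problem extends analytically, hence smoothly, through $\mathcal{I}^\pm$.

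It then remains to prove the converse, that smoothness through $\mathcal{I}^\pm$ forces the data to be static at every order, and here I would argue by induction on $\pb$, working in the cn-gauge adapted to the background static conformal metric $\mathring{h}_{ij}$. The base step would be the explicit low-order computations of \cite{Val04a,Val10}, which, rephrased in the present parametrisation, show that smoothness of the low-order s-jets forces the data to be static up to order $\pb=3$. For the inductive step one assumes the data is static up to order $\pb\geq 3$; then, by the preceding Proposition, the transport equations at orders $\pb+1,\pb+2,\pb+3$ are polynomial in $\tau$ and extend smoothly through the critical sets for free, while those at order $\pb+4$ acquire the polyhomogeneous terms \eqref{phitilde}--\eqref{phihat}, and by the non-cancellation argument of Section \ref{Section:CoreAnalysis} these terms survive unless $\breve{w}_{\pb+1;2(\pb+1),k}=0$ for $k=0,\dots,2(\pb+1)$, which by the equivalence recorded after \eqref{breveWLifted} is the same as $\breve{w}_{i_1\cdots i_{\pb+1}}=0$; smoothness therefore forces the data to be static up to order $\pb+1$. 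Since altering $\breve{W}$ leaves $\mathring{h}_{ij}$, and hence the cn-gauge and the flat-harmonicity of the leading polynomials of $\breve{W}$, untouched, the hypotheses of the inductive step are restored and one iterates over all $\pb\geq 3$, concluding that $\breve{W}$ vanishes at $i$ to every order; the equivalences of the previous paragraph then finish the proof.

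The main obstacle I expect to be entirely concentrated in the inductive step, and in fact in the order $\pb+4$ analysis of Section \ref{Section:CoreAnalysis}: the crux is the verification that the polyhomogeneous contribution produced by a pure deviation from Schwarzschild (with overall factor $m^3\breve{w}_{\pb+1;2(\pb+1)}$ and polynomials of degree $2\pb+14$) and the one produced by the multipolar structure of the reference static data (with overall factor $\mathring{w}_{2;4}\breve{w}_{\pb+1;2(\pb+1)}$ and polynomials of degree $2\pb+10$) enter at genuinely different orders and therefore cannot cancel. By comparison, the base step reduces to the computer-algebra calculations already carried out in \cite{Val04a,Val10}, and the closing ``static to all orders $\Rightarrow$ exactly static'' step is a routine analyticity argument.
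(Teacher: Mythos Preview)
Your proposal is correct and follows essentially the same approach as the paper: the induction on $\pb$ with the preceding Proposition as the inductive step, the computer-algebra calculations of \cite{Val04a,Val10} as the base step, and the analyticity of $\breve{W}$ to pass from ``static at every order'' to ``exactly static near infinity''. You have simply made explicit the equivalences and the ``if'' direction via Corollary~\ref{StaticCorollary} that the paper's terse statement leaves implicit.
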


As already mentioned in the introduction, in order to complete the
analysis of the solutions of the transport equations at the critical
sets for general analytic time symmetric initial data sets, one needs
(conformally invariant) conditions which reduce the initial data set
to data with a static massless part in the sense discussed in Section
\ref{Section:Data}. This problem will be analysed elsewhere.

\section*{Acknowledgements}
This research was funded by an EPSRC Advanced Research Fellowship. I
thank H. Friedrich, A. Ace{\~n}a, T. B\"ackdahl and C. L\"ubbe for
useful discussions.  


\end{document}